\def\simPath{simulation}
\def\cvPath{cvExperiments}

\documentclass[11pt,a4paper,oneside]{article}

\usepackage[english]{babel}
\usepackage[T1]{fontenc}
\usepackage[utf8]{inputenc}
\usepackage{titletoc}
\usepackage{microtype}

\usepackage{graphicx}
\usepackage{grffile}

\usepackage[position = bottom]{subcaption} 
\PassOptionsToPackage{hyphens}{url}\usepackage[hidelinks, hypertexnames=false, hyperfootnotes=false]{hyperref} 
\usepackage[round]{natbib}
\usepackage{setspace}
\usepackage{geometry}
\usepackage{tabularx}
\usepackage{float}
\usepackage{floatflt}
\usepackage{threeparttable}
\usepackage{fancyhdr}
\usepackage{booktabs}
\usepackage{lscape}
\usepackage{appendix}
\usepackage{ragged2e}
\usepackage{titlesec}
\usepackage{titletoc}
\usepackage{enumitem}
\usepackage{threeparttablex}
\usepackage{caption, subcaption}
\usepackage{adjustbox} 
\usepackage{soul}
\usepackage{color}

\usepackage{amsthm}
\usepackage{amsmath}
\usepackage{amstext} 
\usepackage{amssymb} 
\usepackage{epsfig} 
\usepackage{longtable} 
\usepackage{bm}
\usepackage{bbm}
\numberwithin{equation}{section}
\usepackage{cleveref} 
\usepackage{aligned-overset} 

\usepackage{algorithm}
\usepackage[noend]{algpseudocode}

\usepackage{walsCommands}

\usepackage{ifthen}


\geometry{a4paper, top=25mm, left=30mm, right=30mm, bottom=25mm,
headsep=10mm, footskip=12mm}

\onehalfspacing

\captionsetup[table]{aboveskip=1pt}

\fancypagestyle{main}{%
	\onehalfspacing
	\lhead{}
	\rhead{\leftmark} 
}

\fancypagestyle{appendix}{%
	\onehalfspacing 
	\fancyhead{}%
	\lhead{}
	\rhead{APPENDIX}
}

\fancypagestyle{supplement}{%
	\onehalfspacing 
	\fancyhead{}%
	\lhead{}
	\rhead{SUPPLEMENTARY MATERIALS}
}

\fancypagestyle{literature}{%
	\onehalfspacing
	\fancyhead{}%
	\lhead{}
	\rhead{REFERENCES}
}


\providecommand{\keywords}[2][Keywords]
{
  \noindent\textbf{\textit{#1:}} #2
}


\addto\extrasenglish{%

}

\usepackage{etoolbox}
\makeatletter
\patchcmd{\ttl@select}{\strut}{}{}{}
\patchcmd{\ttlh@hang}{\strut}{}{}{}
\patchcmd{\ttlh@hang}{\strut}{}{}{}
\makeatother


\newcommand{\refSup}{supplementary materials}

\theoremstyle{plain}
\newtheorem{theorem}{Theorem}[section] 
\crefname{theorem}{Theorem}{Theorems}

\newtheorem{proposition}[theorem]{Proposition} 
\crefname{proposition}{Proposition}{Propositions}

\theoremstyle{plain}
\crefname{lemma}{Lemma}{Lemmas}

\theoremstyle{plain}
\newtheorem{corollary}[theorem]{Corollary} 
\crefname{corollary}{Corollary}{Corollaries}

\theoremstyle{definition}

\crefname{ass}{Assumption}{Assumptions}

\title{Weighted-Average Least Squares for Negative Binomial Regression}
\author{Kevin Huynh\textsuperscript{a,}\thanks{Corresponding author: Kevin Huynh, Faculty of Business and Economics, University of Basel, Peter Merian-Weg 6, 4052 Basel, Switzerland, E-Mail: \href{mailto:kevin.huynh@unibas.ch}{kevin.huynh@unibas.ch}}
}
\date{%
	\vspace{-0.5ex}
	{\footnotesize\textsuperscript{a}\textit{Faculty of Business and Economics, University of 	Basel, Peter Merian-Weg 6, 4052 Basel, Switzerland}}\\[2em]
}


\begin{document}

\maketitle
\thispagestyle{empty} 

\begin{abstract}
	\noindent Model averaging methods have become an increasingly popular tool for improving predictions and dealing with model uncertainty, especially in Bayesian settings. Recently, frequentist model averaging methods such as information theoretic and least squares model averaging have emerged. This work focuses on the issue of covariate uncertainty where managing the computational resources is key: The model space grows exponentially with the number of covariates such that averaged models must often be approximated. Weighted-average least squares (WALS), first introduced for (generalized) linear models in the econometric literature, combines Bayesian and frequentist aspects and additionally employs a semiorthogonal transformation of the regressors to reduce the computational burden. This paper extends WALS for generalized linear models to the negative binomial (NB) regression model for overdispersed count data. A simulation experiment and an empirical application using data on doctor visits were conducted to compare the predictive power of WALS for NB regression to traditional estimators. The results show that WALS for NB improves on the maximum likelihood estimator in sparse situations and  is competitive with lasso while being computationally more efficient.
\bigskip

\begin{small}
\keywords{WALS, model averaging, negative binomial regression, count data}

\keywords[JEL Classification]{C51, C25, C13, C11}
\end{small}

\end{abstract}


\newpage

\startlist[main]{lof}
\startlist[main]{lot}

\pagestyle{main}

\section{Introduction}\label{sec:intro}

In many empirical applications, model uncertainty emerges for a variety of reasons. For example, competing theories exist that can describe the data, or different assumptions are imposed on the data-generating process (DGP).
The two most common approaches for dealing with model uncertainty are model selection and model averaging. In model selection, the user selects the best performing model according to an estimation criterion and then carries out inference based on the chosen model. This approach is problematic because the uncertainty in the initial model selection step is often ignored, which could lead to overly confident decisions and predictions \citep{steel2020ma}. In contrast, model averaging accounts for model uncertainty by averaging over a set of candidate models, typically aiming at improving predictive accuracy \citep{ando2014ma}.

As datasets become larger, researchers commonly find themselves in high-dimensional settings with many potential covariates to model their response variable. Choosing appropriate regressors is particularly difficult in these situations because the number of candidate models grows exponentially with the number of regressors, i.e.\ for $k$ regressors, $2^k$ different subsets exist that may be considered as candidates. For the same reason, managing the model space and computational resources is key to applying model averaging procedures in the presence of covariate uncertainty. Bayesian model averaging (BMA) provides two general approaches: 1.\ Markov chain Monte Carlo methods (MCMC) and 2.\ non-MCMC approximation methods, see e.g.\ \citet[p.~384~ff.]{hoeting1999bma} for an early overview. A common solution adopted in frequentist model averaging (FMA), e.g.\ in \citet{zhang2016optmaglm}, is to prescreen for a viable set of models. In contrast, weighted-average least squares (WALS), first proposed by \citet{magnus2010growth} for the linear regression model and then extended by \citet{deluca2018glm} to generalized linear models (GLMs), omits a preselection of models by combining Bayesian and frequentist aspects and, especially, leveraging a semiorthogonal transformation of the regressors allowing for fast computation times. Earlier work by \citet{heumann2010logit} generalizes WALS to logistic regression using a similar transformation as in \citet{deluca2018glm}.

Most of the literature, particularly in economics, has focused on model averaging for linear regression models. However, many interesting applications require nonlinear models, e.g.\ classification, count data modeling and survival analysis. The negative binomial (NB) distribution, especially of type 2 (NB2), is a popular distribution featuring overdispersion for count data regression, see e.g.\ \citet{cameron1986doctor} and \citet{cameron1988health} for applications in health economics. Notably, the NB2 regression model is not a GLM when its dispersion parameter is estimated from the data. \citet{deb2002rand} extend it to hurdle and finite mixture models and \citet{greene2008nbp} develops a more general form, called NBP, which encompasses the NB of type 1 and 2.

Despite its wide application, very limited literature exists on model averaging methods for the NB regression model that jointly estimate the regression coefficients and the dispersion parameter.
One of the few open-source packages for model averaging is \textsf{BMA} by \citet{raftery2020bma}, which currently supports BMA for GLMs and survival models. Hence, it is only able to fit an NB2 with pre-specified dispersion parameter, which is a GLM.

In this paper, I extend WALS GLM by \citet{deluca2018glm} to the NB2 regression model (WALS NB) to account for covariate uncertainty in the specification of the linear predictor.
WALS is particularly well suited as it elegantly circumvents a preselection of models by transforming the regressors, allowing me to focus on the averaging procedure. Analogous to \citet{deluca2018glm}, I first derive the one-step maximum likelihood estimator based on a Taylor expansion of the NB2 log-likelihood function and then employ a transformation akin to the semiorthogonal transformation used in WALS GLM.

At the time of writing, the asymptotic distribution of the WALS estimator for GLMs is still an open research topic and its variance estimator has been a subject of debate. Recent work by \citet{deluca2022sampling} proposes a new estimator for the variance of WALS in the linear regression model instead of the Bayesian posterior variance that has traditionally been used. \citet{deluca2023interval} further analyze the confidence and prediction intervals of WALS in the linear model and propose a new simulation-based method that corrects for bias in the WALS estimator. In contrast, this work focuses on the predictive power of model averaging and leaves the challenging issue of inference (after model averaging) for future research. Model averaging estimators typically improve the predictive accuracy compared to using a single model. For example, in an early application of BMA, \citet{madigan1994occam} find that BMA achieves better logarithmic predictive score than any single model. Moreover, \citet{min1993bma} show that the expected squared error loss of predictive mean forecasts is always minimized by BMA, if the data-generating model is included in the model space considered for averaging. In this paper, I compare the proposed WALS NB method to traditional maximum likelihood (ML) estimation of the NB2 regression model in a simulation experiment using the classical precision measure, root mean squared error (RMSE), and scoring rules \citep{gneiting2007scores} as measures for the distributional fit. Finally, the method is also compared to the lasso estimator \citep{wang2016pencount} in an empirical application on modeling doctor visits. Both the simulation experiment and the empirical application show that WALS NB improves on the ML estimator in sparse situations with few observations and many covariates. In the latter, its fit is competitive with lasso while being computationally more efficient.

\section{Setup}\label{sec:setup}

The setup and derivation of WALS NB mostly follow the steps in \citet{deluca2018glm} for WALS GLM. Assume that data $y_i, i = 1, 2, \dotsc, n$, are conditionally independent given $k$-dimensional regressors $x_{i}$ and follow an NB2 distribution with mean $\mu_i$ and dispersion parameter $\rho$, i.e.\ $y_{i} | x_{i} \sim \mathrm{NB2}(\mu_i, \rho)$. As in the standard GLM setup, I model the mean using an inverse link function $h$ on $\mu_i := \mu(\eta(\beta, x_i)) = h(\eta(\beta, x_i))$ with linear predictor $\eta_i := \eta(\beta, x_i) = x_{i}^{\top} \beta$ and regression coefficients $\beta$. The NB2 distribution has the probability mass function
\begin{equation}\label{eq:nb2density}
	f(y_i | \mu_i, \rho) = \frac{\Gamma(y_i + \rho)}{\Gamma(\rho) \Gamma(y_i + 1)} \frac{\mu_{i}^{y_i} \rho^{\rho}}{(\mu_{i} + \rho)^{y_i + \rho}}, \quad y_{i} \in \mathbb{N}_{0}, \ \rho > 0,
\end{equation}
where $\Gamma$ is the gamma function, and its conditional variance is given by
\begin{equation}\label{eq:nb2meanvar}
	\sigma_{i}^2 := \var(y_i | \mu_i, \rho) = \mu_i + \frac{\mu_{i}^2}{\rho}.
\end{equation}
A distribution from the exponential family has the following density
\begin{equation*}
	f(y_i | \theta_i) = \exp( y_i \theta_i - b(\theta_i) + l(y_i)),
\end{equation*}
where $b$ and $l$ are known functions. Typical formulations as in e.g.\ \citet[p.~301]{fahrmeir2013regression} include a dispersion parameter which, without loss of generality, I set equal to one. Moreover, the following two identities hold for the mean and variance
\begin{equation*}
	\mu_i = \frac{\partial b(\theta_i)}{\partial \theta_i}, \qquad \sigma_{i}^{2} = \frac{\partial^2 b(\theta_i)}{\partial \theta_{i}^2}.
\end{equation*}

For WALS estimation, I rewrite the NB2 probability mass function into a similar form as the exponential family with a $\log$-link on $\rho$ by using
\begin{align*}
	\theta_i &:= \theta\left(\mu_i, \rho(\alpha)\right) = \theta\left( h \left(\eta(\beta, x_i)\right), \rho(\alpha) \right) = \log \left( \frac{\mu_i}{\mu_i + \rho} \right) = \log(\mu_i) - \log(\mu_i + \rho), \\
	\rho(\alpha) &= \exp(\alpha).
\end{align*}
Thus, the probability mass function becomes
\begin{equation*}
	f(y_i | \theta_i, \rho) = \exp\left(y_i \theta_i + \rho \log(1 - \exp(\theta_i)) + \log\Gamma(y_i + \rho) - \log\Gamma(\rho) - \log\Gamma(y_i + 1) \right),
\end{equation*}
where I dropped the dependence of $\theta$ on $\mu$ and $\rho$, and of $\rho$ on $\alpha$ for notational brevity. From the last line, we can identify the following building blocks of the exponential family:
\begin{align*}
	b(\theta_i, \rho) &= -\rho \log(1 - \exp(\theta_i)), \\
	l(y_i, \rho) &= \log\Gamma(y_i + \rho) - \log\Gamma(\rho) - \log\Gamma(y_i + 1).
\end{align*}
Thus, for fixed $\rho$, the NB2 is a member of the exponential family and leads to a GLM. However, $\rho$ is estimated from the data in the WALS procedure and, hence, the underlying model is not a GLM anymore.
Furthermore, I separate $l(y_i, \rho)$ into two terms using 
\begin{equation*}
	a(y_i, \rho) := \log\Gamma(y_i + \rho) - \log\Gamma(\rho), \qquad d(y_i) := - \log\Gamma(y_i + 1),
\end{equation*}
so the NB2 probability mass function can be rewritten as
\begin{align*}
	f(y_i | \theta_i, \rho) = \exp(y_i \theta_i - b(\theta_i, \rho) + l(y_i, \rho)) = \exp(y_i \theta_i - b(\theta_i, \rho) + a(y_i, \rho) + d(y_i)),
\end{align*}
which will simplify the derivation of the WALS estimator later.

I allow for uncertainty in the specification of the linear predictor while assuming that the (conditional) probability mass function of $y_i$ and the inverse link $h$ are correctly specified. First, collect over all observations $n$ the response $y_i$ to an $n$-vector $y$ and the regressors $x_i$ to an $n \times k$ matrix $X$ that contains $x_{i}^{\top}$ as $i$th row. Then, partition the regressors into focus and auxiliary regressors $X = (X_{1}, X_{2})$, where $X_{p}$ is an $n \times k_{p}$ matrix with $i$th row equal to $x_{ip}^{\top}, p = 1, 2$, and $k_1 + k_2 = k$. Further, let $\beta = (\beta_{1}^{\top}, \beta_{2}^{\top})^{\top}$ so the linear predictor can be expressed as $\eta_i = x_{i1}^{\top} \beta_1 + x_{i2}^{\top} \beta_2$. Stacking the linear predictors over all $n$ observations then gives the vector $\eta(\beta) = X_1 \beta_1 + X_2 \beta_2$.

Consider averaging over models containing all focus regressors $X_1$ but arbitrary subsets of the $k_2$ auxiliary regressors in $X_2$, which leads to a total of $2^{k_2}$ possible models. The $j$th model is represented by the restriction $R_{j}^{\top} \beta_{2} = 0$, where $R_{j}$ denotes a $k_2 \times r_j$ matrix of rank $0 \leq r_j \leq k_2$, such that $R_{j}^{\top} = (I_{r_j}, 0)$ or column-permutations thereof. Thus, the matrix $R_j$ specifies which auxiliary regressors are excluded from the $j$th model and its rank $r_j$ denotes the number of excluded auxiliary regressors. 
Note that $0$ represents a scalar, vector or matrix filled with zeroes of matching dimension unless otherwise stated. For example, $0$ in $R_{j}^{\top} = (I_{r_j}, 0)$ is an $r_{j} \times (k_2 - r_{j})$ matrix.

\section{ML estimation}\label{sec:estimation}

I start with the classical maximum likelihood estimator of the NB2 regression model. Under conditional independence, the (conditional) $\log$-likelihood is
\begin{equation}\label{eq:nb2loglik}
	\begin{aligned}
		\ell(\beta, \alpha) &= \sumin \log f\bigl(y_i | \theta(\mu_i(\beta), \rho(\alpha)), \rho(\alpha)\bigr) = \sumin \left[ y_i \theta_i - b(\theta_i, \rho) + a(y_i, \rho) + d(y_i) \right] \\
		&= \text{constant} + \sumin \left[y_i \theta_i - b_i + a_i \right],
	\end{aligned}
\end{equation}
where $b_i := b(\theta_i, \rho)$ and $a_i := a(y_i, \rho)$. In the following, I will generally omit the dependence of $\theta$, $\mu$, $\eta$, $\rho$, $b$ and $a$ on their parameters to reduce clutter. Moreover, only the $\log$-link is considered for the mean (and dispersion) parameter, i.e.\ $h(\eta_{i}) = \exp(\eta_{i})$, but the general notation using $h$ is retained in many places below to facilitate comparisons with WALS GLM by \citet{deluca2018glm} and to allow easier extension of the method to other link functions in the future.

The score functions follow:
\begin{align*}
	s_p(\beta, \alpha) := \frac{\partial \ell(\beta, \alpha)}{\partial \beta_{p}} &= \sumin \left[ y_i \frac{\partial \theta_i}{\partial \eta_i} - \frac{\partial b_i}{\partial \theta_i} \frac{\partial \theta_i}{\partial \eta_i} \right] x_{ip} = \sumin v_{i} [y_i - \mu_{i}] x_{ip}, \quad p = 1, 2, \\
	s_{\alpha}(\beta, \alpha) := \frac{\partial \ell(\beta, \alpha)}{\partial \alpha} &= \sumin \left[ y_i \frac{\partial \theta_i}{\partial \rho} - \frac{\partial b_i}{\partial \theta_i} \frac{\partial \theta_i}{\partial \rho} - \frac{\partial b_i}{\partial \rho} + \frac{\partial a_i}{\partial \rho} \right] \frac{\partial \rho}{\partial \alpha} = \sumin \kappa_i \frac{\partial \rho}{\partial \alpha},
\end{align*}
with
\begin{align*}
	v_i &:= v(\eta_i, \rho) := \frac{\partial \theta_i}{\partial \eta_i}, \\
	\kappa_i &:= \kappa(\eta_{i}, \rho, y_i) := y_i \frac{\partial \theta_i}{\partial \rho} - \frac{\partial b_i}{\partial \theta_i} \frac{\partial \theta_i}{\partial \rho} - \frac{\partial b_i}{\partial \rho} + \frac{\partial a_i}{\partial \rho} \nonumber \\
	&= -\frac{y_i - \mu_i}{\mu_i + \rho} + \log(\rho) - \log(\mu_i + \rho) + \dig(y_i + \rho) - \dig(\rho), 
\end{align*}
where $\dig(x) := \partial \log \Gamma(x) / \partial x$ is the digamma function. Furthermore, let $H(\beta, \alpha)$ be the negative Hessian of the log-likelihood, which is composed of several submatrices that are listed below. The first components are
\begin{equation*}
	\begin{aligned}
		H_{pq}(\beta, \alpha) &:= - \frac{\partial^2 \ell(\beta, \alpha)}{\partial \beta_p \partial \beta_{q}^{\top}} \\
		&= - \sumin \left[ y_{i} \frac{\partial^2 \theta_i}{\partial \eta_{i}^2} - \left(\frac{\partial^2 b_i}{\partial \theta_{i}^2} \left(\frac{\partial \theta_i}{\partial \eta_i} \right)^2 + \frac{\partial b_i}{\partial \theta_i} \frac{\partial^2 \theta_i}{\partial \eta_{i}^2} \right) \right] x_{ip}x_{iq}^{\top} \\
		&= \sumin \left[v_{i}^2 \sigma_{i}^2 - \omega_{i}(y_i - \mu_i) \right] x_{ip} x_{iq}^{\top} =  \sumin \psi_i x_{ip} x_{iq}^{\top}, \quad p, q = 1, 2,
	\end{aligned}
\end{equation*}
where
\begin{equation*}
	\omega_{i} := \omega(\eta_i, \rho) := \frac{\partial^2 \theta_i}{\partial \eta_{i}^2}, \qquad
	\psi_i := \psi(\eta_i, \rho, y_i) := v_i^2 \sigma_i^{2} - \omega_i(y_i - \mu_{i}).
\end{equation*}
The next submatrices are defined as
\begin{align*}\label{eq:Hpalpha}
	H_{p\alpha}(\beta, \alpha) &:= -\frac{\partial^2 \ell(\beta, \alpha)}{\partial \beta_{p} \partial \alpha} \nonumber \\
	&= \sumin \left[- y_i \frac{\partial^2 \theta_i}{\partial \eta_i \partial \rho} + \left( \frac{\partial^2 b_i}{\partial \theta_{i}^2} \frac{\partial \theta_i}{\partial \rho} + \frac{\partial^2 b_i}{\partial \theta_i \partial \rho} \right) \frac{\partial \theta_i}{\partial \eta_i} + \frac{\partial b_i}{\partial \theta_i} \frac{\partial^2 \theta_i}{\partial \eta_i \partial \rho} \right] x_{ip} \frac{\partial \rho}{\partial \alpha} \nonumber \\
	&= H_{\alpha p}(\beta, \alpha)^{\top}, \quad p = 1, 2.
\end{align*}
They further simplify thanks to
\begin{equation}\label{eq:a1}
\frac{\partial^2 b_i}{\partial \theta_{i}^2} \frac{\partial \theta_i}{\partial \rho} + \frac{\partial^2 b_i}{\partial \theta_i \partial \rho} = -\left(   \frac{\mu_{i}^2}{\rho} + \mu_i \right) \frac{1}{\mu_i + \rho} + \frac{\mu_i}{\rho} = 0,
\end{equation}
so $H_{p\alpha}(\beta, \alpha)$ may be rewritten as
\begin{equation*}\label{eq:Hpalpha_simple}
	H_{p \alpha}(\beta, \alpha) = - \sumin c_i [y_i - \mu_i] x_{ip} \frac{\partial \rho}{\partial \alpha},  \quad p = 1, 2,
\end{equation*}
using $c_{i} := c(\eta_{i}, \rho) = \partial^2 \theta_i / \partial \eta_{i} \partial \rho$. Finally, the last part is
\begin{align*}\label{eq:Halpha2}
	H_{\alpha \alpha}(\beta, \alpha) &:= - \frac{\partial^2 \ell(\beta, \alpha)}{\partial \alpha^2} = -\sumin \left[ \frac{\partial \kappa_i}{\partial \rho} \left(\frac{\partial \rho}{\partial \alpha} \right)^2 + \kappa_i \frac{\partial^2 \rho}{\partial \alpha^2} \right] = -\sumin \left[ k_i g^2 + \kappa_i \varrho \right],
\end{align*}
with
\begin{equation*}\label{eq:ki}
	\begin{aligned}
		k_i &:= k(\eta_i, \rho, y_i) := \frac{\partial \kappa_i}{\partial \rho} \\
		&= y_i \frac{\partial^2 \theta_i}{\partial \rho^2} - \left(\frac{\partial^2 b_i}{\partial \theta_{i}^2}  \frac{\partial \theta_i}{\partial \rho}	 + \frac{\partial^2 b_i}{\partial \theta_i \partial \rho} \right) \frac{\partial \theta_i}{\partial \rho} - \frac{\partial b_i}{\partial \theta_i} \frac{\partial^2 \theta_i}{\partial \rho^2} - \frac{\partial^2 b_i}{\partial \theta_i \partial \rho} \frac{\partial \theta_i}{\partial \rho} - \frac{\partial^2 b_i}{\partial \rho^2} + \frac{\partial^2 a_i}{\partial \rho^2}   \\
		\overset{\eqref{eq:a1}}&{=} \frac{y_i - \mu_i}{(\mu_i + \rho)^2} + \frac{\mu_i}{\rho (\mu_i + \rho)} + \trig(y_i + \rho) - \trig(\rho),
	\end{aligned}
\end{equation*}
where $\trig(x) := \partial^2 \log \Gamma(x) / \partial x^2$ is the trigamma function, and
\begin{equation*}
	g := g(\alpha) := \frac{\partial \rho}{\partial \alpha}, \qquad \varrho := \varrho(\alpha) := \frac{\partial^2 \rho}{\partial \alpha^2}.
\end{equation*}

The ML estimator for the $j$th model solves the following constrained optimization problem
\begin{equation}\label{eq:optNB}
\begin{aligned}
	&\max_{\beta, \alpha} & \quad & \ell(\beta, \alpha) \\
	& \text{subject to} &	 & R_{j}^{\top}\beta_2 = 0.
\end{aligned}
\end{equation}
As a first step towards the solution, I construct the Lagrangian
\begin{equation*}
	L(\beta,\alpha, \nu_{j}) = \ell(\beta, \alpha) - \nu_{j}^{\top}(R_{j}^{\top}\beta_{2}),
\end{equation*}
where $\nu_{j}$ denotes the $r_{j}$-vector of Lagrange multipliers. Setting the first derivatives equal to zero yields the system of nonlinear equations
\begin{align}\label{eq:systemBeta}
	& s_{1}(\beta, \alpha) = 0,&  & s_{2}(\beta, \alpha) - R_{j}\nu_{j} = 0,& & s_{\alpha}(\beta, \alpha) = 0, & & R_{j}^{\top} \beta_2 = 0.
\end{align}
Following \citet[p.~3~f.]{deluca2018glm}, I consider a one-step ML estimator that approximates the solution of the system. In contrast to iterative procedures such as Newton-Raphson, which are typically used for solving nonlinear equation systems, the one-step ML estimator admits closed-form expressions.

\newpage
\subsection{One-step ML estimator}\label{sec:onestepNB}

In the remainder of the paper, I assume that all necessary conditions for the algebraic manipulations, e.g.\ rank conditions on the regressor matrix $X$, are satisfied. Detailed proofs are found in \Cref{sec:proofs}.

I expand the estimating equations of \eqref{eq:systemBeta} (except for $R_{j}^{\top} \beta_2 = 0$) around starting values $\bar{\beta} = (\bar{\beta}_{1}^{\top}, \bar{\beta}_{2}^{\top})^{\top}$ and $\bar{\alpha}$. Further, $\bar{\rho} = \rho(\bar{\alpha})$, since the mapping from $\alpha$ to $\rho$ is strictly monotonic (log-link). Using a first-order Taylor expansion and ignoring the remainder term yields
\begin{equation}\label{eq:taylor}
	\begin{aligned}
		0 &\approx \bs_1 - \bH_{11} (\beta_1 - \bbeta_1) - \bH_{12} (\beta_2 - \bbeta_2) - \bH_{1\alpha} (\alpha - \balpha),\\
		0 &\approx \bs_2 - \bH_{21} (\beta_1 - \bbeta_1) - \bH_{22} (\beta_2 - \bbeta_2) - \bH_{2\alpha} (\alpha - \balpha)  - R_{j}\nu_{j},\\
		0 &\approx \bs_{\alpha} - \bH_{\alpha1} (\beta_1 - \bbeta_1) - \bH_{\alpha2} (\beta_2 - \bbeta_2) - \bH_{\alpha\alpha} (\alpha - \balpha),\\
		0 &= R_{j}^{\top} \beta_{2},
	\end{aligned}
\end{equation}
where $\bs_{p} := s_{p}(\bbeta, \balpha), \bH_{pq} := H_{pq}(\bbeta, \balpha)$ and $\bH_{p\alpha} = H_{p \alpha}(\bbeta,  \balpha)$, $p = 1, 2$. In the following, all quantities evaluated at $(\bbeta, \balpha)$ are denoted by a bar and the approximations in \eqref{eq:taylor} are treated as equalities for a simpler notation.

First, consider the unrestricted model with $R_u = 0$. Define the data transformations
\begin{equation}\label{eq:trafos}
	\begin{aligned}
		\bar{y} &:= \bX_1 \bbeta_1 + \bX_2 \bbeta_2 + \bar{u},
		&\bX_{p} &:= \bPsi^{1/2} X_{p}, \quad  p = 1,2, \\
		\bar{y}_{0} &:= \bar{y} - \bar{g} \bPsi^{-1/2} \bC (y - \bmu) \balpha,
		&\bar{u} &:= \bPsi^{-1/2} \bV (y - \bmu),
	\end{aligned}
\end{equation}
which involve the $n \times n$ matrices
\begin{equation*}\label{eq:diagmats}
	\begin{aligned}
		\bV &:= V(\bar{\eta}, \bar{\rho}) := \diag\left(v(\bar{\eta}_1, \bar{\rho}), v(\bar{\eta}_2, \bar{\rho}), \dotsc, v(\bar{\eta}_n, \bar{\rho})\right), \\
		\bPsi &:= \Psi(\bar{\eta}, \bar{\rho}, y) := \diag\left(\psi(\bar{\eta}_1, \bar{\rho}, y_1), \psi(\bar{\eta}_2, \bar{\rho}, y_2), \dotsc, \psi(\bar{\eta}_n, \bar{\rho}, y_n)\right), \\
		\bC &:= C(\bar{\eta}, \bar{\rho}) := \diag\left(c(\bar{\eta}_1, \bar{\rho}), c(\bar{\eta}_2, \bar{\rho}), \dotsc, c(\bar{\eta}_n, \bar{\rho})\right),
	\end{aligned}
\end{equation*}
and the $n$-vectors $\bmu := \mu(\bar{\eta}) := \left(h(\bar{\eta}_{1}), h(\bar{\eta}_{2}), \dotsc, h(\bar{\eta}_{n})\right)^{\top}$ and $\bar{\eta} := \left(\bar{\eta}_{1}, \bar{\eta}_{2}, \dotsc, \bar{\eta}_{n}\right)^{\top} = X_1 \bbeta_1 + X_2 \bbeta_2$ with $\bar{\eta}_{i} := \eta(\bar{\beta}, x_i)$. Using the log-link further guarantees $\rank(\bPsi) = n$ because
$
\psi(\bar{\eta}_{i}, \bar{\rho}, y_i) = \bar{\mu}_{i} \bar{\rho} (y_i + \bar{\rho}) / (\mu_{i} + \bar{\rho})^{2} > 0
$
since $\bar{\mu}_{i} > 0$, $\bar{\rho} > 0$ and $y_{i} \geq 0$ for all $i$. Moreover, define
\begin{equation*}
	\bar{t} :=  \bar{g} \xmat{\bkappa}{\ones} - \bar{g} \xmat{(y - \bmu)}{\bC} \bar{\eta} - (\bar{g}^2 \xmat{\bk}{\ones} + \bar{\varrho} \xmat{\bkappa}{\ones})\balpha,
\end{equation*}
with $n$-vectors
\begin{equation*}
	\bk := k(\bar{\eta}, \bar{\rho}, y) := (\bk_{1}, \bk_{2}, \dotsc, \bk_{n})^{\top}, \quad
	\bkappa := \kappa(\bar{\eta}, \bar{\rho}, y) := (\bkappa_{1}, \bkappa_{2}, \dotsc, \bkappa_{n})^{\top},
\end{equation*}
where $\bk_{i} := k(\bar{\eta}_{i}, \bar{\rho}, y_i)$, $\bkappa_{i} := \kappa(\bar{\eta}_{i}, \bar{\rho}, y_i)$ and $\ones := (1, \dotsc, 1)^{\top}$ is an $n$-vector filled with ones.
Notice the slight abuse in notation, where $\mu$, $k$ and $\kappa$ are vector-valued functions here, whereas they were scalar-valued in the sections before. Furthermore, let
\begin{equation*}
	\bepsilon := \frac{\bar{g}}{\bar{g}^2 \xmat{\bk}{\ones} + \bar{\varrho} \xmat{\bkappa}{\ones}}, \qquad \bq := \bC (y - \bmu).
\end{equation*}
Then, the solution to the linearized system of likelihood equations \eqref{eq:taylor} can be expressed in closed form as
\begin{align*}
		\tbeta_{1u} ={}& \bigg[
		\left( \frac{\xmat{\bX_1}{\bX_1}}{n} + \frac{\bar{g} \bepsilon}{n} \xmat{X_1}{\bq} \xmat{\bq}{X_1} \right)^{-1} \nonumber \\ 
		& + \bigg\{
			\left( \frac{\xmat{\bX_1}{\bX_1}}{n} + \frac{\bar{g} \bepsilon}{n} \xmat{X_1}{\bq} \xmat{\bq}{X_1} \right)^{-1}
			\left(\frac{\xmat{\bX_1}{\bX_2}}{n} + \frac{\bar{g} \bepsilon}{n} \xmat{X_1}{\bq} \xmat{\bq}{X_2} \right)
			\left(\frac{\bX_{2}^{\top} \bM_1 \bX_2}{n} \right)^{-1}  \nonumber \\
			&  \cdot \left( \frac{\xmat{\bX_2}{\bX_1}}{n} + \frac{\bar{g} \bepsilon}{n} \xmat{X_2}{\bq} \xmat{\bq}{X_1} \right)
			\left( \frac{\xmat{\bX_1}{\bX_1}}{n} + \frac{\bar{g} \bepsilon}{n} \xmat{X_1}{\bq} \xmat{\bq}{X_1} \right)^{-1}
			\bigg\}
		\bigg]
	\left(\frac{\xmat{\bX_1}{\bar{y}_0}}{n} - \frac{\bar{t} \bepsilon}{n} \xmat{X_1}{\bar{q}} \right) \nonumber \\
	& - \biggl[ \left( \frac{\xmat{\bX_1}{\bX_1}}{n} + \frac{\bar{g} \bepsilon}{n} \xmat{X_1}{\bq} \xmat{\bq}{X_1} \right)^{-1}
	\left(\frac{\xmat{\bX_1}{\bX_2}}{n} + \frac{\bar{g} \bepsilon}{n} \xmat{X_1}{\bq} \xmat{\bq}{X_2} \right)
	\left(\frac{\bX_{2}^{\top} \bM_1 \bX_2}{n} \right)^{-1} \nonumber \\
	& \cdot \left(\frac{\xmat{\bX_2}{\bar{y}_0}}{n} - \frac{\bar{t} \bepsilon}{n} \xmat{X_2}{\bar{q}} \right) \biggr], \\
		\tbeta_{2u} ={}& - \biggl[
		\left(\frac{\bX_{2}^{\top} \bM_1 \bX_2}{n} \right)^{-1}
		\left(\frac{\xmat{\bX_2}{\bX_1}}{n} + \frac{\bar{g} \bepsilon}{n} \xmat{X_2}{\bq} \xmat{\bq}{X_1} \right)
		\left(\frac{\xmat{\bX_1}{\bX_1}}{n} + \frac{\bar{g} \bepsilon}{n} \xmat{X_1}{\bq} \xmat{\bq}{X_1} \right)^{-1} \nonumber \\
		& \cdot \left(\frac{\xmat{\bX_1}{\bar{y}_0}}{n} - \frac{\bar{t} \bepsilon}{n} \xmat{X_1}{\bar{q}} \right)
		\biggr]
		+ \left(\frac{\bX_{2}^{\top} \bM_1 \bX_2}{n} \right)^{-1}
		\left(\frac{\xmat{\bX_2}{\bar{y}_0}}{n} - \frac{\bar{t} \bepsilon}{n} \xmat{X_2}{\bar{q}} \right), \\
	\talpha_{u} ={}& -\frac{\bar{t} + \bar{g} \xmat{(y - \bmu)}{\bC} (X_1 \tbeta_{1u} + X_2 \tbeta_{2u})}{\bar{g}^2 \xmat{\bk}{\ones} + \bar{\varrho} \xmat{\bkappa}{\ones}},
\end{align*}
where
\begin{align*}
	\bM_1 ={}& (I_n + \bar{g} \bepsilon \bPsi^{-1/2} \bq \bq^{\top} \bPsi^{-1/2})& \nonumber \\
	& - \left[ (\bX_1 + \bar{g} \bepsilon \bPsi^{-1/2} \bq \bq^{\top} X_1) (\xmat{\bX_{1}}{\bX_{1}} + \bar{g} \bepsilon \xmat{X_1}{\bq} \xmat{\bq}{X_1} )^{-1}  (\bX_{1}^{\top} + \bar{g} \bepsilon X_{1}^{\top} \bq \bq^{\top} \bPsi^{-1/2}) \right],
\end{align*}
is a symmetric matrix. In contrast to \citet{deluca2018glm}, $\bM_1$ is not idempotent anymore due to the rank-1 perturbation in $I_n + \bar{g} \bepsilon \bPsi^{-1/2} \bq \bq^{\top} \bPsi^{-1/2}$, which is a consequence of the additional dispersion parameter $\rho$ in the NB2 model compared to GLMs.

Likewise, consider the general one-step ML estimator for the $j$th model. Define the symmetric and idempotent $k_2 \times k_2$ matrix
\begin{equation*}
	\bP_j := \left(\frac{\bX_{2}^{\top} \bM_1 \bX_2}{n} \right)^{-1/2}  R_{j} \left( R_{j}^{\top} \left(\frac{\bX_{2}^{\top} \bM_1 \bX_2}{n} \right)^{-1} R_{j} \right)^{-1}  R_{j}^{\top} \left(\frac{\bX_{2}^{\top} \bM_1 \bX_2}{n} \right)^{-1/2},
\end{equation*}
the $k_1 \times k_2$ matrix
\begin{equation*}
	\bQ := \left(\frac{\xmat{\bX_1}{\bX_1}}{n} + \frac{\bar{g} \bepsilon}{n} \xmat{X_1}{\bq} \xmat{\bq}{X_1} \right)^{-1}  \left(\frac{\xmat{\bX_1}{\bX_2}}{n} + \frac{\bar{g} \bepsilon}{n} \xmat{X_1}{\bq} \xmat{\bq}{X_2} \right) \left(\frac{\bX_{2}^{\top} \bM_1 \bX_2}{n} \right)^{-1/2},
\end{equation*}
and the following transformation of the unrestricted one-step ML estimator $\tbeta_{2u}$
\begin{equation}\label{eq:vartheta}
	\tvartheta :=  \left(\frac{\bX_{2}^{\top} \bM_1 \bX_2}{n} \right)^{1/2} \tbeta_{2u}.
\end{equation}
Then, analogous to Proposition 1 of \citet{deluca2018glm}, I obtain the one-step ML estimator for the $j$th model in the following proposition.
\begin{proposition}[One-step ML estimators]\label{prop:onestepML} The one-step ML estimators of $\beta_1$, $\beta_2$ and $\alpha$ based on the $j$th model are
	\begin{align*}
		\tbeta_{1j} &= \tbeta_{1r} - \bQ \bW_{j} \tvartheta, \\
		\tbeta_{2j} &= \left(\frac{\bX_{2}^{\top} \bM_1 \bX_2}{n} \right)^{-1/2} \bW_j \tvartheta, \\
		\talpha_{j} &= -\frac{\bar{t} + \bar{g} \xmat{(y - \bmu)}{\bC} (X_1 \tbeta_{1j} + X_2 \tbeta_{2j})}{\bar{g}^2 \xmat{\bk}{\ones} + \bar{\varrho} \xmat{\bkappa}{\ones}},
	\end{align*}
	where
	\begin{align*}
		\tbeta_{1r} &= \left( \frac{\xmat{\bX_1}{\bX_1}}{n} + \frac{\bar{g} \bepsilon}{n} \xmat{X_1}{\bq}\xmat{\bq}{X_1} \right)^{-1} \left( \frac{\xmat{\bX_1}{\bar{y}_0}}{n} - \frac{\bar{t} \bepsilon}{n} \xmat{X_1}{\bq} \right),
	\end{align*}
	is the fully restricted one-step ML estimator of $\beta_1$ and $\bW_j = I_{k_{2}} - \bP_j$.
\end{proposition}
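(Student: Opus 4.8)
The plan is to follow the route of \citet{deluca2018glm}, the extra work coming entirely from the dispersion parameter $\alpha$. I would treat the linearized system \eqref{eq:taylor}, together with the constraint $R_j^\top\beta_2=0$ and the Lagrange term $R_j\nu_j$, as an exact linear system in $(\beta_1,\beta_2,\alpha,\nu_j)$ and first eliminate $\alpha$. The third equation of \eqref{eq:taylor} is linear in $\alpha-\balpha$ with the nonzero scalar coefficient $\bH_{\alpha\alpha}$ --- nonzero by the log-link together with the signs of $\bk$ and $\bkappa$ --- so I would solve it for $\alpha-\balpha$ and substitute into the first two equations. Because \eqref{eq:a1} gives $\bH_{p\alpha}=-\bar{g}\,X_p^\top\bq$ with $\bq=\bC(y-\bmu)$, the matrices $\bH_{p\alpha}\bH_{\alpha\alpha}^{-1}\bH_{\alpha q}$ that this substitution subtracts from $\bH_{pq}$ are rank-one and proportional to $X_p^\top\bq\bq^\top X_q$; collecting them reproduces exactly the perturbed cross-products $\bX_p^\top\bX_q/n+(\bar{g}\bepsilon/n)X_p^\top\bq\bq^\top X_q$ with $\bepsilon$ as defined, and the matching modification of the score terms produces the right-hand sides $\bX_p^\top\bar{y}_0/n-(\bar{t}\bepsilon/n)X_p^\top\bq$. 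After this step one is left with a two-block linear system in $(\beta_1,\beta_2)$, with the multiplier $R_j\nu_j$ attached to the second block and the constraint $R_j^\top\beta_2=0$: formally the normal equations of a restricted least-squares fit of $\bar{y}_0$ on $\bX_1,\bX_2$.

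To solve that restricted system I would partial out $\beta_1$ from the first block by Frisch--Waugh--Lovell-type elimination, using the inverse of the perturbed $\bX_1^\top\bX_1$-block; the Schur complement that remains is precisely $\bX_2^\top\bM_1\bX_2/n$ with $\bM_1$ the stated symmetric matrix, which --- unlike in \citet{deluca2018glm} --- is not idempotent because of the rank-one term $\bar{g}\bepsilon\,\bPsi^{-1/2}\bq\bq^\top\bPsi^{-1/2}$ inherited from the $\alpha$-elimination. The problem then reduces to minimizing a quadratic form in $\beta_2$ with Hessian $\bX_2^\top\bM_1\bX_2/n$ subject to $R_j^\top\beta_2=0$, whose unconstrained solution is $\tbeta_{2u}$ (so that, setting $R_j=0$, the display preceding the proposition is recovered as the special case). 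Reparametrizing with $\tvartheta=(\bX_2^\top\bM_1\bX_2/n)^{1/2}\tbeta_{2u}$, the constraint becomes orthogonality to the columns of $(\bX_2^\top\bM_1\bX_2/n)^{-1/2}R_j$, so the constrained minimizer of $\|\vartheta-\tvartheta\|^2$ is $\bW_j\tvartheta$ with $\bW_j=I_{k_2}-\bP_j$ and $\bP_j$ the Euclidean orthogonal projector onto that column space --- which is exactly the displayed $\bP_j$, and is genuinely symmetric and idempotent. Undoing the transformation yields $\tbeta_{2j}=(\bX_2^\top\bM_1\bX_2/n)^{-1/2}\bW_j\tvartheta$; the limiting cases $R_j=0$ ($\bW_j=I_{k_2}$) and $r_j=k_2$ ($\bW_j=0$, hence $\tbeta_{2j}=0$) confirm the form.

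Finally I would back out the other parameters. Substituting $\tbeta_{2j}$ into the concentrated first-block equation gives $\tbeta_{1j}=\tbeta_{1r}-\bQ\bW_j\tvartheta$, where $\tbeta_{1r}$ is the estimator obtained by additionally imposing $\beta_2=0$ and $\bQ$ is the stated $k_1\times k_2$ matrix assembled from the perturbed cross-products; re-inserting $(\tbeta_{1j},\tbeta_{2j})$ into the expression for $\alpha-\balpha$ obtained in the first step then produces $\talpha_j$ in the displayed closed form. The main obstacle is the algebraic bookkeeping forced by the non-idempotent $\bM_1$: every block inversion now carries a rank-one Sherman--Morrison/Woodbury correction, and one must check that, after concentrating out $\alpha$ and partialling out $\beta_1$, the Lagrangian elimination of $\nu_j$ collapses to precisely $\bP_j$ in the metric $\bX_2^\top\bM_1\bX_2/n$ and that this $\bP_j$ is still idempotent. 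A secondary point needing care is justifying, under the blanket rank assumptions, that $\bH_{\alpha\alpha}$ (hence $\bepsilon$), the perturbed $\bX_1^\top\bX_1$-block, and $\bX_2^\top\bM_1\bX_2$ are all invertible, so that every step of the derivation is well defined.
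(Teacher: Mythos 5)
Your proposal is correct and follows essentially the same route as the paper's proof: concentrate out $\alpha$ via the third linearized equation (which generates the rank-one corrections $\bar{g}\bepsilon\, X_p^{\top}\bq\bq^{\top}X_q$ and the modified right-hand sides), reduce to the two-block system whose Schur complement is $\bX_{2}^{\top}\bM_1\bX_2/n$, and express the restricted solution through $\bP_j$, checking invertibility with Sherman--Morrison--Woodbury. The only cosmetic difference is that you obtain $\tbeta_{2j}=(\bX_{2}^{\top}\bM_1\bX_2/n)^{-1/2}\bW_j\tvartheta$ as an orthogonal projection of $\tvartheta$ in the transformed coordinates, whereas the paper solves explicitly for the Lagrange multiplier $\tilde{\nu}_{j}=(R_{j}^{\top}\bar{A}^{22}R_{j})^{-1}R_{j}^{\top}\tbeta_{2u}$ and substitutes back---algebraically the same computation.
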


\subsection{Transformed model}\label{sec:trafoNB}

The WALS NB estimator relies on a preliminary transformation of the auxiliary regressors to reduce the computational burden, akin to WALS for the linear regression model \citep{magnus2010growth} and GLMs \citep{deluca2018glm}.

First, scale the focus regressors by defining
\begin{align}\label{eq:trafoZ1}
	\bZ_{1} := \bX_{1} \bDelta_{1}, && Z_{1} := X_1 \bDelta_1, && \gamma_1 := \bDelta_{1}^{-1} \beta_1,
\end{align}
with the $k_1 \times k_1$ diagonal matrix $\bDelta_1 := \diag\left(\xmat{\bX_1}{\bX_1} / n \right)^{-1/2}$ 
such that $\diag\left( \xmat{\bZ_1}{\bZ_1} / n \right) = (1, \dotsc, 1)$. The only purpose of the transformation is to improve the numerical accuracy by normalizing all regressors to be the same scale in $\bZ_1$ \citep[p.~5]{deluca2018glm}.
It further implies
\begin{align*}
	\bZ_1 \gamma_1 ={}& \bX_1 \beta_1,\\
	\beta_1 ={}& \bDelta_1 \gamma_1, \\
	\bM_{1} ={}& (I_n + \bar{g} \bepsilon \bPsi^{-1/2} \bq \bq^{\top} \bPsi^{-1/2})& \nonumber \\
	& - \left[ (\bZ_1 + \bar{g} \bepsilon \bPsi^{-1/2} \bq \bq^{\top} Z_1) (\xmat{\bZ_{1}}{\bZ_{1}} + \bar{g} \bepsilon \xmat{Z_1}{\bq} \xmat{\bq}{Z_1} )^{-1}  (\bZ_{1}^{\top} + \bar{g} \bepsilon Z_{1}^{\top} \bq \bq^{\top} \bPsi^{-1/2}) \right], \\
	\overset{\eqref{eq:trafoZ1}}{=} & (I_n + \bar{g} \bepsilon \bPsi^{-1/2} \bq \bq^{\top} \bPsi^{-1/2})& \nonumber \\
	& - \left[ (\bX_1 + \bar{g} \bepsilon \bPsi^{-1/2} \bq \bq^{\top} X_1) (\xmat{\bX_{1}}{\bX_{1}} + \bar{g} \bepsilon \xmat{X_1}{\bq} \xmat{\bq}{X_1} )^{-1}  (\bX_{1}^{\top} + \bar{g} \bepsilon X_{1}^{\top} \bq \bq^{\top} \bPsi^{-1/2} ) \right],
\end{align*}
so scaling by $\bDelta_1$ has no effect on $\bM_1$. Next, transform the auxiliary regressors by
\begin{align}\label{eq:trafoZ2}
	\bZ_{2} := \bX_{2} \bDelta_{2} \bXi^{-1/2}, && Z_{2} := X_{2} \bDelta_{2} \bXi^{-1/2}, && \gamma_{2} := \bXi^{1/2} \bDelta_{2}^{-1} \beta_{2},
\end{align}
where
\begin{align}\label{eq:xiZ}
	\bXi := \frac{\bDelta_2 \bX_{2}^{\top} \bM_{1} \bX_2 \bDelta_2}{n},
\end{align}
and I assumed $\bX_{2}^{\top} \bM_1 \bX_2$ to be positive definite so $\bXi^{1/2}$ exists.
Furthermore, the $k_2 \times k_2$ diagonal matrix $\bDelta_{2} := \diag\left(\bX_{2}^{\top} \bM_{1} \bX_2 / n \right)^{-1/2}$ is chosen such that $\diag(\bXi) = (1, \dotsc, 1)$. Unlike the matrix $\bDelta_1$, the transformation by $\bDelta_2$ serves the dual purpose of improving numerical accuracy and making the WALS NB estimator equivariant to scale transformations of the auxiliary regressors. Otherwise it would be only scale equivariant for the focus regressors \citep[p.~528]{deluca2011stata}.

Notice that combining \eqref{eq:trafoZ2} and \eqref{eq:xiZ} leads to
\begin{align}\label{eq:identity}
	\frac{\bZ_{2}^{\top} \bM_1 \bZ_2}{n} = I_{k_{2}}.
\end{align}
In contrast to \citet{deluca2018glm}, $\bM_1 \bZ_2 / \sqrt{n}$ is not semiorthogonal\footnote{Semiorthogonality is defined as $A A^{\top} = I$ or $A^{\top} A = I$ for a general (non-square) matrix $A$ \citep[p.~104]{zhang2017matrix}.} anymore, since $\bM_1$ is not idempotent.
The transformation further implies
\begin{align*}
	\bZ_2 \gamma_{2} &= \bX_{2} \beta_{2},\\
	\beta_{2} &= \bDelta_2 \bXi^{-1/2} \gamma_{2}.
\end{align*}

Using \eqref{eq:trafoZ1} and \eqref{eq:trafoZ2} I can show for the unrestricted model that
\begin{equation*}
	\eta = Z_1 \gamma_1 + Z_2 \gamma_2 = X_1 \beta_1 + X_2 \beta_2,
\end{equation*}
so the linear predictor stays the same for the unrestricted model. Therefore, all the quantities that only depend on $\bar{\alpha}$ and indirectly on $\bar{\beta}$ via
\begin{equation*}
	\bar{\eta} = X_1 \bbeta_1 + X_2 \bbeta_2 = Z_1 \bgamma_1 + Z_2 \bgamma_2,
\end{equation*}
where $\bgamma_{1} = \bDelta_{1}^{-1} \bar{\beta}_1$ and $\bgamma_{2} = \bXi^{1/2} \bDelta_{2}^{-1} \bar{\beta}_2$, remain the same (e.g.\ $\bar{\mu}$, $\bPsi$, $\bQ$, $\bar{g}$, $\bar{t}$, \ldots), as they do not depend on $\bar{\beta}$ directly.
Note that \citet[p.~6]{deluca2018glm} suggest using the fully iterated unrestricted ML estimates as starting values $\bar{\beta}$ and $\balpha$. In this case, the starting value of the dispersion parameter $\balpha_{Z}$ for the transformed regressors $Z$ is identical to $\balpha$ for the original regressors $X$ since the estimated conditional means are equal, i.e.\ $h(\eta(\bar{\beta}, x_i)) = h(\eta(\bgamma, z_i))$ for all $i$, where $z_{i}^{\top}$ is the $i$th row vector of $Z = (Z_1, Z_2)$.

\subsection{One-step ML estimation of transformed models}\label{sec:onesteptrafo}

It follows from \cref{prop:onestepML} using \eqref{eq:identity} that the one-step ML estimators for the $j$th transformed model are given by
\begin{equation}
	\begin{aligned}\label{eq:trafosol}
		\tgamma_{1j} &= \tgamma_{1r} - \bar{D} W_{j} \tgamma_{2u},\\
		\tgamma_{2j} &= \left( \frac{\bZ_{2}^{\top} \bM_1 \bZ_{2}}{n} \right)^{-1/2} \bW_{Z,j} \tvartheta_Z = W_{j} \tgamma_{2u}, \\
		\talpha_{j} &= -\frac{\bar{t} + \bar{g} \xmat{(y - \bmu)}{\bC} (Z_1 \tgamma_{1j} + Z_2 \tgamma_{2j})}{\bar{g}^2 \xmat{\bk}{\ones} + \bar{\varrho} \xmat{\bkappa}{\ones}},
	\end{aligned}	
\end{equation}
where the fully restricted and unrestricted estimators are
\begin{equation}
	\begin{aligned}\label{eq:trafosol2}
		\tgamma_{1r} ={} & \left( \frac{\xmat{\bZ_1}{\bZ_1}}{n} + \frac{\bar{g} \bepsilon}{n} \xmat{Z_1}{\bq}\xmat{\bq}{Z_1} \right)^{-1} \left( \frac{\xmat{\bZ_1}{\bar{y}_0}}{n} - \frac{\bar{t} \bepsilon}{n} \xmat{Z_1}{\bq} \right) = \bDelta_{1}^{-1} \tbeta_{1r},\\
		\tgamma_{1u} ={} & \tgamma_{1r} - \bQ_{Z} \tgamma_{2u} = \bDelta_{1}^{-1} \left[ \tbeta_{1r} - \bQ \tvartheta \right] = \bDelta_{1}^{-1} \tbeta_{1u}, \\
		\tgamma_{2u} ={} & -\left(\frac{\xmat{\bZ_2}{\bZ_1}}{n} + \frac{\bar{g} \bepsilon}{n} \xmat{Z_2}{\bq} \xmat{\bq}{Z_1} \right) \left(\frac{\xmat{\bZ_1}{\bZ_1}}{n} + \frac{\bar{g} \bepsilon}{n} \xmat{Z_1}{\bq} \xmat{\bq}{Z_1} \right)^{-1} \left(\frac{\xmat{\bZ_1}{\bar{y}_0}}{n} - \frac{\bar{t} \bepsilon}{n} \xmat{Z_1}{\bar{q}} \right) \\
		&+ \left(\frac{\xmat{\bZ_2}{\bar{y}_0}}{n} - \frac{\bar{t} \bepsilon}{n} \xmat{Z_2}{\bar{q}} \right) \\
		={} & \bXi^{1/2} \bDelta_{2}^{-1} \tbeta_{2u},
	\end{aligned}
\end{equation}
with
\begin{equation*}
	\begin{aligned}
	\bQ_{Z} 
	&= \left(\frac{\xmat{\bZ_1}{\bZ_1}}{n} + \frac{\bar{g} \bepsilon}{n} \xmat{Z_1}{\bq} \xmat{\bq}{Z_1} \right)^{-1}  \left(\frac{\xmat{\bZ_1}{\bZ_2}}{n} + \frac{\bar{g} \bepsilon}{n} \xmat{Z_1}{\bq} \xmat{\bq}{Z_2} \right) := \bar{D}.
	\end{aligned}
\end{equation*}
Exploiting $R_{j}^{\top}R_{j} = I_{r_j}$, the following terms simplify
\begin{equation*}
	\begin{aligned}
		\bP_{Z,j} &= R_{j} (\xmat{R_{j}}{R_{j}})^{-1} R_{j}^{\top} = R_{j}R_{j}^{\top} =: P_{j}, \\
		\bW_{Z,j} &= I_{k_2} - \bP_{Z,j} = I_{k_2} - P_{j} =: W_{j}.
	\end{aligned}
\end{equation*}
Analogous to $\tvartheta$, using \eqref{eq:identity} yields
\begin{equation*} 
	\tvartheta_{Z} = \left( \frac{\bZ_{2}^{\top} \bM_1 \bZ_{2}}{n} \right)^{1/2} \tgamma_{2u} = \tgamma_{2u}.
\end{equation*}

As a direct consequence of \eqref{eq:identity}, both $P_{j}$ and $W_{j}$ become nonrandom projection matrices that are different from $\bP_j$ and $\bW_j$ used for the estimation with the untransformed regressors. Furthermore, $W_{j}$ reduces to a diagonal matrix with $k_2 - r_j$ ones and $r_j$ zeros on its main diagonal. The $h$th diagonal element of $W_{j}$ is zero, when the $h$th component of $\gamma_{2}$ is constrained to be zero in the $j$th model. Otherwise, the $h$th component is one. Combining this observation with $\tgamma_{2j}$ from \eqref{eq:trafosol}, it follows that all models that include the $h$th column of $Z_2$ as regressor will have the same estimator for the $h$th component, namely the $h$th component of $\tgamma_{2u}$.

Note that the $j$th model for the transformed regressors is generally not equivalent to the $j$th model of the untransformed regressors because the restriction in \eqref{eq:systemBeta} differs. The exceptions are the unrestricted model $u$ and the fully restricted model $r$, where the restriction is irrelevant:
\begin{enumerate}
	\item For the unrestricted model, the restriction matrix is zero, i.e.\ $R_{u} = 0$.
	\item For the fully restricted model, the restriction matrix is the identity matrix, i.e.\ $R_{r} = I_{k_2}$. However, this is equivalent to estimating an unrestricted model containing only the focus regressors.
\end{enumerate}
This implies that $\tgamma_{2j} \neq \bXi^{1/2} \bDelta_{2}^{-1} \tbeta_{2j}$ for $j \notin \{u, r\}$ and $k_2 \geq 2$ auxiliary regressors. For $k_2 = 1$, there exist only two models: 1.\ the unrestricted and 2.\ the fully restricted model, so $j \in \{u, r\}$. The results are summarized in \cref{cor:equiv}.

\begin{corollary}\label{cor:equiv}
	Let $u$ and $r$ be the indices that denote the unrestricted and fully restricted estimators with $R_{u} = 0$ and $R_{r} = I_{k_2}$, respectively. Then, for $j \neq \{u, r\}$ and $k_2 \geq 2$:
	\begin{equation*}
		\tgamma_{1j} \neq \bDelta_{1}^{-1} \tbeta_{1j}, \qquad
		\tgamma_{2j} \neq \bXi^{1/2} \bDelta_{2}^{-1} \tbeta_{2j}.
	\end{equation*}
	For $k_{2} = 1$, either $j = u$ or $j = r$ holds, so the general relationships for the unrestricted and fully restricted estimators apply:
	\begin{equation*}
		\tgamma_{1u} = \bDelta_{1}^{-1} \tbeta_{1u}, \qquad
		\tgamma_{2u} = \bXi^{1/2} \bDelta_{2}^{-1} \tbeta_{2u},
	\end{equation*}
	and
	\begin{equation*}
		\tgamma_{1r} = \bDelta_{1}^{-1} \tbeta_{1r}, \qquad \tgamma_{2r} = 0.
	\end{equation*}
\end{corollary}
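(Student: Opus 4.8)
The plan is to split along the dichotomy in the statement. On one side are the boundary models $j\in\{u,r\}$ — which, when $k_2=1$, are the only models — where the claim is immediate from formulas already in hand: $\tgamma_{1r}=\bDelta_{1}^{-1}\tbeta_{1r}$ is stated in \eqref{eq:trafosol2}; since $R_r=I_{k_2}$ the fully restricted untransformed model imposes $\beta_2=0$, hence $\tbeta_{2r}=0$, while $W_r=I_{k_2}-R_rR_r^{\top}=0$ gives $\tgamma_{2r}=W_r\tgamma_{2u}=0=\bXi^{1/2}\bDelta_{2}^{-1}\tbeta_{2r}$; and for $j=u$, $R_u=0$ yields $W_u=I_{k_2}$, so that $\tgamma_{1u}=\bDelta_{1}^{-1}\tbeta_{1u}$ and $\tgamma_{2u}=\bXi^{1/2}\bDelta_{2}^{-1}\tbeta_{2u}$ are precisely \eqref{eq:trafosol2}. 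When $k_2=1$ the only admissible ranks are $r_j\in\{0,1\}$, i.e.\ $j\in\{u,r\}$, so these displays exhaust the model space.

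The substance is the interior case $j\notin\{u,r\}$ with $k_2\ge2$, where I would argue that the $j$th transformed and $j$th untransformed models are genuinely different submodels, so there is no reason their one-step ML estimators should coincide. By \eqref{eq:trafoZ2}, $\beta_2=\bDelta_2\bXi^{-1/2}\gamma_2$, so the restriction $R_j^{\top}\beta_2=0$ defining $\tbeta_{2j}$ translates, in the $\gamma_2$-coordinates, into $R_j^{\top}\bDelta_2\bXi^{-1/2}\gamma_2=0$, whereas $\tgamma_{2j}$ is obtained under $R_j^{\top}\gamma_2=0$. Concretely, take $R_j^{\top}=(I_{r_j},0)$ without loss of generality: $\tgamma_{2j}=W_j\tgamma_{2u}$ has its first $r_j$ entries equal to zero (since $R_j^{\top}R_j=I_{r_j}$) and $\tbeta_{2j}$ also has its first $r_j$ entries equal to zero, but $\bXi^{1/2}\bDelta_{2}^{-1}$ is not diagonal — by construction only $\diag(\bXi)=(1,\dots,1)$, while the off-diagonal entries of $\bXi=\bDelta_2\bX_{2}^{\top}\bM_1\bX_2\bDelta_2/n$ are nonzero for generic $X_2$ — so $\bXi^{1/2}\bDelta_{2}^{-1}\tbeta_{2j}$ in general does not have vanishing first $r_j$ entries, hence violates $R_j^{\top}\gamma_2=0$ while $\tgamma_{2j}$ satisfies it, and the two must differ. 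The inequality $\tgamma_{1j}\neq\bDelta_{1}^{-1}\tbeta_{1j}$ then follows: subtracting $\tgamma_{1r}=\bDelta_{1}^{-1}\tbeta_{1r}$ and using $\bar{D}=\bQ_{Z}=\bDelta_{1}^{-1}\bQ(\bX_{2}^{\top}\bM_1\bX_2/n)^{1/2}\bDelta_2\bXi^{-1/2}$ together with $\tvartheta=(\bX_{2}^{\top}\bM_1\bX_2/n)^{1/2}\bDelta_2\bXi^{-1/2}\tgamma_{2u}$ (both read off from \eqref{eq:trafosol2} and \eqref{eq:vartheta}), equality would force the projectors $\bW_j$ and $W_j$ to be intertwined by $S:=\bXi^{1/2}\bDelta_{2}^{-1}(\bX_{2}^{\top}\bM_1\bX_2/n)^{-1/2}$, which after simplification reduces to $\operatorname{col}(\bXi^{-1/2}\bDelta_2R_j)=\operatorname{col}(R_j)$ --- again false once $\bXi$ is non-diagonal.

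The main obstacle is the usual one attached to any ``$\neq$ in general'' claim: ruling out accidental equality at a particular data set. I would settle it by noting that $\bXi$ depends continuously on $X_2$ (through $\bDelta_2$ and $\bM_1$) and is non-diagonal on a set of full Lebesgue measure in the regressors; that on this set the matrix identity behind equality --- the intertwining of $\bW_j$ and $W_j$, equivalently $\operatorname{col}(\bXi^{-1/2}\bDelta_2R_j)=\operatorname{col}(R_j)$ --- fails; and that $\tgamma_{2u}$ sweeps out a full-dimensional set as $y$ varies, so that $(W_j-S\bW_jS^{-1})\tgamma_{2u}\neq0$ and $\bXi^{1/2}\bDelta_{2}^{-1}\tbeta_{2j}\neq\tgamma_{2j}$ for generic data. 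A short explicit $k_2=2$ example with non-diagonal $\bXi$ could be appended to make the failure concrete. The remaining work --- the algebra turning equality of the $\gamma_1$-estimators into the projector condition and reducing $S\bW_jS^{-1}$ to the column-space statement --- is routine and builds only on \cref{prop:onestepML} and \eqref{eq:trafosol2}.
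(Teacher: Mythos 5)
Your proposal is correct and follows essentially the same route as the paper: the boundary cases $j\in\{u,r\}$ (which exhaust the model space when $k_2=1$) are read off from \eqref{eq:trafosol2}, and the interior case rests on the observation that the restriction $R_j^{\top}\gamma_2=0$ defining $\tgamma_{2j}$ is not equivalent to $R_j^{\top}\beta_2=0$ defining $\tbeta_{2j}$ because $\bXi^{1/2}\bDelta_{2}^{-1}$ is not diagonal --- the paper makes the same point via a proof by contradiction worked out explicitly for $k_2=2$. Your handling of the genericity caveat is in fact more careful than the paper's, which simply asserts that the two restrictions ``generally'' differ.
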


\section{WALS NB model averaging estimator}\label{sec:mavgNB}

Consider the model averaging estimators of $\gamma_1$, $\gamma_{2}$ and $\alpha$
\begin{align*}
	\hat{\gamma}_{1} = \sum_{j=1}^{2^{k_{2}}} \lambda_{j} \tgamma_{1j}, && \hat{\gamma}_{2} = \sum_{j=1}^{2^{k_{2}}} \lambda_{j} \tgamma_{2j}, && \hat{\alpha} =  \sum_{j=1}^{2^{k_{2}}} \lambda_{j} \talpha_{j},
\end{align*}
where $\lambda_j$ are data-dependent model weights satisfying the restrictions
\begin{align}\label{eq:maweights}
	0 \leq \lambda_j \leq 1, && \sum_{j=1}^{2^{k_2}} \lambda_j = 1, && \lambda_j = \lambda_j(\sqrt{n} \tgamma_{2u}).
\end{align}
Note that the regularity condition $\lambda_j = \lambda_j(\sqrt{n} \tgamma_{2u})$
is equivalent to the condition on the model weights used by \citet{hjort2003ma}.

From \eqref{eq:trafosol} I get
\begin{equation}
\begin{aligned}\label{eq:gammahat}
	\hat{\gamma}_{1} &= \tgamma_{1r} - \bar{D} W \tgamma_{2u} = \tgamma_{1r} - \bar{D} \hat{\gamma}_{2},\\
	\hat{\gamma}_{2} &= W \tgamma_{2u}, \\
	\hat{\alpha} &= -\frac{\bar{t} + \bar{g} \xmat{(y - \bmu)}{\bC} (Z_1 \hat{\gamma}_{1} + Z_2 \hat{\gamma}_{2})}{\bar{g}^2 \xmat{\bk}{\ones} + \bar{\varrho} \xmat{\bkappa}{\ones}},
\end{aligned}
\end{equation}
where $W = \sum_{j=1}^{2^{k_2}} \lambda_j W_{j}$ is a diagonal matrix with entries $w_{h} \in [0, 1]$, because $W_{j}$ is a diagonal matrix with entries $w_{j,h} \in \{0, 1\}$, $h = 1, 2, \dotsc, k_2$ (notice the slight abuse of notation: $h$ is used as an index here and does not refer to the inverse link). Next, I can transform $\hat{\alpha}$ to an estimate for $\rho$ by applying the inverse of the log-link, i.e.\ $\hat{\rho} = \exp(\hat{\alpha})$. 
Furthermore, using $\hat{\gamma}_1$ and $\hat{\gamma}_2$, the WALS estimators of the original parameters $\beta_1$ and $\beta_{2}$ are given by
\begin{align*}
	\hat{\beta}_{1} = \bDelta_1 \hat{\gamma}_{1}, && \hat{\beta}_{2} = \bDelta_2 \bXi^{-1/2} \hat{\gamma}_{2}.
\end{align*}

The final step in completing the WALS NB model averaging estimator is to estimate the model weights $\lambda_j$. However, notice that both $\hat{\gamma}_1$ and $\hat{\alpha}$ can be expressed as functions of $\hat{\gamma}_2$. Therefore, it is sufficient to find an expression for $\hat{\gamma}_2$ instead of directly estimating the weights $\lambda_j$. Similar to \citet[p.~6]{deluca2018glm}, I construct $\hat{\gamma}_{2}$ as a Bayesian shrinkage estimator by exploiting the approximate normality and independence of $\tgamma_{2u}$ under the local misspecification framework \citep[see e.g.][]{hjort2003ma}. First, let the auxiliary parameters be $\beta_{2} = \delta / \sqrt{n}$, where $\delta$ is an unknown constant vector that represents the departure of the DGP from the unrestricted model. Then, if the fully iterated ML estimator of the unrestricted model is used as starting values $\bar{\beta}_1$, $\bar{\beta}_2$ and $\bar{\alpha}$ and mild regularity conditions are assumed, I can show that
\begin{equation}\label{eq:approxGamma2}
 \sqrt{n} \tgamma_{2u} \approx \normal( \sqrt{n} \gamma_{2n}, I_{k_2}) = \normal(d, I_{k_2}),
\end{equation}
in large samples, where $\gamma_{2n} = d / \sqrt{n}$, $d$ = $\Xi^{1/2} \Delta_{2}^{-1} \delta$, $\Xi = \plim \bXi$ and $\Delta_{2} = \plim \bDelta_{2}$ (see the \refSup\ for more details). Further, consider $\hat{\gamma}_2$ from \eqref{eq:gammahat} and assume analogously to \citet[p.~6]{deluca2018glm} that each diagonal element $w_h, h = 1, 2, \dotsc, k_2$, of $W$ only depends on the $h$th component $\sqrt{n} \tgamma_{2u,h}$ of $\sqrt{n} \tilde{\gamma}_{2u}$. Then, \eqref{eq:approxGamma2} implies that the components of $\hat{\gamma}_{2}$ are also approximately independent. This assumption further simplifies the estimation problem by reducing the $k_2$-dimensional problem of estimating $\hat{\gamma}_{2}$ to $k_2$ times a one-dimensional problem of estimating each element of $\hat{\gamma}_{2}$. Moreover, $\hat{\gamma}_{2,h}$ is a shrunken version of $\tgamma_{2u,h}$ because $0 \leq w_h \leq 1$, therefore, $\hat{\gamma}_{2}$ is a shrinkage estimator of $\gamma_{2n}$.

The previous two observations suggest that the Bayesian posterior mean is a suitable shrinkage estimator for $\sqrt{n} \gamma_{2n,h}$. Thus, the $h$th component of the WALS NB estimator $\hat{\gamma}_{2}$ follows as
\begin{align}\label{eq:gammaest}
	\hat{\gamma}_{2,h} = \frac{\E(\sqrt{n} \gamma_{2n,h} | \sqrt{n} \tgamma_{2u,h})}{\sqrt{n}} = \frac{\E(d_h | \sqrt{n} \tgamma_{2u,h})}{\sqrt{n}}, \quad d_h \sim f,
\end{align}
where $\sqrt{n} \tgamma_{2u,h} \approx \normal(d_h, 1)$ with prior mean $d_h$, which is the $h$th element of $d$ and is assumed to have a symmetric and unimodal prior $f$ (see section 9 of \citet{magnus2016wals} for more details on the prior and the estimation). Notice again that $\hat{\gamma}_{2,h}$ lies between 0 and the `observed data' $\tgamma_{2u,h}$.

\citet{magnus2016wals} require the desirable properties of robustness\footnote{A prior $\pi(\gamma)$ is robust if the posterior mean $m(x)$ based on $\pi$ satisfies $x - m(x) \rightarrow 0$ as $x \rightarrow \infty$.}, neutrality\footnote{A prior $\pi(\gamma)$ is neutral if the prior median of $\gamma$ is zero and the prior median of $|\gamma|$ is one.} and minimax regret\footnote{Regret is defined as difference between risk and the infimum of risk, where risk is defined as expected squared loss.} for the prior $f$, which further motivates the use of the Bayesian posterior mean as the shrinkage estimator in $\hat{\gamma}_{2,h}$. The reflected Weibull, under suitable parameter values, is a prior that fulfills all the properties mentioned above. In contrast, the Laplace prior is neutral but not robust \citep[p.~132]{magnus2016wals}. However, it admits a closed-form expression for the posterior mean in \eqref{eq:gammaest} \citep[see e.g.\ Theorem 1 in][]{magnus2010growth} and therefore calculating the posterior mean under the Laplace prior is computationally less complex than under the reflected Weibull, which requires numerical integration.

\section{Performance metrics}
\label{sec:score}

In order to compare the performance of WALS NB with other methods, I first need to define performance metrics. The classical performance measure for regression is the RMSE, which is given by
\begin{equation}
	\mathrm{RMSE} = \sqrt{\frac{1}{n} \sumin (\hat{\mu}_i - y_i)^2},
\end{equation}
where $\hat{\mu}_i$ is the predicted mean for observation $i$. However, I would like to evaluate the fit of an entire distribution and not only the expectation. Traditional measures used in machine learning such as (R)MSE only focus on point predictions, i.e.\ the conditional expectation of the fitted distribution, in relation to the observed values and do not make judgment on other aspects of the fitted distribution. \citet{czado2009predictive} recommend scoring rules for evaluation of count data models, which have also been used in \citet{kolassa2016count}. WALS NB and all other methods considered in this paper fit an entire (conditional) distribution for each individual that allows probabilistic predictions/forecasts, which is exactly the scenario for which scoring rules provide quality assessment \citep[p.~359]{gneiting2007scores}. For count data, a probabilistic forecast is a predictive probability distribution $\hat{P}$ on the set of nonnegative integers $\mathbb{N}_{0}$ \citep[p.~1254]{czado2009predictive}.

Following \citet[p.~1256]{czado2009predictive}, I take scoring rules to be penalties I wish to minimize. Specifically, the penalty $s(\hat{P}, y)$ is incurred when the forecaster quotes predictive distribution $\hat{P}$ and count $y$ is realized. Moreover, let $s(\hat{P}, Q)$ denote the expected value of $s(\hat{P}, \cdot)$ under distribution $Q$
\begin{equation*}
	s(\hat{P}, Q) = \int s(\hat{P}, y) dQ(y).
\end{equation*}
In practice, the average over suitable pairs $(\hat{P}, y)$ is used:
\begin{equation}\label{eq:scoreavg}
	S := \frac{1}{n}\sum_{i = 1}^{n} s(\hat{P}_i, y_i),
\end{equation}
where $\hat{P}_i$ refers to the $i$th predictive distribution and $y_i$ the $i$th observed count. In the simulation experiment and empirical application of \Cref{sec:sim,sec:cvexpvar}, respectively, scores will always refer to a suitable average.

Suppose the forecaster has predictive distribution $Q$ available. Then the forecaster has no incentive to predict any $\hat{P} \neq Q$ and is encouraged to quote her true belief, $\hat{P} = Q$, if the scoring rule is \textit{strictly proper}. Strict propriety is defined by
\begin{equation*}\label{eq:strictproper}
	s(Q, Q) \leq s(\hat{P}, Q),
\end{equation*}
with equality if and only if $\hat{P} = Q$, and encourages honest quotes (\citealp[p.~1256]{czado2009predictive}; \citealp[p.~360]{gneiting2007scores}). If $s(Q, Q) \leq s(\hat{P}, Q)$ for all $\hat{P}$ and $Q$, then the scoring rule is only proper. Since only strict propriety ensures that both calibration (consistency with actual realizations) and sharpness (concentration of the predictive distribution) of the predictive distribution are addressed \citep{winkler1996scores}, I exclusively use strictly proper scoring rules.

\citet[p.~1256~f.]{czado2009predictive} propose a number of strictly proper scoring rules for count data. It is a priori unclear which scoring rule to use unless there is a unique and clearly defined underlying decision problem. Since probabilistic forecasts often have many uses, it is appropriate to use a variety of scores to take advantage of their differing emphases \citep[p.~1257]{czado2009predictive}. In this paper, I use the logarithmic (log), Brier and spherical score, which I briefly summarize here: Let $\hat{p}_y := \hat{P}(Y = y)$ denote the probability mass at count $y$ (for continuous distributions it is the density at $y$), then the \textit{log score} is defined as
\begin{equation}\label{eq:logscore}
		\logs(\hat{P}, y) = -\log(\hat{p}_y).
\end{equation}
The sum of log scores corresponds to the negative log-likelihood. Further define
\begin{equation}
	||\hat{p}||^2 = \sum_{r = 0}^{\infty} \hat{p}_{r}^2, \label{eq:infsumsq}
\end{equation}
where the infinite sum may be truncated if no closed-form expression exists. The \textit{quadratic score}, also called \textit{Brier score}, is then
	\begin{equation}\label{eq:brier}
		\qs(\hat{P}, y) = -2\hat{p}_y + ||\hat{p}||^2.
	\end{equation}
The \textit{spherical score} uses the same components differently:
\begin{equation}
	\sphs(\hat{P},y) = - \frac{\hat{p}_y}{||\hat{p}||}.
\end{equation}

\section{Simulation experiment}\label{sec:sim}

The aim is to compare the performance of WALS NB with the traditional ML estimator of the NB2 regression model in a controlled environment. The DGP is inspired by the local misspecification framework so I can assess the influence of varying numbers of focus and auxiliary regressors.

The dependent count variable is sampled from an NB2 using a log-link, i.e.\
\begin{equation}\label{eq:dgpSim}
	\begin{aligned}
		y_i | x_i &\sim \mathrm{NB2}(\mu_i, \rho),\\
		\mu_i &= \exp(\alpha + x_{i} ^{\top} \beta), \\
		x_i \overset{\mathrm{i.i.d.}}&{\sim}  \normal(0, \Sigma_{k}),
	\end{aligned}
\end{equation}
for $i = 1, 2, \dotsc, n$, where $x_i = (x_{i1}^{\top}, x_{i2}^{\top})^{\top}$ is a random vector of dimension $k = k_1 + k_2$  composed of $k_1$ focus regressors $x_{i1}$ and $k_2$ auxiliary regressors $x_{i2}$. Analogously, the coefficient vector is separated into two parts: $\beta = (\beta_{1}^{\top}, \beta_{2}^{\top})^{\top}$. Inspired by the simulation experiments in \citet{zhang2019inference} and \citet{deluca2023interval}, who compare confidence and prediction intervals of model averaging methods for the linear regression model, I choose the regressors to be multivariate normal because it allows me to analyze the effect of the correlation between the regressors on the performance of the methods. For simplicity, I specify each element of $x_i$ to have variance $1$ and pairwise correlation $b$, i.e.\
\[
	\Sigma_{k} = \begin{pmatrix}
		1		& b 		& \cdots 	& b \\
		b		& 1 		& \cdots	& b \\ 
		\vdots 	& \vdots 	& \ddots 	& \vdots \\
		b	 	& b			& \cdots 	& 1
	\end{pmatrix}.
\]
The same offset $\alpha = \log(3)$ is used in all experiments such that the DGP produces reasonable counts, see \autoref{fig:histSimRun1} for a visualization of a training set from a specific run.

\begin{figure}
	\centering
	\includegraphics[width=0.75\textwidth]{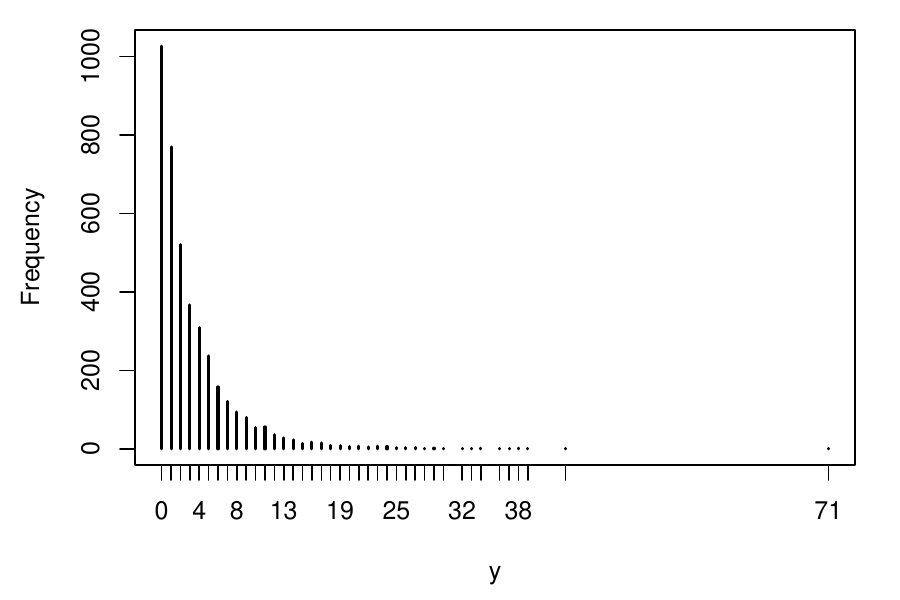}
	\caption{Visualization of count variable $y$ in the training set of the first simulation run of setting $k_1 = 10, k_2 = 100, \rho = 1, b = 0$ with $n = 4000$ observations.}\label{fig:histSimRun1}
\end{figure}

Moreover, the regression coefficients are generated as follows: Define the vectors $\bar{\beta}_{1} := (\bar{\beta}_{1,1}, \bar{\beta}_{1,2}, \dotsc, \bar{\beta}_{1,10})^{\top}$ and $\bar{\beta}_{2} := (\bar{\beta}_{2,1}, \bar{\beta}_{2,2}, \dotsc, \bar{\beta}_{2,100})^{\top}$. Then, the maximum number of regression coefficients $k_1 = 10$ and $k_2 = 100$ are randomly sampled \textit{once} according to the following rules:
\begin{itemize}
	\item $\bar{\beta}_{1}$: Each element $\bar{\beta}_{1,j}, j = 1, 2, \dotsc, 10$, is drawn independently with 50\% chance from $U(-0.25,-0.1)$ or from $U(0.1, 0.25)$, so that both positive and negative coefficients are present.
	\item $\bar{\beta}_{2}$: Each element is drawn independently from a uniform, i.e.\
	\[\bar{\beta}_{2,m} \sim U(-0.01, 0.01), \quad m = 1, 2, \dotsc, 100.\]
\end{itemize}
The simulations then only take the first $k_1$ and $k_2$ values from these vectors as regression coefficients $\beta_1$ and $\beta_2$. For example, in the setting $k_1 = 5, k_2 = 10$, $\beta_1 = (\bar{\beta}_{1,1}, \bar{\beta}_{1,2}, \dotsc, \bar{\beta}_{1,5})^{\top}$ and $\beta_{2} = (\bar{\beta}_{2,1}, \bar{\beta}_{2,2}, \dotsc, \bar{\beta}_{2,10})^{\top}$. Hence, the magnitude of the elements in $\beta_2$ is much smaller than in $\beta_1$, therefore the regressors $x_{i2}$ are considered auxiliary regressors and the main variation is driven by $x_{i1}$. \autoref{tab:barbeta1} and \ref{tab:barbeta2} in \Cref{sec:appSim} show the entries of $\bar{\beta}_1$ and $\bar{\beta}_2$, respectively.

\begin{table}
	\caption{Simulation design -- Parameters and choice of values}\label{tab:simPars}
	\begin{center}
		\begin{threeparttable}
		\begin{small}
			\begin{tabular}{llr}
    \toprule\toprule 
    Parameter & Description & Values \\ 
    \midrule
    $k_1$ & \# focus regressors & 1, 5, 10 \\ 
    $k_2$ & \# auxiliary regressors & 1, 5, 10, 20, 50, 100 \\ 
    $n$ & \# training observations & 500, 1000, 2000, 3000, 4000 \\ 
    $b$ & correlation coefficient of regressors & 0, 0.3, 0.5, 0.9 \\
    $\alpha$ & constant & $\log(3)$ \\ 
    $\rho$ & dispersion parameter & 1, 1.5, 2\\ 
    $R$ & \# runs & 300 \\ 
    \bottomrule
\end{tabular}
			\end{small}
		\begin{tablenotes}
			\footnotesize
			\item[--] Each simulation scenario is a unique combination of the parameter values, which produces 1080 scenarios in total. 
		\end{tablenotes}
		\end{threeparttable}
	\end{center}
	\end{table}

All values of the parameters used in the experiment are summarized in \autoref{tab:simPars}. A total of 1080 scenarios consisting of all combinations of the parameters are simulated for $R = 300$ runs each.

I compare six different procedures that are named according to the pattern `method-specification'. The two methods are called `walsNB', which estimates the NB2 regression model using WALS NB, and `ML', which uses maximum likelihood. For WALS NB procedures,	 the Weibull prior is used as it theoretically provides the best tradeoff between robustness and regret, for more details see \citet[p.~130~ff.]{magnus2016wals}. The results for other priors are expected to be quite similar as WALS for the linear regression model has empirically shown to be relatively insensitive to the choice of the prior \citep{deluca2022sampling}.\footnote{I also conducted the simulation experiment using the Laplace prior and the results are similar to the ones using the Weibull prior.} 

The procedures considered are
\begin{enumerate}
	\item walsNB-dgp: Emulates the DGP \eqref{eq:dgpSim} by including $x_{i1}$ and a constant as focus regressors and $x_{i2}$ as auxiliary regressors.
	\item walsNB-aux: Includes only a constant as focus regressor and the `true' regressors $x_i$ as auxiliary.
	\item ML-U: Includes a constant and the `true' regressors $x_i$.
	\item ML-focus: Only includes the focus regressors $x_{i1}$ and a constant.
	\item ML-AC: Only includes the auxiliary regressors $x_{i2}$ and a constant.
	\item oracle: The true model of the DGP \eqref{eq:dgpSim} that is not estimated.
\end{enumerate}

The second WALS NB specification, walsNB-aux, is included to analyze the extent to which prior information about the focus regressors in walsNB-dgp affects performance. Ideally, including $x_{i1}$ as focus regressors in the procedure should improve performance as they are the covariates that dominate and should therefore be included in all submodels of WALS NB. However, in walsNB-aux their coefficients are also subjected to the regularization of the Bayesian estimation step, which may improve performance. Thus, a priori it is unclear which model will dominate.

All ML specifications are estimated using a log-link for the mean parameter, while the dispersion parameter is estimated directly without a link (default setting). Moreover, I increase the maximum number of iterations for both the alternation process between IRLS and ML estimation of $\rho$ and the IRLS algorithm itself from the default setting of 25 to 2500 to increase the odds for convergence. The remaining settings, e.g.\ convergence criteria, are left at their default values.

Moreover, the WALS NB specifications use a $\log$-link for the mean and dispersion parameter following the DGP \eqref{eq:dgpSim} and are initialized using the ML estimates of the unrestricted model, which are given by the ML-U procedure (using the increased maximum number of iterations as described above). This initialization is recommended by \citet[p.~4]{deluca2018glm} as it produces lower RMSE for the WALS GLM estimator in their Monte Carlo simulations \citep[see Table 3 in][p.~11]{deluca2018glm} compared to using the estimates of the fully restricted model as starting values. It further ensures that \eqref{eq:approxGamma2} approximately holds for the one-step estimators of the auxiliary regression coefficients, which I exploit in the Bayesian estimation step to reduce the $k_2$-dimensional posterior mean estimation to $k_2$ one-dimensional problems (see \Cref{sec:mavgNB}).

Finally, the Weibull prior for all WALS specifications uses the parameters recommended in \citet[p.~132]{magnus2016wals}, which are minimax regret solutions for the normal location problem. 

In order to compare the performance of the procedures, I follow the benchmark experiments framework of \citet[p.~681~f.]{hothorn2005design} and more specifically the `Simulation Problem'. The simulation is structured to emulate the typical use of the methods: For each scenario and run, a training sample of size $n$ is generated, where all procedures are applied and performance criteria are computed on an independently generated validation set that is fixed in size to $n_e = 4000$ to avoid any variation due to its size. I do not employ hypothesis tests to check if the performance differences are significant because the simulation experiment itself is already computationally intensive due to the large number of parameter settings.

I consider the RMSE as a classical precision measure for regression and additionally log, Brier and spherical scores to assess the distributional fit as described in \Cref{sec:score}. For the scoring rules, the average is taken over the validation sample as in \eqref{eq:scoreavg}. Further, I truncate the infinite sum in $|| \hat{p} ||$ from \eqref{eq:infsumsq} used in the Brier and spherical score at the count $r = 150$ because the response typically does not exceed 150 and it would be meaningless to extrapolate beyond the observed data. See \autoref{fig:histSimRun1} for a visualization of the training data of a single simulation run of the setting $k_1 = 10, k_2 = 100, \rho = 1, b = 0$ (this setting should maximize the range of $y$, since $k_1 = 10$ and $k_2 = 100$ allow for the largest possible means $\mu_i$ and $\rho = 1$ maximizes variance), where the response only ranges between 0 and 71.

In the following, only the scenarios highlighting the differences between the procedures are discussed. For more results, see the \refSup. 

\subsection{Varying the number of regressors}

First, I analyze how the procedures behave when the number of regressors is varied. In the following plots, the points represent the mean validation metric over all successful runs of the experiment, i.e.\ $R = 300$ if the method never fails to converge. The total number of failed runs is given below the corresponding points in the plots and the shaded area displays the interquartile range (the box of a boxplot) of the validation metric.

\begin{figure}[tbh]
	\centering
	\includegraphics[width=0.9\textwidth]{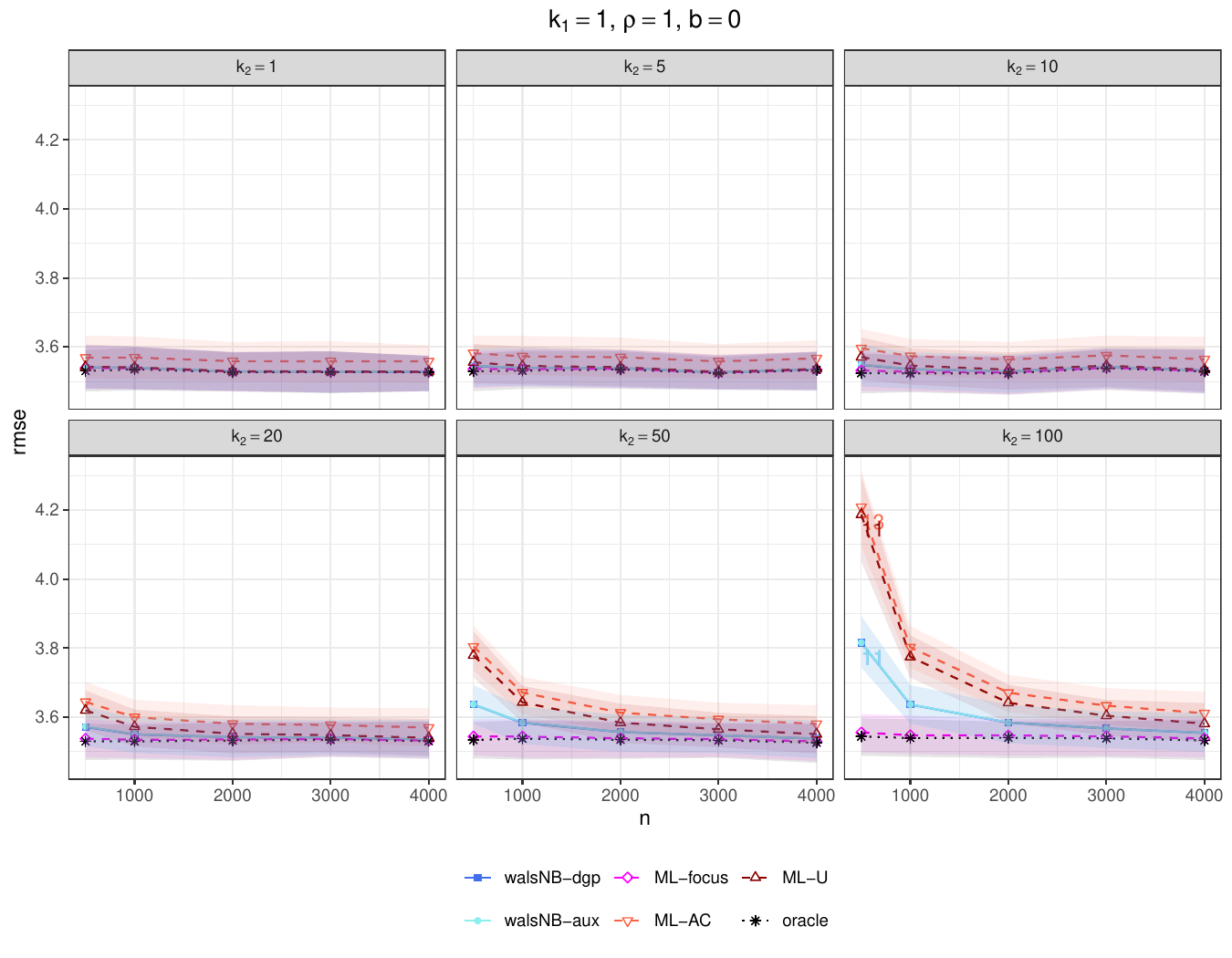}
	\caption{Mean validation RMSE and quartiles varying $n$ and $k_2$}\label{fig:simRMSEk2}
	\justifying
	\footnotesize
	\noindent
	The remaining parameters are fixed at $k_1 = 1, \rho = 1$ and $b = 0$. The shaded areas show the interquartile range. The number below a point indicates how often the method failed to converge in this particular setting.
\end{figure}

\autoref{fig:simRMSEk2} shows that walsNB-aux performs similarly to walsNB-dgp in terms of mean validation RMSE when we vary $k_2$ with fixed $k_1 = 1$, $\rho = 1$ and $b = 0$, because most of the regressors are auxiliary and the former includes all regressors as auxiliary. Moreover, both WALS NB specifications outperform ML-U on average when $k_2 \geq 20$ and $n$ is small ($n \leq 2000$). In fact, WALS NB specifications show lower mean validation RMSE than all ML specifications in these scenarios, except for ML-focus that contains only the focus regressors. For $k_2 = 100$ and $n < 2000$ the `typical' performance of walsNB-dgp and walsNB-aux is also better than ML-U and ML-AC as their interquartile ranges do not overlap. On the other hand, when $k_2$ is small and/or $n$ is large, their interquartile ranges are similar. The largest difference in mean RMSE is observed at $n = 500$, $k_1 = 1$, $k_2 = 100$ where walsNB-aux exhibits around 8.9\% lower mean RMSE than ML-U. ML-focus performs the best in all scenarios, especially when $k_2$ is large and $n$ small.

Therefore, if we know the focus regressors, then ML-focus yields the best fit in very sparse situations with few observations. Otherwise, walsNB-dgp and walsNB-aux are better than using all regressors in the large regression model ML-U. The outperformance in walsNB-dgp and walsNB-aux compared to ML-U is likely due to the reduced variance thanks to the Bayesian regularization step, which typically reduces variance and leads to lower RMSE via the bias-variance trade-off. In reality it is unlikely that we can exactly identify which regressors are the focus regressors, so walsNB-aux offers a great alternative that does not require variable selection. In all scenarios, it performs at least as well as ML-U but better when the data is sparse and few observations are available. For large $n$ or small $k_2$, all procedures fit the data equally well as their mean validation RMSE converges to the RMSE of the oracle.

\begin{figure}[tbh]
	\centering
	\includegraphics[width=0.9\textwidth]{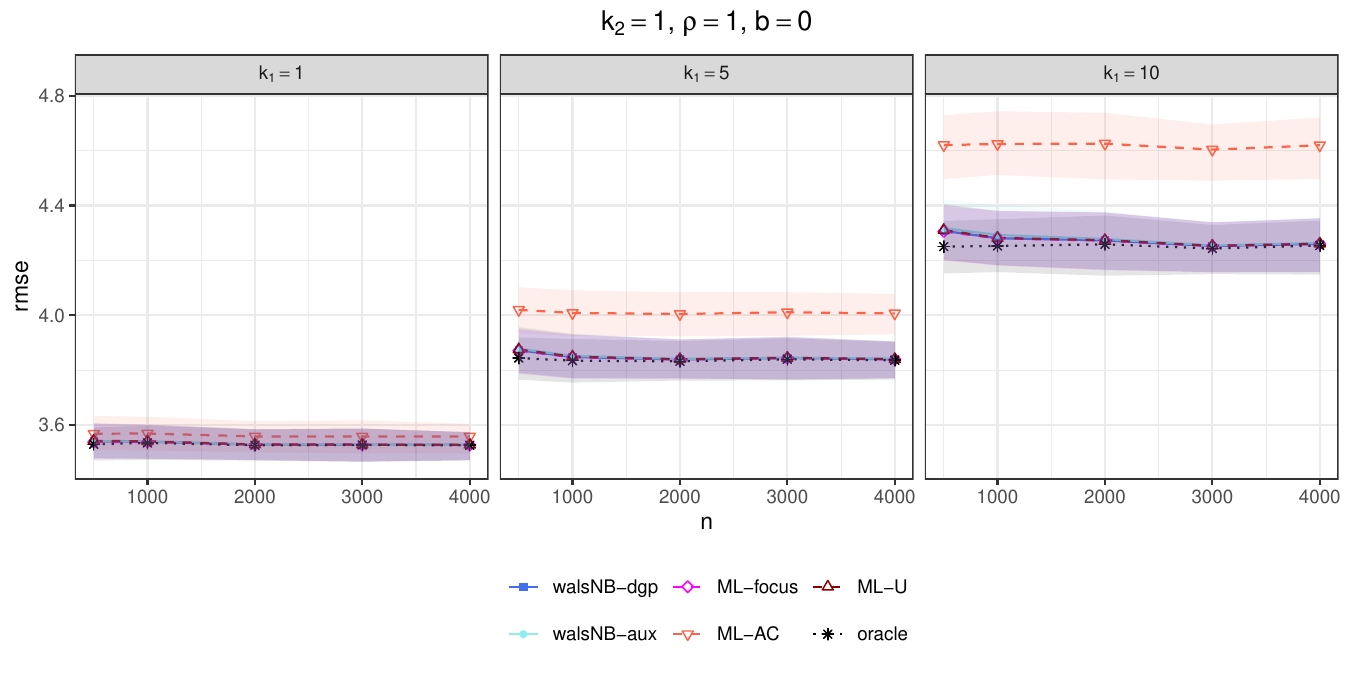}
	\caption{Mean validation RMSE and quartiles varying $n$ and $k_1$}\label{fig:simRMSEk1}
		\justifying \footnotesize \noindent The remaining parameters are fixed at $k_2 = 1, \rho = 1$ and $b = 0$. The shaded areas show the interquartile range.
\end{figure}

When I vary $k_1$ with fixed $k_2 = 1$, $\rho = 1$ and $b = 0$ in \autoref{fig:simRMSEk1}, the picture changes. In all scenarios, ML-AC returns the highest RMSE and the remaining specifications perform similarly as their interquartile ranges overlap. Increasing $k_1$ shifts the `RMSE-curve' up for all procedures, including the oracle, while retaining their relative order. This behavior is explained by the form of the variance of the NB2 distribution in \eqref{eq:nb2meanvar}. The more focus regressors with large coefficients are included, the more likely it is that the conditional expectation $\mu_i$ is large, which increases the variation of the response $y_i$ since the conditional variance is monotonically increasing in $\mu_i$. Thus, even if we could exactly estimate the true $\beta_1$ and $\beta_2$, the RMSE would still increase due to the increased conditional variance, which is demonstrated by the behavior of the oracle.

\begin{figure}[tbh]
	\centering
	\includegraphics[width=0.9\textwidth]{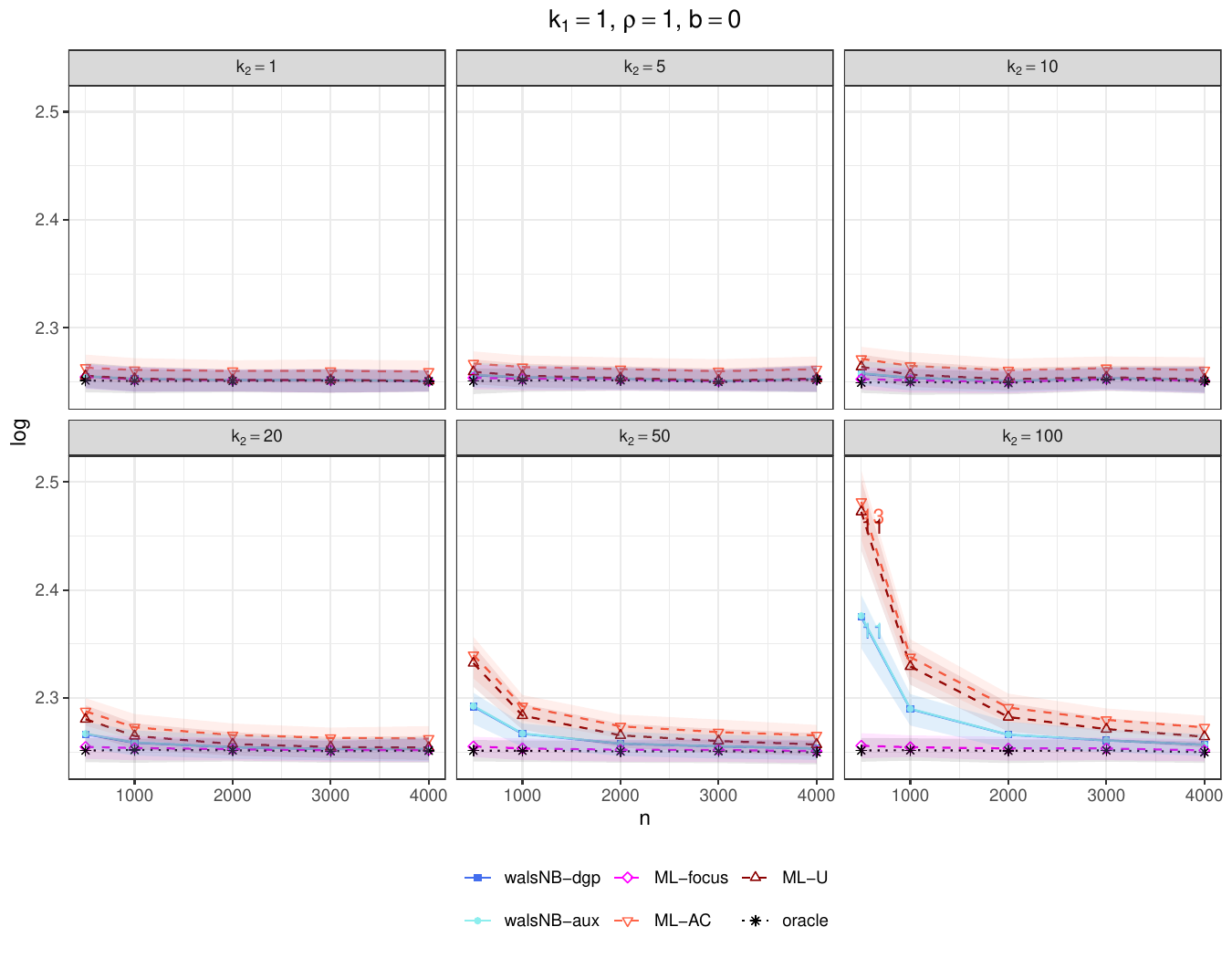}
	\caption{Mean validation log score and quartiles varying $n$ and $k_2$}\label{fig:simLogk2}
	\justifying \footnotesize \noindent
	The remaining parameters are fixed at $k_1 = 1, \rho = 1$ and $b = 0$. The shaded areas show the interquartile range. The number below a point indicates how often the method failed to converge in this particular setting.
\end{figure}

The same patterns hold for the validation log score. Firstly, \autoref{fig:simLogk2} shows that WALS NB specifications generally perform better than ML specifications in terms of log score, when the number of auxiliary regressors is high compared to the number of focus regressors and few observations are available. The exception is again ML-focus, which performs the best across all scenarios. The largest difference in mean log score between walsNB-aux and ML-U is realized at $k_1 = 1$, $k_2 = 100$ and $n = 500$ where the mean log score of walsNB-aux is around 3.9\% lower. Moreover, the typical performance of walsNB-aux is also better than ML-U in this scenario as their interquartile ranges do not overlap. For large $n$ the distributional fit of all models is similar because the mean log scores converge to the log score of the oracle.

\begin{figure}[tbh]
	\centering
	\includegraphics[width=0.9\textwidth]{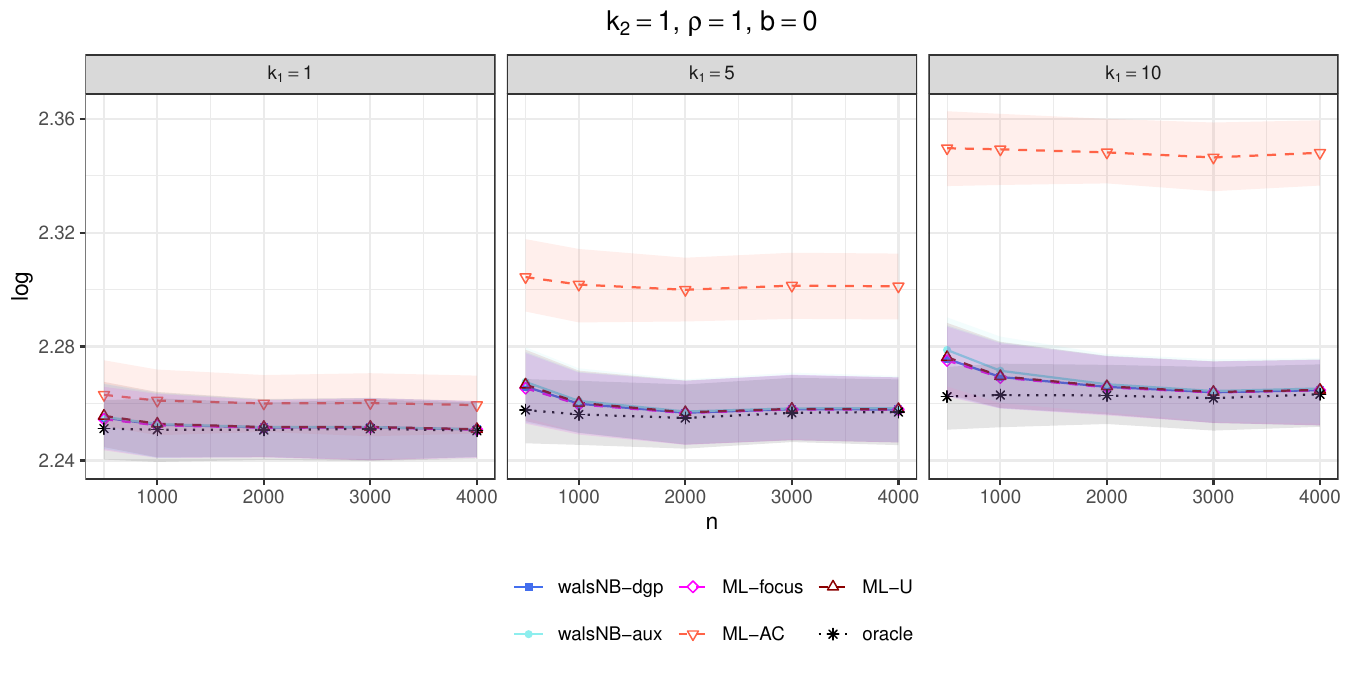}
	\caption{Mean validation log score and quartiles varying $n$ and $k_1$}\label{fig:simLogk1}
	\justifying \footnotesize \noindent
	The remaining parameters are fixed at $k_2 = 1, \rho = 1$ and $b = 0$. The shaded areas show the interquartile range.
\end{figure}

Secondly, similar to the results for RMSE in \autoref{fig:simRMSEk1}, I find a small upwards shift of the mean validation log scores in \autoref{fig:simLogk1} when increasing $k_1$ given $k_2 = 1$.
As expected, the distributional fit of ML-AC, which only includes the auxiliary regressors, is the worst among the procedures when $k_1 \geq 5$. Finally, the interquartile range of all models except ML-AC overlap, so their performance in terms of log score typically does not differ. The relative ranking of the procedures for the Brier and spherical score is the same as for the log score, so their results are only shown in the \refSup.

In summary, the WALS NB specifications generally outperform ML-U in terms of RMSE and log score when the number of auxiliary regressors is very large relative to the number of focus regressors and when the number of observations is small. This is in line with the results from \citet{abadie2019regularized} for the pretest estimator, which is the predecessor of the WALS estimator: The authors consider a `Spike and Normal' process for noisy estimates $\hat{\mathcal{X}}_1, \hat{\mathcal{X}}_2, \dotsc, \hat{\mathcal{X}}_k$ of e.g.\ the coefficients from a linear regression model \citep[p.~746]{abadie2019regularized}: The estimates $\hat{\mathcal{X}}_j$ are assumed to follow $\hat{\mathcal{X}}_j \sim \normal(m_j, s_j^2)$ for $j = 1, 2, \dotsc, k$, e.g.\ $\hat{\mathcal{X}}_j$ are elements of the ordinary least squares (OLS) estimator in a linear regression model with homoskedastic normal error terms. In this setup, the mean $m_j$ can be regarded as the true value of the regression coefficient that is estimated as $\hat{\mathcal{X}}_j$. The idea is that regularized estimators such as lasso, ridge, and pretest modify the OLS estimator $\hat{\mathcal{X}}_j$. Furthermore, the mean $m_j$ is set to zero (spike) with a fixed probability $p$, and with probability $1-p$ the coefficient follows $m_j \sim \normal(m_0, s_{0}^2)$ for all $j$. Under this setting, the authors show that the pretest estimator exhibits smaller integrated risk (integrated expected squared error over the space of distributions of the data distribution, see \citet[p.~745~f.]{abadie2019regularized} for details) than lasso and ridge, when the process is very sparse, i.e.\ $p$ is high and $m_0$ is large, so many coefficients are set to zero and the non-zero coefficients are far away from zero. The results further agree with the Monte Carlo simulations of \citet{deluca2023interval} for WALS in the linear regression model: The authors find that the ratio of the MSE of OLS relative to the MSE of WALS increases when the number of auxiliary regressors $k_2$ becomes larger. Moreover, for all $k_2$, the ratio decreases when the sample size $n$ increases. Both observations are in line with \autoref{fig:simRMSEk2}, where walsNB-dgp dominates in terms of (R)MSE compared to the unrestricted estimator ML-U when $n$ is small and $k_2$ is large.

\subsection{Varying $\rho$ and $b$}

I fixed $k_1 = k_2 = 5$ so that varying the correlation $b$ affects the correlation within focus and auxiliary regressors, as well as the correlation between focus and auxiliary regressors. In contrast, if I had set $k_1 = k_2 = 1$, only the correlation between focus and auxiliary regressors would be modified.

\begin{figure}[tbh]
	\centering
	\includegraphics[width=0.87\textwidth]{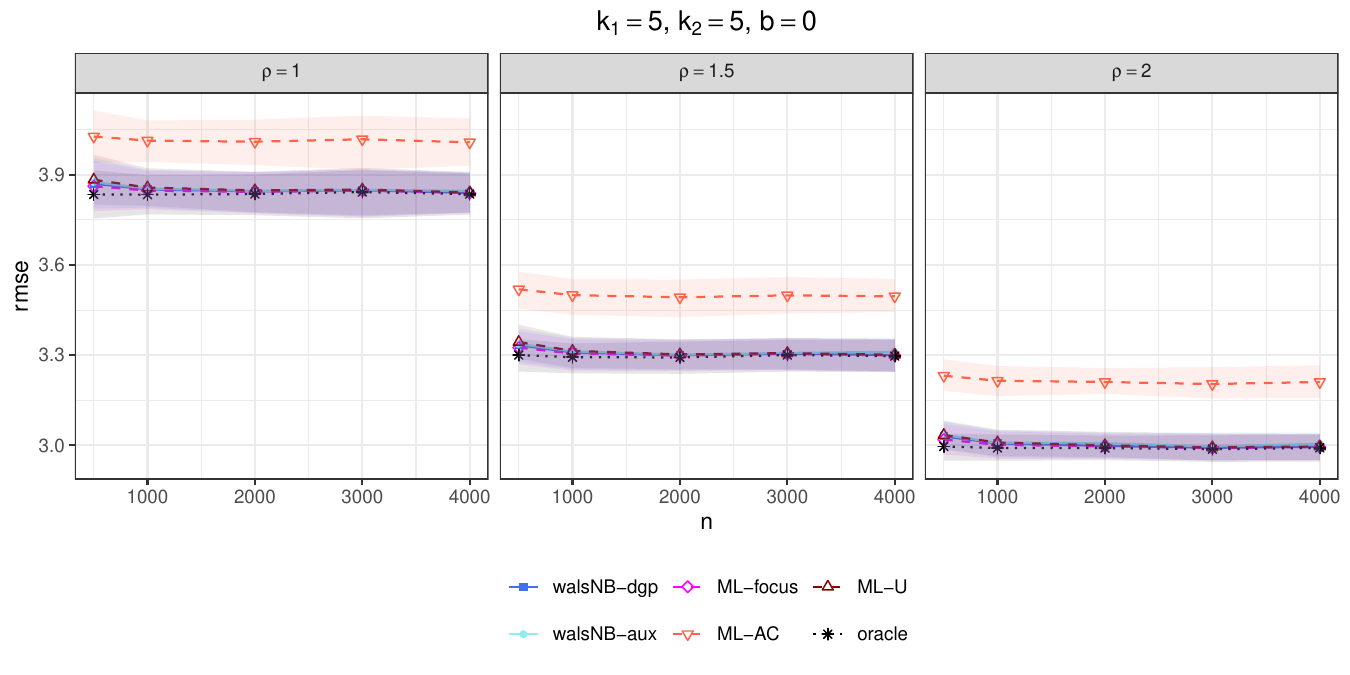}
	\caption{Mean validation RMSE and quartiles varying $n$ and $\rho$}\label{fig:simRMSErho}
	\justifying \footnotesize \noindent The remaining parameters are fixed at $b = 0$ and $k_1 = k_2 = 5$. The shaded areas show the interquartile range.
\end{figure}

\autoref{fig:simRMSErho} shows that for fixed $b = 0$ and $k_1 = k_2 = 5$, all procedures yield similar mean validation RMSE across all $\rho$, except for ML-AC, which exhibits much higher values compared to the other methods.
Note that the mean validation RMSE generally decreases for all procedures, even the oracle, when $\rho$ increases. This is due to the fact that higher $\rho$ leads to less overdispersion, i.e.\ lower conditional variance, resulting in lower RMSE for all procedures.

\begin{figure}[tbp]
	\centering
	\includegraphics[width=0.87\textwidth]{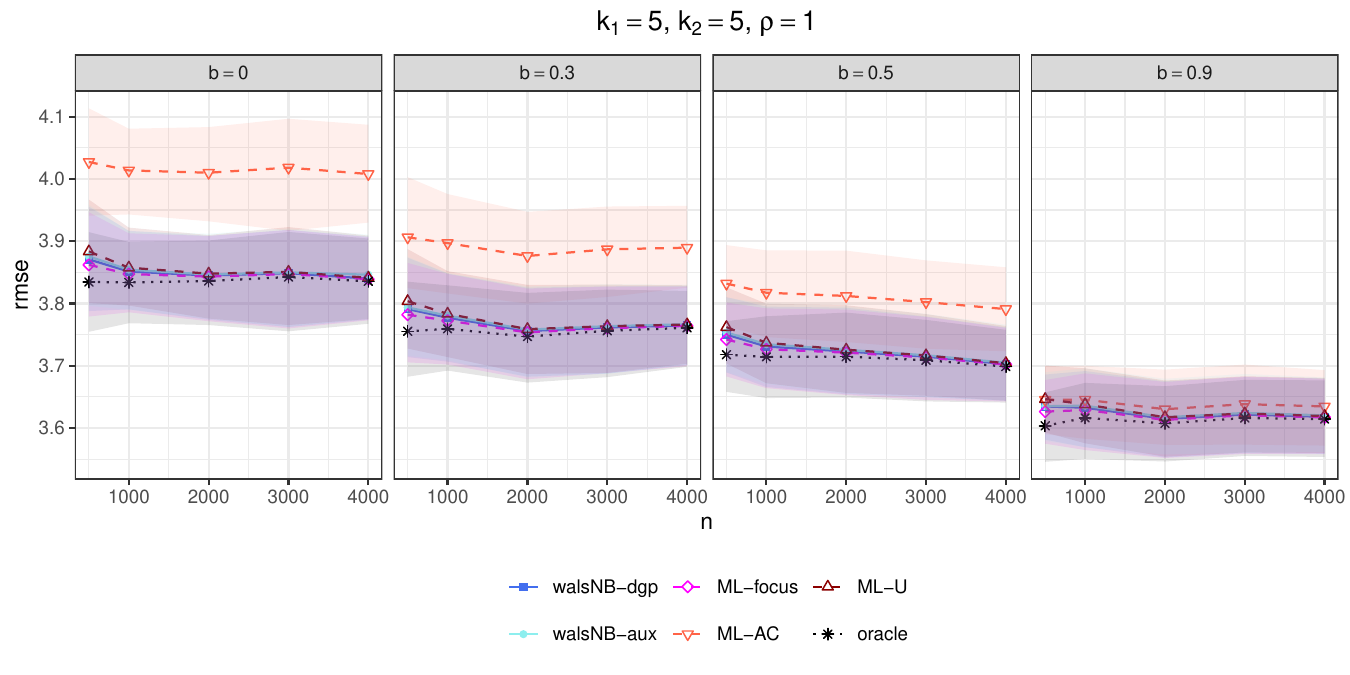}
	\caption{Mean validation RMSE and quartiles varying $n$ and $b$}\label{fig:simRMSEcorr}
	\justifying \footnotesize \noindent The remaining parameters are fixed at $\rho = 1$ and $k_1 = k_2 = 5$. The shaded areas show the interquartile range.
\end{figure}

Increasing the correlation $b$ between all regressors for fixed $\rho = 1$ and $k_1 = k_2 = 5$ in \autoref{fig:simRMSEcorr}, the mean validation RMSE shifts down for all procedures, especially for ML-AC as it only includes the auxiliary regressors and a constant. The larger the correlation, the better it can compensate the lack of focus regressors. Generally, the choice of regressors matters less for prediction when the regressors are highly correlated as each of them will contain similar information for the prediction task. The remaining procedures perform very similarly when increasing $b$ and converge to the mean validation RMSE of the oracle for large $n$. Except for ML-AC in the cases with $b < 0.9$, the typical RMSE of the procedures are comparable as the interquartile ranges overlap and have similar widths in all scenarios.

The results for the mean validation log score when varying $\rho$ and $b$ with fixed $k_1 = k_2 = 5$ in \autoref{fig:simLogRho} and \autoref{fig:simLogCorr} are qualitatively the same as for the mean validation RMSE. Interestingly, I also observe a downward shift in the mean validation log score for all procedures and $n$ when I increase $\rho$.
The argument used to explain the downward shift for the mean validation RMSE, namely that the variance around the conditional mean is lower the higher $\rho$, does not hold anymore since less overdispersion does not necessarily lead to lower log scores. Intuitively, less overdispersion leads to less variation around the conditional mean that could allow for a more precise estimation of the conditional mean, resulting in an improved distributional fit and, hence, a lower log score.

Finally, \autoref{fig:simLogCorr} shows the mean validation log scores varying $n$ and the correlation $b$. Similar to the RMSE, the mean validation log scores generally decrease across all $n$, when $b$ is increased. The reduction is especially large for ML-AC due to the same reasons as for the RMSE in \autoref{fig:simRMSEcorr}. The remaining procedures perform very similarly: Their mean validation log scores are similar and converge to the oracle when $n$ is large and their interquartile ranges overlap.

The relative ranking of the procedures for the Brier and spherical score is similar to that for the log score, so their plots are only shown in the \refSup.

\begin{figure}[tbh]
	\centering
	\includegraphics[width=0.87\textwidth]{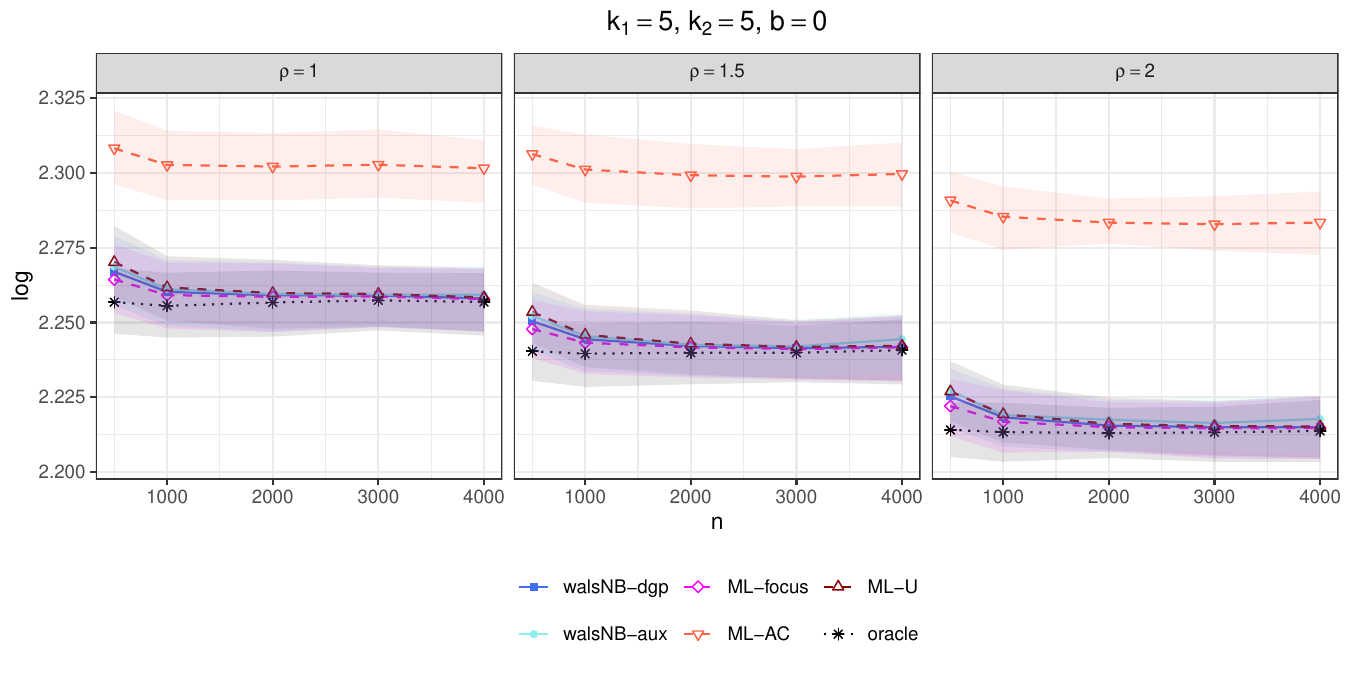}
	\caption{Mean validation log score and quartiles varying $n$ and $\rho$}\label{fig:simLogRho}
	\justifying \footnotesize \noindent The remaining parameters are fixed at $b = 0$ and $k_1 = k_2 = 5$. The shaded areas show the interquartile range.
\end{figure}

\begin{figure}[tbh]
	\centering
	\includegraphics[width=0.87\textwidth]{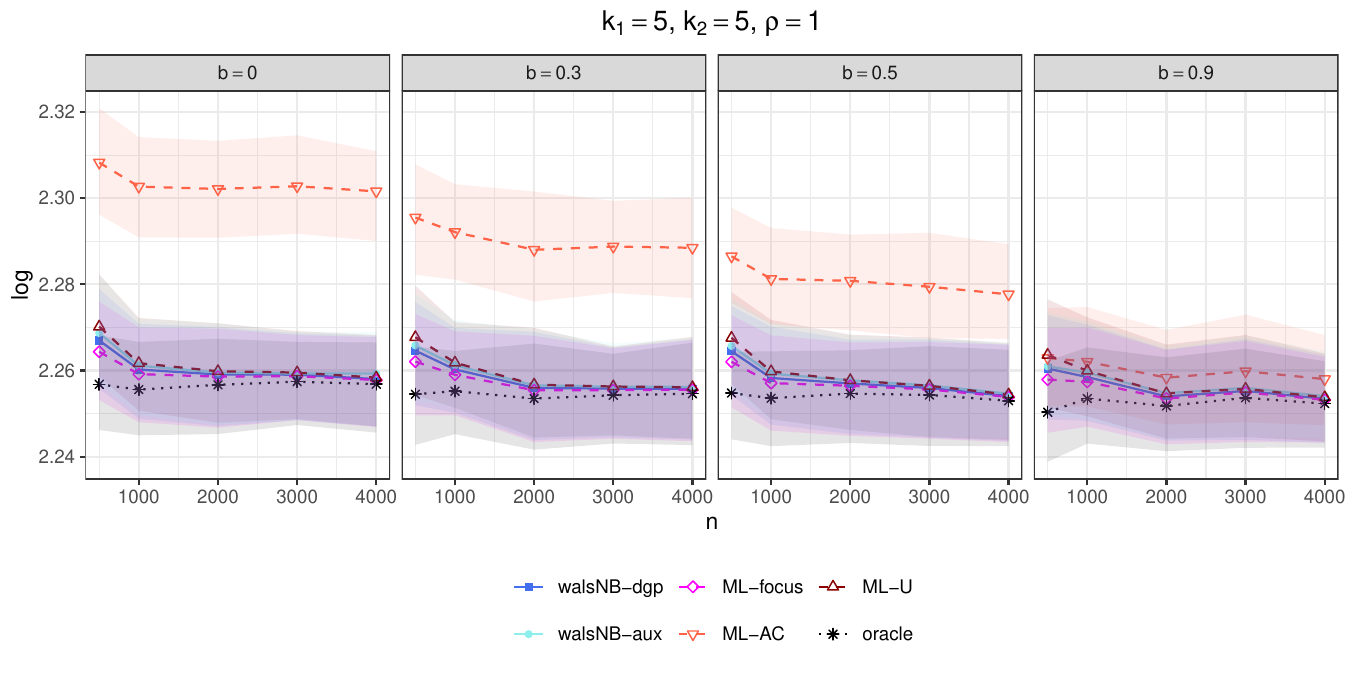}
	\caption{Mean validation log score and quartiles varying $n$ and $b$}\label{fig:simLogCorr}
	\justifying \footnotesize \noindent The remaining parameters are fixed at $\rho = 1$ and $k_1 = k_2 = 5$. The shaded areas show the interquartile range.
\end{figure}

\clearpage

\section{Empirical illustration}\label{sec:cvexpvar}

The aim of the empirical illustration is to compare the predictive performance of WALS NB with ML and lasso estimation of the NB2 regression model on real data, and to check whether the observations from the simulation experiment translate to a real-world application.

I use the cross-sectional data set called `DoctorVisits', which derives from the 1977-1978 Australian Health Survey and was analyzed in \citet{cameron1986doctor} and \citet{mullahy1997hetero}. The dataset contains $n = 5190$ observations from individuals over 18 years of age on twelve variables, including the response \texttt{visits}, which describes the number of doctor visits in a two-week period before the interview. It further provides explanatory variables such as income and age, as well as health-related variables like recent illnesses and health insurance coverage. The data are available via the \proglang{R} package \pkg{AER} \citep{kleiber2008aer} as \texttt{DoctorVisits} based on the original from the Journal of Applied Econometrics Data Archive.\footnote{\url{https://www.journaldata.zbw.eu/dataset/heterogeneity-excess-zeros-and-the-structure-of-count-data-models}}

\begin{figure}[tbh]
	\centering
	\includegraphics[width=0.75\textwidth]{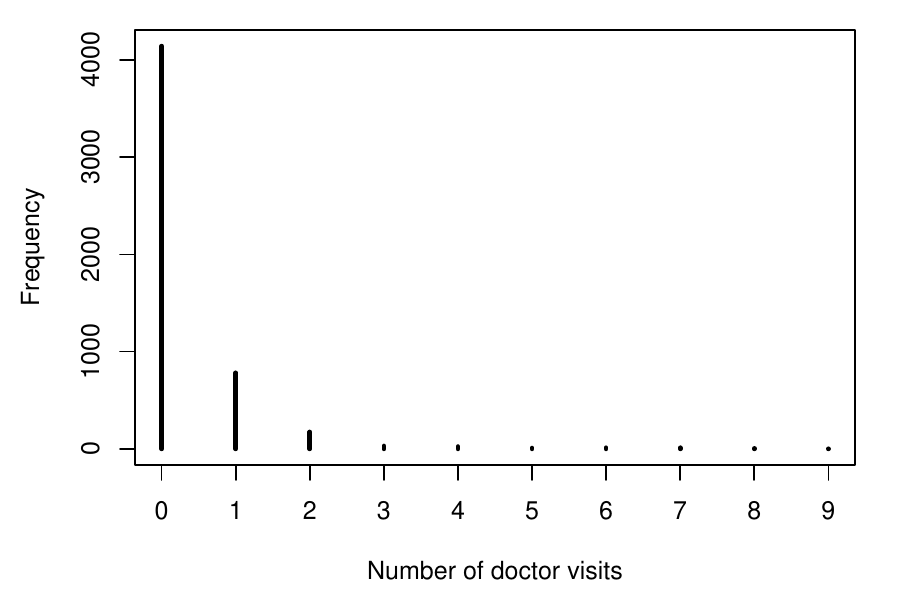}
	\caption{Visualization of \texttt{visits} in DoctorVisits}\label{fig:histDV}
\end{figure}

\autoref{tab:descDV} and \ref{tab:sumstatsDV} in \Cref{sec:appcvexpvar} provide a description and summary statistics of the variables in the DoctorVisits dataset. Further, \autoref{fig:histDV} shows a visualization of the response \texttt{visits}, which clearly exhibits overdispersion and will be modeled using regression models for count data. For the computation of the Brier and spherical score, I truncate the infinite sum in $||\hat{p}||$ from \eqref{eq:infsumsq} at the largest observed count of the dataset, which is 9.

Inspired by the applications of \citet[p.~2~f.]{rupp2012mlatom} and \citet[p.~164~f.]{faber2020qml} in quantum chemistry, I apply $K$-fold cross-validation (CV) to produce `learning curves' that allow me to compare the performance of the procedures for different sizes of the training set. \autoref{alg:cvlearn} illustrates the process for generating a $K$-fold cross-validated learning curve for any evaluation metric and procedure.

\begin{algorithm}[tbh]
	\caption{$K$-fold cross-validated learning curves}\label{alg:cvlearn}
	\begin{enumerate}
		\item Randomly split dataset $\mathcal{D} := \{ (y_{i}, x_{i}) \}_{i = 1, 2, \dotsc, n}$ into $K$ parts $\mathcal{D}_1, \mathcal{D}_2, \dotsc, \mathcal{D}_K$ of roughly the same size (see implementation of \texttt{cv()} of \pkg{mboost} \citep{mboost2014} for more details, size of last partition will be smaller if $n/K$ is not an integer). Then, the training and validation set $\mathcal{T}_k$ and $\mathcal{V}_k$ for each fold $k = 1, 2, \dotsc, K$ are defined as
		$$
			\mathcal{T}_k := \{ \mathcal{D}_{j} : j \neq k \}, \quad \mathcal{V}_{k} := \mathcal{D}_k.
		$$
		Further, let $\tau_{k}: \{1, 2, \dotsc, |\mathcal{T}_k| \} \rightarrow \{1, 2, \dotsc, n\}$ be an indexing function that maps the index of the observations of $\mathcal{T}_k$ to the original dataset $\mathcal{D}$.

		\item Specify the grid for the number of training observations $t = (t_1, t_2, \dotsc, t_L)$ with $t_L \leq t_{max}$, where $t_{max} = |\mathcal{D}| - \max_{k} | \mathcal{D}_k |$ is the largest possible size of the training set.

		\item For $l = 1, 2, \dotsc, L$:
		\begin{enumerate}
			\item For procedure $m = 1, 2, \dotsc, M$:
			\begin{enumerate}
				\item For $k = 1, 2, \dotsc, K$:
				\begin{enumerate}
					\item Fit and tune procedure $m$ on data $\mathcal{T}_{l,k} := \{(y_{\tau_{k}(r)}, x_{\tau_{k}(r)})\}_{r = 1, 2, \dotsc, t_{l}}$.
					\item Compute validation metric $\hat{s}_{l,m,k}$ on $\mathcal{V}_k$.
				\end{enumerate}
				\item Output the cross-validated metric for training size $t_l$: $\hat{s}_{l,m} = \frac{1}{K} \sum_{k = 1}^{K} \hat{s}_{l,m,k}$.
			\end{enumerate}
		\end{enumerate}
		\item The learning curve for each $m = 1, 2, \dotsc, M$ plots $\hat{s}_{l,m}$ against $t_{l}$ for $l = 1, 2, \dotsc, L$.
	\end{enumerate}
\end{algorithm}
Note that only the training sets $\mathcal{T}_{l,k}$ vary in size but the validation sets $\mathcal{V}_{k}$ remain the same. Similar to \citet[p.~398]{meek2002learncurve}, the training sets $\mathcal{T}_{l, k}$ are nested, i.e.\ $\mathcal{T}_{l, k} \subset \mathcal{T}_{l + 1, k}$ for $l = 1,2 \dotsc, L-1$. For all experiments below, I use $K = 10$ folds.

I compare procedures that differ in the estimator and specification of the mean, where the choices for the latter are inspired by the applications in \citet[p.~46~ff.]{cameron1986doctor}. The different combinations of estimator and specification are named following the pattern: `estimator-specification'. Again, `walsNB' and `ML' represent the WALS NB and ML estimator, respectively, while `lasso' estimates the NB2 regression model using the lasso estimator of \citet{wang2016pencount} (see \Cref{sec:software} for details). I consider a total of six estimator-specification combinations:
\begin{enumerate}
	\item walsNB-main: Includes all covariates linearly as auxiliary regressors and only a constant as focus regressor.
	\item walsNB-main-focus: Includes all covariates and a constant linearly as focus regressors, a quadratic term for age and two-way interactions between health and gender, health and age, health and income, and finally gender and illness as auxiliary regressors.
	\item walsNB-int: Includes all regressors of walsNB-main-focus (including interactions) as auxiliary and only a constant as focus regressor.
	\item ML-main: Includes all covariates linearly and a constant and hence uses the same regressors as walsNB-main.
	\item ML-int: Includes all regressors of ML-main but adds a quadratic term for age and two-way interactions between health and gender, health and age, health and income, and finally gender and illness. Counterpart of walsNB-main-focus and walsNB-int.
	\item lasso-int: Includes all covariates linearly, a quadratic term for age and two-way interactions between health and gender, health and age, health and income, and finally gender and illness \textit{in the fitting process}. Depending on the choice of the regularization parameter, not all the aforementioned regressors have to be included in the final model. In contrast, a constant is always included.
\end{enumerate}

All procedures are fitted using a log-link for the mean parameter. WALS NB procedures use the Laplace prior because the Weibull led to numerical instabilities in some small subsamples resulting from the numerical integration procedure required for computing the posterior mean of the auxiliary regression coefficients in \eqref{eq:gammaest}. The parameters of the Laplace prior are taken from \citet[p.~132]{magnus2016wals}, which are minimax regret solutions for the normal location problem. The remaining settings for WALS NB and ML specifications are retained from the simulation experiment of \Cref{sec:sim}.
Notably, all WALS NB specifications use the unrestricted ML estimator for NB2 as starting values for the regression coefficients and the dispersion parameter. By unrestricted, I refer to the unrestricted model given the covariates that are included in the specification, i.e.\ ML-main for walsNB-main and ML-int for, both, walsNB-main-focus and walsNB-int.

The lasso specification `lasso-int' performs tuning (maximizing 10-fold CV log-likelihood) in the training set $\mathcal{T}_{l,k}$ of each fold $k$ (and each training set size $t_l$), as recommended by \citet[p.~679]{hothorn2005design} who include tuning and final model fit in the estimation process. This is sensible, as tuning of the regularization parameter is key to the performance of lasso. Different values of the regularization parameter correspond to different levels of regularization and the regressors included in the model may also differ.
Moreover, the method also standardizes the regressors in the training set of each fold to have zero mean and unit variance before estimation (i.e., it uses the estimated means and variances of the regressors in the subsample and not over the entire dataset).

In \autoref{fig:DVlearnrmse} and \autoref{tab:DVlearnrmse} we observe that all WALS NB specifications except for walsNB-main-focus outperform the ML specifications in terms of RMSE for all numbers of training observations. The differences are particularly large for small training sets, e.g.\ for $t_{l} = 500$ the CV RMSE of walsNB-int is almost 19\% smaller than ML-int. Except for $t_{l} < 1500$, walsNB-int and walsNB-main also outperform the lasso specification lasso-int. Further, note that walsNB-int outperforms walsNB-main-focus although the only difference between the two procedures is that the latter specifies some of the covariates as focus regressors.
This observation seems to contradict the results of the simulation experiment, where walsNB-dgp and walsNB-aux perform very similarly even though the latter considers all covariates as auxiliary regressors and the former considers part of them as focus regressors. However, walsNB-dgp chooses the same focus regressors as the DGP of the simulation, which is unlikely in empirical applications.

\begin{table}[tbh]
	\caption{10-fold CV RMSE varying $t_l$, DoctorVisits}\label{tab:DVlearnrmse}
	\begin{center}
		\begin{threeparttable}
		\begin{footnotesize}
			
\begin{tabular}{@{\extracolsep{5pt}} lccccccccc} 
\\[-1.8ex]\hline 
\hline \\[-1.8ex] 
 Training obs. $t_{l}$ & 500 & 1000 & 1500 & 2000 & 2500 & 3000 & 3500 & 4000 & 4671 \\ 
\hline \\[-1.8ex] 
walsNB-main & $0.838$ & $0.886$ & $0.807$ & $0.756$ & $0.747$ & $0.746$ & $0.750$ & $0.756$ & $0.760$ \\ 
walsNB-main-focus & $1.012$ & $1.068$ & $0.893$ & $0.789$ & $0.768$ & $0.765$ & $0.767$ & $0.774$ & $0.780$ \\ 
walsNB-int & $0.837$ & $0.864$ & $0.797$ & $0.753$ & $0.745$ & $0.742$ & $0.743$ & $0.746$ & $0.752$ \\ 
ML-main & $1.013$ & $0.950$ & $0.873$ & $0.794$ & $0.771$ & $0.766$ & $0.774$ & $0.781$ & $0.785$ \\ 
ML-int & $1.032$ & $0.956$ & $0.885$ & $0.792$ & $0.769$ & $0.762$ & $0.764$ & $0.770$ & $0.776$ \\ 
lasso-int & $0.791$ & $0.826$ & $0.831$ & $0.772$ & $0.761$ & $0.753$ & $0.759$ & $0.764$ & $0.776$ \\ 
\hline \\[-1.8ex] 
\end{tabular} 

			\end{footnotesize}
		\begin{tablenotes}
			\footnotesize
			\item[--] All figures rounded to three decimal places.
		\end{tablenotes}
		\end{threeparttable}
	\end{center}
	\end{table}

\begin{figure}[tbh]
	\centering
	\includegraphics[width=0.81\textwidth]{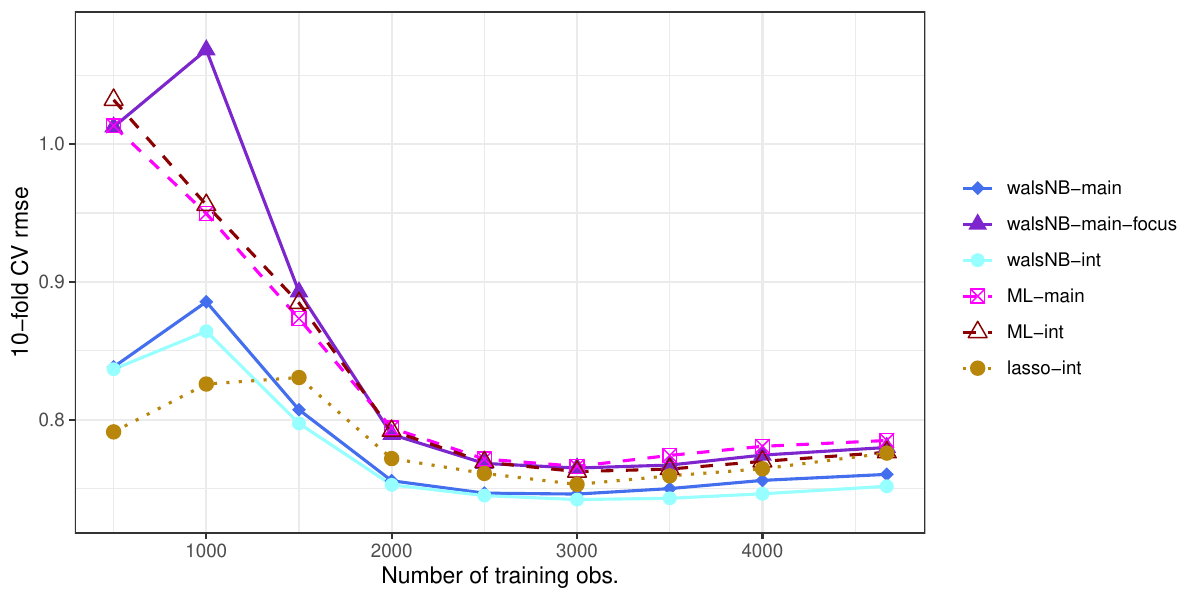}
	\caption{10-fold CV RMSE varying $t_l$, DoctorVisits}\label{fig:DVlearnrmse}
\end{figure}

The CV log scores in \autoref{fig:DVlearnlog} and \autoref{tab:DVlearnlog} show that all procedures perform similarly in terms of the distributional fit. Moreover, the curves decrease as I increase the number of training observations and flatten at about 2000 observations. This shows that the methods are able to `learn' more (i.e.\ improve the fit in terms of log score), when more training observations are available but stop `learning' at some point, i.e.\ when the curves flatten.

The other metrics for distributional fit, Brier and spherical score, show qualitatively similar results but the curves are flatter, hence the results are only shown in the \refSup. This further underlines that the distributional fit of the methods does not improve drastically when the dataset becomes larger. 

Note that WALS NB specifications are computationally less demanding than lasso, while performing similarly in terms of CV RMSE and log score. They do not require any tuning unlike lasso, which performs an `internal' 10-fold CV to choose the optimal regularization parameter. Consequently, the fitting times of WALS NB are typically shorter than those of lasso and competitive with the ML specifications. Of course, one may change the parameters of the fitting algorithm of lasso to improve the computing time. However, it should not result in better performance metrics as the current setup already favors lasso: It allows many iterations in the fitting algorithm and, thus, a high chance for convergence.

\begin{table}[tbh]
	\caption{10-fold CV log score varying $t_l$, DoctorVisits}\label{tab:DVlearnlog}
	\begin{center}
		\begin{threeparttable}
		\begin{footnotesize}
			
\begin{tabular}{@{\extracolsep{5pt}} lccccccccc} 
\\[-1.8ex]\hline 
\hline \\[-1.8ex] 
 Training obs. $t_{l}$ & 500 & 1000 & 1500 & 2000 & 2500 & 3000 & 3500 & 4000 & 4671 \\ 
\hline \\[-1.8ex] 
walsNB-main & $0.639$ & $0.628$ & $0.623$ & $0.622$ & $0.622$ & $0.622$ & $0.621$ & $0.621$ & $0.620$ \\ 
walsNB-main-focus & $0.642$ & $0.629$ & $0.624$ & $0.622$ & $0.621$ & $0.621$ & $0.621$ & $0.620$ & $0.620$ \\ 
walsNB-int & $0.642$ & $0.629$ & $0.626$ & $0.625$ & $0.623$ & $0.623$ & $0.623$ & $0.622$ & $0.621$ \\ 
ML-main & $0.640$ & $0.627$ & $0.622$ & $0.621$ & $0.620$ & $0.621$ & $0.620$ & $0.620$ & $0.620$ \\ 
ML-int & $0.644$ & $0.630$ & $0.626$ & $0.624$ & $0.622$ & $0.622$ & $0.622$ & $0.621$ & $0.621$ \\ 
lasso-int & $0.637$ & $0.626$ & $0.623$ & $0.622$ & $0.622$ & $0.622$ & $0.621$ & $0.620$ & $0.620$ \\ 
\hline \\[-1.8ex] 
\end{tabular} 

			\end{footnotesize}
		\begin{tablenotes}
			\footnotesize
			\item[--] All figures rounded to three decimal places.
		\end{tablenotes}
		\end{threeparttable}
	\end{center}
\end{table}

\begin{figure}[tbh]
	\centering
	\includegraphics[width=0.81\textwidth]{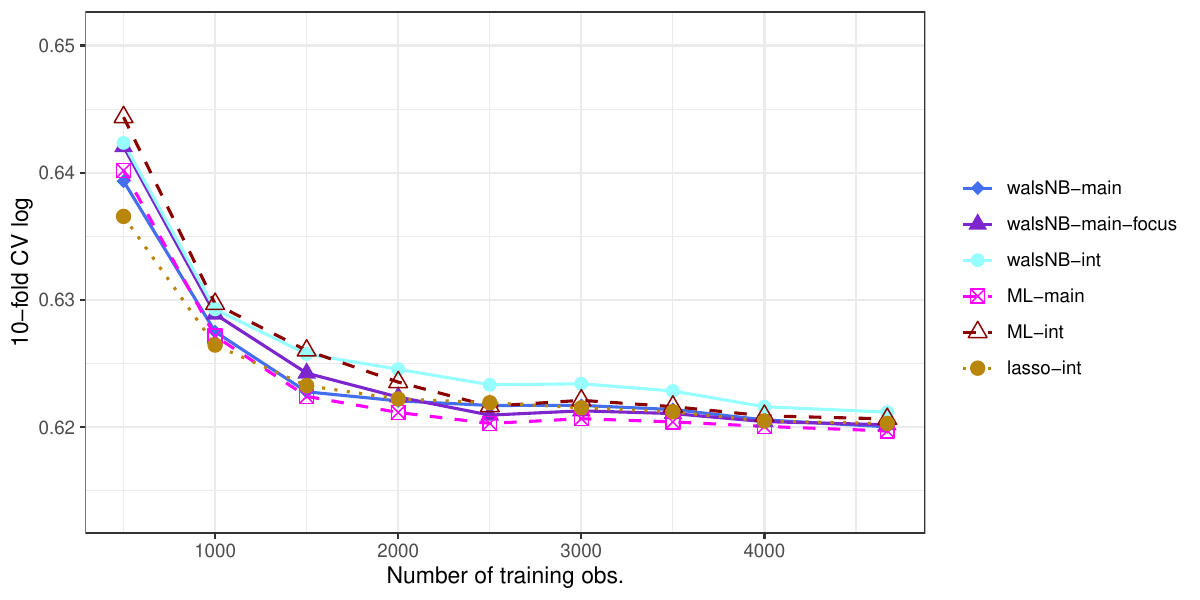}
	\caption{10-fold CV log score varying $t_l$, DoctorVisits}\label{fig:DVlearnlog}
\end{figure}

In conclusion, all WALS NB specifications, except for walsNB-main-focus, perform better than the ML specifications in terms of RMSE, while metrics for the distributional fit such as log, Brier and spherical score are similar or minimally worse than the ML specifications. Moreover, the RMSE is similar or slightly lower than for lasso at large $t_{l}$, while demanding less computational resources as WALS NB does not require tuning by CV.

\section{Conclusion}

This paper extends the WALS approach to NB2 regression models (WALS NB) for count data based on WALS GLM of \citet{deluca2018glm} and compares its predictive performance to the traditional ML and lasso estimator in simulated and real count datasets using the classical measure RMSE and strictly proper scoring rules.

In the simulation experiment, WALS NB outperforms the ML estimator in very sparse situations, i.e.\ where the number of auxiliary regressors is large and the number of training observations is small. When increasing the number of training observations, WALS NB and the unrestricted ML estimator converge in all performance metrics. Interestingly, whether WALS NB includes all regressors as auxiliary regressors or parts of them as focus does not change the results substantially. This shows that specifying all regressors as auxiliary is a reasonable choice if no prior information is available on the importance of the individual regressors. Moreover, it highlights that the regularized Bayesian estimation of the coefficients of the auxiliary regressors is key for the performance of WALS NB.

The empirical illustration emphasizes the results found in the simulation experiment: For small training sets, WALS NB using all covariates as auxiliary regressors outperforms all ML specifications in terms of RMSE while yielding a comparable distributional fit measured by strictly proper scores.
Only the lasso estimator yields lower RMSE for small training sets. However, it is more computationally demanding than WALS NB due to the additional 10-fold CV that is run for determining the optimal regularization parameter. This highlights an important advantage of WALS compared to other model averaging techniques: low computational costs. Moreover, WALS NB using all covariates as auxiliary regressors outperformed all other specifications of WALS NB. Thus, if only the predictive power is of concern, WALS NB is a viable alternative to established estimation methods for the NB2 regression model that is easy to specify (choose all regressors as auxiliary), regularized and computationally efficient.

For future research, it would be interesting to generalize WALS to hurdle or zero-inflation models to handle count data with excess zeros. Thus far, WALS has been limited to univariate response variables, therefore extending the methodology to multivariate outcomes would allow a larger variety of applications, such as joint modeling of related count processes. Lastly, an investigation of the large sample properties of WALS could improve our understanding of statistical inference after model averaging.

\section*{Acknowledgements}
\addcontentsline{toc}{section}{Acknowledgements}

The scientific computing center sciCORE (\url{https://scicore.unibas.ch/}) at the University of Basel provided me with valuable computing resources. I would also like to thank Christian Kleiber for our helpful discussions.

\stoplist[main]{lof}
\stoplist[main]{lot}


\startlist[appendix]{lof}
\startlist[appendix]{lot}
\clearpage
\addappheadtotoc

\section*{Appendix}\label{sec:app}
\appendix


\renewcommand{\thesubsection}{\Alph{subsection}}
\renewcommand\thefigure{\Alph{subsection}.\arabic{figure}}
\renewcommand\thetable{\Alph{subsection}.\arabic{table}}

\renewcommand{\theequation}{\Alph{subsection}.\arabic{equation}}
\renewcommand{\thetheorem}{\Alph{subsection}.\arabic{theorem}}
\renewcommand{\theass}{\Alph{subsection}.\arabic{ass}}
\counterwithin{equation}{subsection} 
\counterwithin{theorem}{subsection} 
\counterwithin{ass}{subsection} 

\setcounter{figure}{0}
\setcounter{table}{0}
\pagestyle{appendix}

\subsection{Proofs}\label{sec:proofs}

\begin{proof}[\textbf{Proof of \cref{prop:onestepML}}]

I start by rewriting the equation system from \eqref{eq:taylor} using the data transformations in \eqref{eq:trafos}. Notice that
\begin{align*}
	\bX_{p}^{\top} \bV (y - \bmu) &= \sumin \bv_i (y_i - \bmu_i)x_{ip}, \\
	\bX_{p}^{\top} \bX_{q} &= \sumin \bpsi_i x_{ip} x_{iq}^{\top},
\end{align*}
for  $p, q = 1, 2$, so the first equation of \eqref{eq:taylor} can be expressed as
\begin{align*}
	0 ={}& \bX_{1}^{\top} \Big(\bX_1 \bbeta_1 + \bX_2 \bbeta_2 + \bPsi^{-1/2} \bV(y - \bmu) - \bar{g} \bPsi^{-1/2} \bC (y - \bmu) \balpha - \bX_1 \beta_1 - \bX_2 \beta_2 \\
	&+ \bar{g}\bPsi^{-1/2}\bC(y - \bmu) \alpha \Big).
\end{align*}
Using $\bar{y}_0$ from \eqref{eq:trafos}, the expression can be written more compactly as
\begin{equation}\label{eq:taylorx1}
	0 = \bX_{1}^{\top} \left(\bar{y}_0 - \bX_1 \beta_1 - \bX_2 \beta_2 + \bar{g} \bPsi^{-1/2} \bC (y - \bmu) \alpha \right).
\end{equation}
Following the same steps, the second equation of \eqref{eq:taylor} becomes
\begin{equation}\label{eq:taylorx2}
	0 = \bX_{2}^{\top} \left(\bar{y}_0 - \bX_1 \beta_1 - \bX_2 \beta_2 + \bar{g} \bPsi^{-1/2} \bC (y - \bmu) \alpha \right) - R_{j} \nu_{j}.
\end{equation}
Analogously, the third equation in \eqref{eq:taylor} can be expressed as
\begin{equation}\label{eq:tayloralpha0}
\begin{aligned}
	0 &= \bar{g} \xmat{\bkappa}{\ones}  - \bar{g} \xmat{(y - \bmu)}{\bC} \bar{\eta} + \bar{g} \xmat{(y - \bmu)}{\bC} (X_1 \beta_1 + X_2 \beta_2) + (\bar{g}^2 \xmat{\bk}{\ones} + \bar{\varrho} \xmat{\bkappa}{\ones}) (\alpha - \balpha), \\
	  &= \bar{t} + \bar{g} \xmat{(y - \bmu)}{\bC} (X_1 \beta_1 + X_2 \beta_2) + (\bar{g}^2 \xmat{\bk}{\ones} + \bar{\varrho} \xmat{\bkappa}{\ones}) \alpha. 
\end{aligned}
\end{equation}
Assuming $\bar{g}^2 \xmat{\bk}{\ones} + \bar{\varrho} \xmat{\bkappa}{\ones} \neq 0$,
I can solve \eqref{eq:tayloralpha0} for $\alpha$:
\begin{align}\label{eq:tayloralpha1}
	\alpha = -\frac{\bar{t} + \bar{g} \xmat{(y - \bmu)}{\bC} (X_1 \beta_1 + X_2 \beta_2)}{\bar{g}^2 \xmat{\bk}{\ones} + \bar{\varrho} \xmat{\bkappa}{\ones}}.
\end{align}

Let us combine \eqref{eq:taylorx1} and \eqref{eq:taylorx2} into a larger matrix equation. First, move some terms so they become
\begin{align*}
	\xmat{\bX_1}{\bX_1}\beta_1 + \xmat{\bX_1}{\bX_2}\beta_2 &= \xmat{\bX_1}{\bar{y}_0} + \xmat{\bX_1}{\bPsi^{-1/2}}\bC (y - \bmu) \bar{g} \alpha, \\
	\xmat{\bX_2}{\bX_1}\beta_1 + \xmat{\bX_2}{\bX_2}\beta_2 &= \xmat{\bX_2}{\bar{y}_0} + \xmat{\bX_2}{\bPsi^{-1/2}}\bC (y - \bmu) \bar{g} \alpha  - R_j \nu_{j}.
\end{align*}
Using $\bPsi^{-1/2} \bX_p = X_p, p = 1, 2$, collect both equations to
\begin{equation*}
	\begin{pmatrix}
		\xmat{\bX_{1}}{\bX_{1}} & \xmat{\bX_{1}}{\bX_{2}} \\
		\xmat{\bX_{2}}{\bX_{1}} & \xmat{\bX_{2}}{\bX_{2}} \\		
	\end{pmatrix} \begin{pmatrix}
		\beta_{1} \\
		\beta_{2}
	\end{pmatrix}
	=
	\begin{pmatrix}
		\xmat{\bX_1}{\bar{y}_0} \\
		\xmat{\bX_2}{\bar{y}_0}
	\end{pmatrix} + \begin{pmatrix}
		\xmat{X_1}{\bC}(y - \bmu) \\
		\xmat{X_2}{\bC}(y - \bmu) \\		
	\end{pmatrix} \bar{g} \alpha - \begin{pmatrix}
		0 \\
		R_j
	\end{pmatrix} \nu_{j}.
\end{equation*}	
Inserting \eqref{eq:tayloralpha1} and rearranging yields
\begin{equation*}\label{eq:systemBeta1.1}
\begin{aligned}
	\begin{pmatrix}
		\xmat{\bX_{1}}{\bX_{1}} & \xmat{\bX_{1}}{\bX_{2}} \\
		\xmat{\bX_{2}}{\bX_{1}} & \xmat{\bX_{2}}{\bX_{2}} \\		
	\end{pmatrix} \begin{pmatrix}
		\beta_{1} \\
		\beta_{2}
	\end{pmatrix}
	={}
	& \begin{pmatrix}
		\xmat{\bX_1}{\bar{y}_0} \\
		\xmat{\bX_2}{\bar{y}_0}
	\end{pmatrix} - \frac{\bar{t} \bar{g}}{\bar{g}^2 \xmat{\bk}{\ones} + \bar{\varrho} \xmat{\bkappa}{\ones}}\begin{pmatrix}
		\xmat{X_1}{\bC}(y - \bmu) \\
		\xmat{X_2}{\bC}(y - \bmu) \\		
	\end{pmatrix}  \\
	& - \frac{\bar{g}^2}{\bar{g}^2 \xmat{\bk}{\ones} + \bar{\varrho} \xmat{\bkappa}{\ones}}\begin{pmatrix}
		\xmat{X_1}{\bC}(y - \bmu) (y - \bmu)^{\top} \bC X_1 \beta_1 \\
		\xmat{X_2}{\bC}(y - \bmu) (y - \bmu)^{\top} \bC X_1 \beta_1 \\		
	\end{pmatrix}  \\
	& - \frac{\bar{g}^2}{\bar{g}^2 \xmat{\bk}{\ones} + \bar{\varrho} \xmat{\bkappa}{\ones}}\begin{pmatrix}
		\xmat{X_1}{\bC}(y - \bmu) (y - \bmu)^{\top} \bC X_2 \beta_2 \\
		\xmat{X_2}{\bC}(y - \bmu) (y - \bmu)^{\top} \bC X_2 \beta_2 \\		
	\end{pmatrix} \\
	& - \begin{pmatrix}
		0 \\
		R_j
	\end{pmatrix} \nu_{j},
\end{aligned}
\end{equation*}
which can be further rewritten, using $\bepsilon$ and $\bq$ as defined in \Cref{sec:onestepNB}, to
\begin{equation}\label{eq:system1}
\begin{aligned}
	 \bar{A} \begin{pmatrix}
		\beta_{1} \\
		\beta_{2}
	\end{pmatrix} &= \begin{pmatrix}
		\xmat{\bX_1}{\bar{y}_0} \\
		\xmat{\bX_2}{\bar{y}_0}
	\end{pmatrix} - \bar{t} \bepsilon \begin{pmatrix}
		\xmat{X_1}{\bq} \\
		\xmat{X_2}{\bq} \\	
		\end{pmatrix} - \begin{pmatrix}
		0 \\
		R_j
	\end{pmatrix} \nu_{j},
\end{aligned}
\end{equation}
with
\[
\bar{A} := \begin{pmatrix}
	\xmat{\bX_{1}}{\bX_{1}} + \bar{g} \bepsilon \xmat{X_1}{\bq} \xmat{\bq}{X_1} 
	& \xmat{\bX_{1}}{\bX_{2}} + \bar{g} \bepsilon \xmat{X_1}{\bq} \xmat{\bq}{X_2} \\
	\xmat{\bX_{2}}{\bX_{1}} + \bar{g} \bepsilon \xmat{X_2}{\bq} \xmat{\bq}{X_1} 
	& \xmat{\bX_{2}}{\bX_{2}} + \bar{g} \bepsilon \xmat{X_2}{\bq} \xmat{\bq}{X_2}\\		
\end{pmatrix}.
\]
Then, consider the partitioned inverse
\[
\bar{A}^{-1} = \begin{pmatrix}
	\bar{A}^{11} & \bar{A}^{12} \\
	\bar{A}^{21} & \bar{A}^{22}
\end{pmatrix},
\]
with elements
\begin{align}
	\bar{A}^{11} ={}& (\xmat{\bX_1}{\bX_1} + \bar{g} \bepsilon \xmat{X_1}{\bq} \xmat{\bq}{X_1})^{-1} \nonumber \\ 
	& + \bigg[ (\xmat{\bX_1}{\bX_1} + \bar{g} \bepsilon \xmat{X_1}{\bq} \xmat{\bq}{X_1})^{-1} (\xmat{\bX_1}{\bX_2} + \bar{g} \bepsilon \xmat{X_1}{\bq} \xmat{\bq}{X_2}) (\bX_{2}^{\top} \bM_1 \bX_{2})^{-1} \nonumber \\
	&  \cdot (\xmat{\bX_2}{\bX_1} + \bar{g} \bepsilon \xmat{X_2}{\bq} \xmat{\bq}{X_1}) (\xmat{\bX_1}{\bX_1} + \bar{g} \bepsilon \xmat{X_1}{\bq} \xmat{\bq}{X_1})^{-1} \bigg], \label{eq:invA11} \\
	\bar{A}^{12} ={}& -(\xmat{\bX_1}{\bX_1} + \bar{g} \bepsilon \xmat{X_1}{\bq} \xmat{\bq}{X_1})^{-1} (\xmat{\bX_1}{\bX_2} + \bar{g} \bepsilon \xmat{X_1}{\bq} \xmat{\bq}{X_2}) (\bX_{2}^{\top} \bM_1 \bX_2)^{-1} = \bar{A}^{21}{}^{\top}, \label{eq:invA12} \\
	\bar{A}^{22} ={}& (\bX_{2}^{\top} \bM_1 \bX_2)^{-1}. \label{eq:invA22}
\end{align}
It is assumed that $\bX_{2}^{\top} \bM_1 \bX_2$ is positive definite so all elements of the partitioned inverse of $\bar{A}^{-1}$ exist.

Using the Sherman-Morrison-Woodbury formula, I can rewrite the following inverse if $(1 + \bar{g} \bepsilon \xmat{\bq}{X_1} (\xmat{\bX_1}{\bX_1})^{-1} \xmat{X_1}{\bq}) \neq 0$ as
\begin{align}\label{eq:smwX1X1}
	 (\xmat{\bX_1}{\bX_1} + \bar{g} \bepsilon \xmat{X_1}{\bq} \xmat{\bq}{X_1})^{-1} = (\xmat{\bX_1}{\bX_1})^{-1} - \frac{(\xmat{\bX_1}{\bX_1})^{-1} (\bar{g} \bepsilon \xmat{X_1}{\bq} \xmat{\bq}{X_1}) (\xmat{\bX_1}{\bX_1})^{-1}}{1 + \bar{g} \bepsilon \xmat{\bq}{X_1} (\xmat{\bX_1}{\bX_1})^{-1} \xmat{X_1}{\bq}}.
\end{align}
Therefore, $(\xmat{\bX_1}{\bX_1} + \bar{g} \bepsilon \xmat{X_1}{\bq} \xmat{\bq}{X_1})^{-1}$ exists if $(1 + \bar{g} \bepsilon \xmat{\bq}{X_1} (\xmat{\bX_1}{\bX_1})^{-1} \xmat{X_1}{\bq}) \neq 0$ and $\xmat{\bX_1}{\bX_1}$ is invertible. The latter is easily shown because $\rank(\bX_1) = \rank(\bPsi^{-1/2} X_1) = \rank(X_1) = k_{1}$ (assumed $X_1$ to have full column rank) and  $\rank(\bPsi^{-1/2}) = n$, otherwise I would not be able to compute $\bPsi^{-1/2}$. Thus, $\bX_1$ has full column rank and $\rank(\bX_{1}^{\top} \bX_1) = k_1$. 

Let $\tbeta_{1u}$ and $\tbeta_{2u}$ denote the solution of the unrestricted model, then plugging $R_{u} = 0$ into \eqref{eq:system1} yields the unrestricted equation system
\begin{align}
	\bar{A} \begin{pmatrix}
		\tbeta_{1u} \\
		\tbeta_{2u}
	\end{pmatrix} = \begin{pmatrix}
		\xmat{\bX_1}{\bar{y}_0} \\
		\xmat{\bX_2}{\bar{y}_0}
	\end{pmatrix} - \bar{t} \bepsilon \begin{pmatrix}
		\xmat{X_1}{\bq} \\
		\xmat{X_2}{\bq} \\	
		\end{pmatrix}. \label{eq:systemu}
\end{align}
Further, let $\tbeta_{1j}$ and $\tbeta_{2j}$ denote the solution of the $j$th model. Using \eqref{eq:system1} then yields
\begin{align}
	\bar{A} \begin{pmatrix}
		\tbeta_{1j} \\
		\tbeta_{2j}
	\end{pmatrix} &= \begin{pmatrix}
		\xmat{\bX_1}{\bar{y}_0} \\
		\xmat{\bX_2}{\bar{y}_0}
	\end{pmatrix} - \bar{t} \bepsilon \begin{pmatrix}
		\xmat{X_1}{\bq} \\
		\xmat{X_2}{\bq} \\	
		\end{pmatrix} - \begin{pmatrix}
		0 \\
		R_j
	\end{pmatrix} \tilde{\nu}_{j}. \label{eq:systemj}
\end{align}

Combining \eqref{eq:systemu} and \eqref{eq:systemj}, I can find an explicit solution for $\nu_j$ as they imply
\begin{align*}
	\bar{A} \begin{pmatrix}
		\tbeta_{1j} \\
		\tbeta_{2j}
	\end{pmatrix} = \bar{A} \begin{pmatrix}
		\tbeta_{1u} \\
		\tbeta_{2u}
	\end{pmatrix} - \begin{pmatrix}
		0 \\
		R_j
	\end{pmatrix} \tilde{\nu}_{j},
\end{align*}
then multiply with $\bar{A}^{-1}$ so
\begin{align}\label{eq:systemj2}
	\begin{pmatrix}
		\tbeta_{1j} \\
		\tbeta_{2j}
	\end{pmatrix} = \begin{pmatrix}
		\tbeta_{1u} \\
		\tbeta_{2u}
	\end{pmatrix} - \begin{pmatrix}
		\bar{A}^{11} & \bar{A}^{12} \\
		\bar{A}^{21} & \bar{A}^{22}
	\end{pmatrix} \begin{pmatrix}
		0 \\
		R_j
	\end{pmatrix} \tilde{\nu}_{j}.
\end{align}
Multiply both sides by $\begin{pmatrix}
0 & R_{j}^{\top}
\end{pmatrix}$ and note that by \eqref{eq:optNB} $R_{j}^{\top} \tbeta_{2j} = 0$, then
\begin{align}
	\begin{pmatrix}
		0 & R_{j}^{\top}
	\end{pmatrix} \begin{pmatrix}
		\tbeta_{1j} \\
		\tbeta_{2j}
	\end{pmatrix} &= \begin{pmatrix}
		0 & R_{j}^{\top}
	\end{pmatrix} \begin{pmatrix}
		\tbeta_{1u} \\
		\tbeta_{2u}
	\end{pmatrix} - \begin{pmatrix}
		0 & R_{j}^{\top}
	\end{pmatrix} \begin{pmatrix}
		\bar{A}^{11} & \bar{A}^{12} \\
		\bar{A}^{21} & \bar{A}^{22}
	\end{pmatrix} \begin{pmatrix}
		0 \\
		R_j
	\end{pmatrix} \tilde{\nu}_{j} \nonumber \\
	0 &= R_{j}^{\top} \tbeta_{2u} - R_{j}^{\top} \bar{A}^{22} R_{j} \tilde{\nu}_j \nonumber \\
	\rightarrow \tilde{\nu}_{j} &= (R_{j}^{\top} \bar{A}^{22} R_{j})^{-1} R_{j}^{\top} \tbeta_{2u}, \label{eq:nu}
\end{align} 
assuming the inverse $(R_{j}^{\top} \bar{A}^{22} R_{j})^{-1}$ exists.
Plug \eqref{eq:nu} into \eqref{eq:systemj2} for
\begin{align*}
	\tbeta_{1j} &= \tbeta_{1u} - \bar{A}^{12}R_{j} (R_{j}^{\top} \bar{A}^{22} R_{j})^{-1} R_{j}^{\top} \tbeta_{2u}, \\
	\tbeta_{2j} &= \tbeta_{2u} - \bar{A}^{22}R_{j} (R_{j}^{\top} \bar{A}^{22} R_{j})^{-1} R_{j}^{\top} \tbeta_{2u}.
\end{align*}
Now, insert \eqref{eq:invA12}, \eqref{eq:invA22} and introduce $n$ so
\begin{equation}\label{eq:b1j}
\begin{aligned}
	\tbeta_{1j} ={}& \tbeta_{1u} + \bigg[ \left(\frac{\xmat{\bX_1}{\bX_1}}{n} + \frac{\bar{g} \bepsilon}{n} \xmat{X_1}{\bq} \xmat{\bq}{X_1} \right)^{-1}  \left(\frac{\xmat{\bX_1}{\bX_2}}{n} + \frac{\bar{g} \bepsilon}{n} \xmat{X_1}{\bq} \xmat{\bq}{X_2} \right) \left(\frac{\bX_{2}^{\top} \bM_1 \bX_2}{n} \right)^{-1/2}  \\
	& \cdot \left(\frac{\bX_{2}^{\top} \bM_1 \bX_2}{n} \right)^{-1/2}  R_{j} \left( R_{j}^{\top} \left(\frac{\bX_{2}^{\top} \bM_1 \bX_2}{n} \right)^{-1} R_{j} \right)^{-1}  R_{j}^{\top} \left(\frac{\bX_{2}^{\top} \bM_1 \bX_2}{n} \right)^{-1/2} \\
	& \cdot \left(\frac{\bX_{2}^{\top} \bM_1 \bX_2}{n} \right)^{1/2} \tbeta_{2u} \bigg]  \\
	={}&  \tbeta_{1u} + \bQ \bP_{j} \tvartheta.
\end{aligned}
\end{equation}
Moreover, introduce $n$ also for $\tbeta_{2j}$ which yields
\begin{align*}
	\tbeta_{2j} ={}& \tbeta_{2u} - \bigg[ \left(\frac{\bX_{2}^{\top} \bM_1 \bX_2}{n} \right)^{-1}  R_{j} \left( R_{j}^{\top} \left(\frac{\bX_{2}^{\top} \bM_1 \bX_2}{n} \right)^{-1} R_{j} \right)^{-1}  R_{j}^{\top} \left(\frac{\bX_{2}^{\top} \bM_1 \bX_2}{n} \right)^{-1/2} \\
	& \cdot \left(\frac{\bX_{2}^{\top} \bM_1 \bX_2}{n} \right)^{1/2} \tbeta_{2u} \bigg] \\
	={}& \tbeta_{2u} - \left(\frac{\bX_{2}^{\top} \bM_1 \bX_2}{n} \right)^{-1/2}  \bP_j \left(\frac{\bX_{2}^{\top} \bM_1 \bX_2}{n} \right)^{1/2} \tbeta_{2u} \\	
	={}& \left(\frac{\bX_{2}^{\top} \bM_1 \bX_2}{n} \right)^{-1/2} (I_{k_2} - \bP_j ) \tvartheta \\
	={}&   \left(\frac{\bX_{2}^{\top} \bM_1 \bX_2}{n} \right)^{-1/2} \bW_j \tvartheta.
\end{align*}

In order to obtain the estimator of the fully restricted model $\tbeta_{1r}$, I set $R_{r} = I_{k_{2}}$. Combined with \eqref{eq:b1j} this leads to
\begin{align}
	\tbeta_{1r} &= \tbeta_{1u} + \bQ \tvartheta \label{eq:b1rb1u} \\
	\leftrightarrow  \tbeta_{1u} &= \tbeta_{1r} - \bQ \tvartheta. \label{eq:b1ub1r}
\end{align}
Insert \eqref{eq:b1ub1r} in \eqref{eq:b1j} to get
\begin{equation*}
	\tbeta_{1j} = \bbeta_{1r} - \bQ \bW_{j} \tvartheta.
\end{equation*}

What remains to be derived are $\tbeta_{1u}$ and $\tbeta_{2u}$. First, multiply \eqref{eq:systemu} with $\bar{A}^{-1}$ for
\begin{align*}
	\begin{pmatrix}
		\tbeta_{1u} \\
		\tbeta_{2u}
	\end{pmatrix} = \begin{pmatrix}
		\bar{A}^{11} & \bar{A}^{12} \\
		\bar{A}^{21} & \bar{A}^{22}
	\end{pmatrix} \begin{pmatrix}
		\xmat{\bX_1}{\bar{y}_0} \\
		\xmat{\bX_2}{\bar{y}_0}
	\end{pmatrix} - \bar{t} \bepsilon \begin{pmatrix}
		\bar{A}^{11} & \bar{A}^{12} \\
		\bar{A}^{21} & \bar{A}^{22}
	\end{pmatrix} \begin{pmatrix}
		\xmat{X_1}{\bq} \\
		\xmat{X_2}{\bq} \\	
		\end{pmatrix}.
\end{align*}
Inserting \eqref{eq:invA11}, \eqref{eq:invA12} and \eqref{eq:invA22} results in 
\begin{equation}\label{eq:b1u}
\begin{aligned}
	\tbeta_{1u} ={}& \bigg[ (\xmat{\bX_1}{\bX_1} + \bar{g} \bepsilon \xmat{X_1}{\bq} \xmat{\bq}{X_1})^{-1} \\ 
	& + \bigg\{ (\xmat{\bX_1}{\bX_1} + \bar{g} \bepsilon \xmat{X_1}{\bq} \xmat{\bq}{X_1})^{-1} (\xmat{\bX_1}{\bX_2} + \bar{g} \bepsilon \xmat{X_1}{\bq} \xmat{\bq}{X_2}) (\bX_{2}^{\top} \bM_1 \bX_{2}^{\top})^{-1} \\
	&  (\xmat{\bX_2}{\bX_1} + \bar{g} \bepsilon \xmat{X_2}{\bq} \xmat{\bq}{X_1}) (\xmat{\bX_1}{\bX_1} + \bar{g} \bepsilon \xmat{X_1}{\bq} \xmat{\bq}{X_1})^{-1} \bigg\} \bigg] (\bX_{1}^{\top} \bar{y}_0 - \bar{t} \bepsilon \xmat{X_1}{\bq}) \\
	& - (\xmat{\bX_1}{\bX_1} + \bar{g} \bepsilon \xmat{X_1}{\bq} \xmat{\bq}{X_1})^{-1} (\xmat{\bX_1}{\bX_2} + \bar{g} \bepsilon \xmat{X_1}{\bq} \xmat{\bq}{X_2}) (\bX_{2}^{\top} \bM_1 \bX_{2}^{\top})^{-1} (\xmat{\bX_2}{\bar{y}_0} - \bar{t} \bepsilon \xmat{X_2}{\bq}).
\end{aligned}
\end{equation}
and
\begin{equation}\label{eq:b2u}
\begin{aligned}
	\tbeta_{2u} ={}& -(\bX_{2}^{\top} \bM_1 \bX_2)^{-1} (\xmat{\bX_2}{\bX_1} + \bar{g} \bepsilon \xmat{X_2}{\bq} \xmat{\bq}{X_1}) (\xmat{\bX_1}{\bX_1} + \bar{g} \bepsilon \xmat{X_1}{\bq} \xmat{\bq}{X_1})^{-1} (\xmat{\bX_1}{\bar{y}_0} - \bar{t} \bepsilon \xmat{X_1}{\bar{q}}) \\
	& + (\bX_{2}^{\top} \bM_1 \bX_2)^{-1} (\xmat{\bX_2}{\bar{y}_0} - \bar{t} \bepsilon \xmat{X_2}{\bar{q}}). 
\end{aligned}
\end{equation}
Plugging $\tbeta_{1u}$ and $\tbeta_{2u}$ into \eqref{eq:b1rb1u} and exploiting \eqref{eq:vartheta} yields
\begin{equation}
\begin{aligned}
	\tbeta_{1r} &= (\xmat{\bX_1}{\bX_1} + \bar{g} \bepsilon \xmat{X_1}{\bq}\xmat{\bq}{X_1})^{-1} (\xmat{\bX_1}{\bar{y}_0} - \bar{t} \bepsilon \xmat{X_1}{\bq}) \\
	&= \left( \frac{\xmat{\bX_1}{\bX_1}}{n} + \frac{\bar{g} \bepsilon}{n} \xmat{X_1}{\bq}\xmat{\bq}{X_1} \right)^{-1} \left( \frac{\xmat{\bX_1}{\bar{y}_0}}{n} - \frac{\bar{t} \bepsilon}{n} \xmat{X_1}{\bq} \right).
\end{aligned}
\end{equation}
Finally, insert $\tbeta_{1j}$ and $\tbeta_{2j}$ in \eqref{eq:tayloralpha1} for the estimator of the dispersion parameter
\begin{align*}
	\talpha_{j} = -\frac{\bar{t} + \bar{g} \xmat{(y - \bmu)}{\bC} (X_1 \tbeta_{1j} + X_2 \tbeta_{2j})}{\bar{g}^2 \xmat{\bk}{\ones} + \bar{\varrho} \xmat{\bkappa}{\ones}}.
\end{align*}
The proposition collects the solutions for $\tbeta_{1j}$, $\tbeta_{2j}$, $\talpha_j$ and $\tbeta_{1r}$.
\end{proof}
%

\begin{proof}[\textbf{Proof of \cref{cor:equiv}}]
	I prove the first statement in \cref{cor:equiv} by contradiction. Assume for $j \neq \{u, r\}$ that the $j$th model for the transformed regressors $Z$ and untransformed regressors $X$ are equivalent so we can transform the estimators for the auxiliary variables into each other, i.e.\ $\tgamma_{2j} = \bXi^{1/2} \bDelta_{2}^{-1} \tbeta_{2j}$. The assumption requires the restrictions in the estimation for the transformed and untransformed regressors (see \eqref{eq:taylor}) to be equivalent, i.e.\
	\begin{equation}\label{eq:assequiv}
		R_{j}^{\top} \tgamma_{2j} = R_{j}^{\top} \tbeta_{2j}. \tag{$*$}
	\end{equation}

	Without loss of generality, I analyze the special case $k_2 = 2$, where $\tgamma_{2j} = (\tgamma_{2j,1}, \tgamma_{2j,2})^{\top}$ and $\tbeta_{2j} = (\tbeta_{2j,1}, \tbeta_{2j,2})^{\top}$, and the $j$th model is assumed to set  $\tgamma_{2j,2} = \tbeta_{2j,2} = 0$ via the restriction matrix $R_{j}^{\top} = (0, 1)$. Define the elements of $\bXi^{1/2}$ to be
	\begin{equation*}
		\bXi^{1/2} = \begin{pmatrix}
			\xi_{11} & \xi_{12} \\
			\xi_{21} & \xi_{22}
		\end{pmatrix},
	\end{equation*}
	and $\bDelta_{2}^{-1}$ reduces to a $2 \times 2$ diagonal matrix $\bDelta_{2}^{-1} = \diag(\bDelta_{2}^{11}, \bDelta_{2}^{22})$.

	Under the assumption $\tgamma_{2j} = \bXi^{1/2} \bDelta_{2}^{-1} \tbeta_{2j}$, the restriction in the estimation of the $j$th model for the transformed regressors is
	\begin{align*}
		R_{j}^{\top} \tgamma_{2j} &= R_{j}^{\top} \bXi^{1/2} \bDelta_{2}^{-1} \tbeta_{2j} \\
		&= \begin{pmatrix} 0 & 1 \end{pmatrix} 
		\begin{pmatrix}
			\xi_{11} & \xi_{12} \\
			\xi_{21} & \xi_{22}
		\end{pmatrix}
		\begin{pmatrix}
			\bDelta_{2}^{11} \tbeta_{2j,1} \\
			\bDelta_{2}^{22} \tbeta_{2j,2}
		\end{pmatrix} \\
		&= \xi_{21} \bDelta_{2}^{11} \tbeta_{2j,1} + \xi_{22} \bDelta_{2}^{22} \tbeta_{2j,2} = 0,
	\end{align*}
	while the restriction in the estimation of the $j$th model for the untransformed regressors is
	\begin{align*}
		R_{j}^{\top} \tbeta_{2j} = \tbeta_{2j,2} = 0.
	\end{align*}
	Therefore, generally $R_{j}^{\top} \tgamma_{2j} \neq R_{j}^{\top} \tbeta_{2j}$ so the restrictions in the estimator using the transformed and untransformed regressors differ, which contradicts $\eqref{eq:assequiv}$. Consequently, $\tgamma_{2j} = \bXi^{1/2} \bDelta_{2}^{-1} \tbeta_{2j}$ cannot generally hold for $k_2 > 2$ as it does not even hold for $k_2 = 2$. By extension, $\tgamma_{1j} \neq \bDelta_{1}^{-1} \tbeta_{1j}$ because the restrictions in the estimation differ, which finishes the proof of the first part of the corollary.

	For $k_2 = 1$, there exist only two models: 1.\ the unrestricted and 2.\ the fully restricted model so $j \in \{u, r\}$. In this case, the general results $\tgamma_{1u} = \bDelta^{-1} \tbeta_{1u}$, $\tgamma_{2u} = \bXi^{1/2} \bDelta_{2}^{-1} \tbeta_{2u}$ and $\tgamma_{1r} = \bDelta_{1}^{-1} \tbeta_{1r}$ from \eqref{eq:trafosol2} apply. Finally, the fully restricted model fulfills $R_{r}^{\top}\tgamma_{2r} = \tgamma_{2r} = 0$, which finishes the proof of the second part of the corollary.
\end{proof}

\setcounter{figure}{0} 
\setcounter{table}{0}

\subsection{Software and implementation}\label{sec:software}

The simulation experiment of \Cref{sec:sim} is performed on the scientific computing center sciCORE at the University of Basel (\url{https://scicore.unibas.ch/}) with \textsf{R} version 4.3.0 \citep{R2023}, while the empirical illustration of \Cref{sec:cvexpvar} is computed on a local machine running \textsf{R} version 4.3.1. Models estimated by WALS are fitted using the newly developed \textsf{R} package \pkg{WALS} version 0.2.4 \citep{wals} available from the Comprehensive \textsf{R} Archive Network (CRAN, \url{https://cran.r-project.org/package=WALS}). \pkg{WALS} is based on the \textsf{MATLAB} code version 2.0 for WALS in the linear regression model by \citet{magnus2016wals} which can be downloaded from \url{https://www.janmagnus.nl/items/WALS.pdf}. The dependencies of \pkg{WALS} along with the particular versions used are: \pkg{Formula} version 1.2-5 \citep{formula2010}, \pkg{MASS} version 7.3-60 \citep{mass2002} and \pkg{Rdpack} version 2.5 \citep{rdpack2023}. Standard NB2 regressions use \texttt{glm.nb()} from \pkg{MASS}. The function uses an algorithm that alternates between fitting the coefficients $\beta$ of the NB2 regression for fixed dispersion parameter $\rho$ using IRLS (iteratively reweighted least squares) and then maximizing the log-likelihood with respect to $\rho$ given $\beta$. Moreover, training and validation splits for $K$-fold CV in \Cref{sec:cvexpvar} are generated using the function \texttt{cv()} of \pkg{mboost} version 2.9-8 \citep{mboost2014} and the computations are parallelized over the number of training observations $t_{l}$ using \pkg{parallel}. The lasso estimation of `lasso-int' in \Cref{sec:cvexpvar} is performed using \texttt{cv.glmregNB()} from \pkg{mpath} version 0.4-2.23 \citep{mpath2023}. 

Finally, the integral in \eqref{eq:gammaest} is evaluated numerically for all priors except for the Laplace prior. Numerical integration is performed using the \texttt{integrate()} function of \pkg{stats}, which uses an adaptive quadrature method with the basic step being a Gauss-Kronrod quadrature (for more details see the code documentation).

\subsubsection*{Computational details}

The lasso estimator of the NB2 regression model maximizes the following penalized objective function from \citet[p.~2687~f.]{wang2016pencount}:
\begin{equation*}
	\max_{\beta, \rho} L(\beta, \rho) = \max_{\beta, \rho} \left\{ \ell(\beta, \rho) - n \cdot d \sum_{j = 1}^{p} | \beta_{j} | \right\},
\end{equation*}
where $\beta = (\beta_0, \beta_{1}, \dotsc, \beta_{p})^{\top}$, $\beta_0$ is the constant and $d \geq 0$ is the regularization parameter. Moreover, $\ell(\beta, \rho)$ is the log-likelihood from \eqref{eq:nb2loglik} and the penalty term is scaled by the sample size $n$ such that the penalty does not vanish when the number of observations becomes large. Notice that neither the constant $\beta_0$ nor the dispersion parameter $\rho$ are regularized, but only the `true' regression coefficients $\beta_{j}$, $j > 0$, are regularized.

The optimal regularization parameter $d$ for `lasso-int' in \Cref{sec:cvexpvar} is determined by maximizing the 10-fold CV log-likelihood, where candidates $\{d_{1}, d_{2}, \dotsc, d_{H}\}$ are generated by first running the method on the entire training set (ignoring the folds) and setting $d_{H}$ such that an intercept only model is estimated. Then, the minimum value is set as $d_{1} = a \cdot d_{H}$, $0 < a < 1$. The remaining values are determined by an evenly spaced grid on the log scale, i.e.\ between $\log(d_{1})$ and $\log(d_{H})$.

The implementation alternates between $P$ iterations of 1.\ the coordinate descent algorithm described in \citet[p.~2690~f.]{wang2016pencount} to estimate the regression coefficients $\beta$ given a value of the dispersion parameter $\rho$ and 2.\ maximizing the log-likelihood with respect to $\rho$ given the estimate of $\beta$. The latter uses \texttt{theta.ml()} from \pkg{MASS} \citep{mass2002} but limits its number of iterations to 10. The alternation process is stopped when certain convergence criteria are met. For the estimation of `lasso-int' in \Cref{sec:cvexpvar}, the maximum allowed number of iterations (until convergence) for the coordinate descent algorithm is increased to 2500 from the default setting of 1000. Furthermore, I also increase the maximum number of alternations between coordinate descent and ML estimation of the dispersion parameter $\rho$ from 10 to 1000. The remaining settings are left at their default values, see the documentation of \pkg{mpath} \citep{mpath2023} for more details.

A caveat for the results of `lasso-int' on the DoctorVisits dataset is that the ML estimation of $\rho$ often reaches its internal iteration limit within the alternation process between coordinate descent, for estimating the regression coefficients, and ML estimation of $\rho$ using \texttt{theta.ml()} from \pkg{MASS}. Ideally, I could increase the number of iterations used in \texttt{theta.ml()}, however, the implementation of lasso in \pkg{mpath} uses a fixed number of 10 iterations. I compensate for this by allowing the alternation process to run a maximum of 1000 iterations (instead of the default 10) until convergence before it is forced to terminate. The idea is that, even if \texttt{theta.ml()} does not converge in an iteration of the alternation process, the alternation process can go through many iterations, where \texttt{theta.ml()} is run in each round.

Tables are generated using \pkg{xtable} version 1.8-4 \citep{xtable2019} and \pkg{stargazer} version 5.2.3 \citep{stargazer2022}, and some plots use \pkg{ggplot2} version 3.4.4 \citep{ggplot2016} with themes from \pkg{ggthemes} version 4.2.4 \citep{ggthemes2021}. \LaTeX\ expressions are inserted with \pkg{latex2exp} version 0.9.6 \citep{latex2exp2022}. Finally, results are partly processed with \pkg{abind} version 1.4-5 \citep{abind2016} before plotting.

\setcounter{figure}{0} %
\setcounter{table}{0}

\subsection{Additional tables for the simulation experiment}\label{sec:appSim}

\begin{table}[h]
	\caption{Values of $\bar{\beta}_1$}\label{tab:barbeta1}
	\begin{center}
		\begin{threeparttable}
		\begin{footnotesize}
			\begin{tabular}{lrrrrrrrrrr}
  \toprule
 \midrule
 & 1 & 2 & 3 & 4 & 5 & 6 & 7 & 8 & 9 & 10 \\ 
  \midrule
  & -0.1518 & -0.197 & 0.1401 & 0.1328 & -0.155 & 0.1775 & 0.1403 & 0.1272 & 0.1778 & -0.1602 \\ 
   \bottomrule
\end{tabular}

			\end{footnotesize}
		\begin{tablenotes}
			\footnotesize
			\item[--] All figures rounded to four decimal places. 
		\end{tablenotes}
		\end{threeparttable}
	\end{center}
	\end{table}

\begin{landscape}
\begin{table}
	\caption{Values of $\bar{\beta}_2$}\label{tab:barbeta2}
	\begin{center}
		\begin{threeparttable}
		\begin{footnotesize}
			\begin{tabular}{lrrrrrrrrrr}
  \toprule
 \midrule
 & 1 & 2 & 3 & 4 & 5 & 6 & 7 & 8 & 9 & 10 \\ 
  \midrule
0 & $ -8 \times 10^{-4 }$ & 0.0089 & 0.0052 & 0.0087 & $ -6 \times 10^{-4 }$ & 0.0021 & $ -3 \times 10^{-4 }$ & -0.0078 & -0.005 & 0 \\ 
  10 & -0.0025 & 0.0087 & $ 5 \times 10^{-4 }$ & -0.0037 & -0.0044 & 0.0058 & 0.004 & -0.0067 & -0.0087 & 0.0051 \\ 
  20 & 0.0024 & -0.0066 & -0.0088 & -0.0078 & -0.0024 & -0.0066 & -0.004 & -0.0062 & -0.0049 & -0.0064 \\ 
  30 & $ -5 \times 10^{-4 }$ & 0.0054 & -0.0094 & $ 5 \times 10^{-4 }$ & 0.0076 & -0.0025 & -0.009 & -0.0072 & -0.0036 & -0.0069 \\ 
  40 & -0.0074 & -0.0056 & -0.0055 & -0.0074 & 0.0096 & -0.0035 & $ 1 \times 10^{-4 }$ & 0.0036 & -0.008 & -0.0076 \\ 
  50 & -0.009 & 0.0086 & 0.0035 & -0.0081 & $ -1 \times 10^{-4 }$ & $ -8 \times 10^{-4 }$ & -0.0025 & 0.0098 & -0.0065 & 0.0063 \\ 
  60 & -0.0086 & -0.002 & -0.0072 & -0.0061 & 0.0068 & 0.0044 & -0.0047 & $ -1 \times 10^{-4 }$ & -0.0083 & -0.0029 \\ 
  70 & 0.0094 & 0.0025 & 0.0033 & -0.0038 & -0.0019 & 0.0099 & 0.0071 & 0.0091 & 0.0062 & 0.0056 \\ 
  80 & -0.0046 & 0.0052 & 0.0097 & -0.0041 & -0.002 & 0.0062 & -0.0085 & -0.0027 & -0.0011 & -0.0069 \\ 
  90 & 0.0016 & 0.0094 & 0.0098 & -0.0065 & $ 8 \times 10^{-4 }$ & -0.0023 & 0.0035 & -0.0046 & $ -6 \times 10^{-4 }$ & -0.0066 \\ 
   \bottomrule
\end{tabular}

			\end{footnotesize}
		\begin{tablenotes}
			\footnotesize
			\item[--] To get element $\bar{\beta}_{2,z}$, construct $z$ as the sum of the row and the column.
			\item[--] Example: $\bar{\beta}_{2,23} = -0.0088$.
			\item[--] All figures rounded to four decimal places. 
		\end{tablenotes}
		\end{threeparttable}
	\end{center}
\end{table}
\end{landscape}

\clearpage

\setcounter{figure}{0} 
\setcounter{table}{0}

\subsection{Additional tables for the empirical illustration}\label{sec:appcvexpvar}

\begin{table}[h]
	\caption{Variable descriptions for DoctorVisits}\label{tab:descDV}
	\begin{center}
		\begin{threeparttable}
		\begin{small}
			
\begin{tabular}{@{\extracolsep{5pt}}lp{10cm}} 
\\[-1.8ex]\hline 
\hline \\[-1.8ex] 
Variable & \multicolumn{1}{l}{Description}  \\ 
\hline \\[-1.8ex] 
visits & \# of doctor visits in past two weeks.\\ 
genderfemale & = 1 if individual is female.\\ 
& Omitted reference category: male.\\
age & Age in years divided by 100.\\ 
income & Annual income in tens of thousands of dollars.\\ 
illness & \# of illnesses in past two weeks.\\ 
reduced & \# of days of reduced activity in past two weeks due to illness or injury.\\ 
health & General health questionnaire score using Goldberg's method.\\ 
privateyes & = 1 if individual has private health insurance  \\ 
& Omitted reference category: individual has no private health insurance.\\
freepooryes & = 1 if individual has free government health insurance due to low income.\\ 
freerepatyes & = 1 if individual has free government health insurance due to old age, disability or veteran status.\\ 
& Omitted reference category: individual has no free government health insurance.\\
nchronicyes & = 1 if individual has a chronic condition which does not limit activity.\\
lchronicyes & = 1 if individual has a chronic condition which limits activity.\\
& Omitted reference category: individual has no chronic condition.\\
\hline \\[-1.8ex] 
\end{tabular} 

			\end{small}
			\begin{tablenotes}
				\footnotesize
				\item[--] Reproduced based on the documentation of \texttt{DoctorVisits} in \pkg{AER} \citep{kleiber2008aer}.
			\end{tablenotes}
		\end{threeparttable}
	\end{center}
	\end{table}

\begin{table}[h]
	\caption{Summary statistics for DoctorVisits}\label{tab:sumstatsDV}
	\begin{center}
		\begin{threeparttable}
		\begin{footnotesize}
			
\begin{tabular}{@{\extracolsep{5pt}}lccccccc} 
\\[-1.8ex]\hline 
\hline \\[-1.8ex] 
Statistic & \multicolumn{1}{c}{Mean} & \multicolumn{1}{c}{St. Dev.} & \multicolumn{1}{c}{Min} & \multicolumn{1}{c}{Pctl(25)} & \multicolumn{1}{c}{Median} & \multicolumn{1}{c}{Pctl(75)} & \multicolumn{1}{c}{Max} \\ 
\hline \\[-1.8ex] 
visits & 0.302 & 0.798 & 0 & 0 & 0 & 0 & 9 \\ 
health & 1.218 & 2.124 & 0 & 0 & 0 & 2 & 12 \\ 
genderfemale & 0.521 & 0.500 & 0 & 0 & 1 & 1 & 1 \\ 
age & 0.406 & 0.205 & 0.190 & 0.220 & 0.320 & 0.620 & 0.720 \\ 
income & 0.583 & 0.369 & 0 & 0.250 & 0.550 & 0.900 & 1.500 \\ 
illness & 1.432 & 1.384 & 0 & 0 & 1 & 2 & 5 \\ 
reduced & 0.862 & 2.888 & 0 & 0 & 0 & 0 & 14 \\ 
privateyes & 0.443 & 0.497 & 0 & 0 & 0 & 1 & 1 \\ 
freepooryes & 0.043 & 0.202 & 0 & 0 & 0 & 0 & 1 \\ 
freerepatyes & 0.210 & 0.407 & 0 & 0 & 0 & 0 & 1 \\ 
nchronicyes & 0.403 & 0.491 & 0 & 0 & 0 & 1 & 1 \\ 
lchronicyes & 0.117 & 0.321 & 0 & 0 & 0 & 0 & 1 \\ 
$\text{age}^2$ & 0.207 & 0.186 & 0.036 & 0.048 & 0.102 & 0.384 & 0.518 \\ 
health$\times$genderfemale & 0.695 & 1.752 & 0 & 0 & 0 & 0 & 12 \\ 
health$\times$age & 0.503 & 1.013 & 0 & 0 & 0 & 0.570 & 8.640 \\ 
health$\times$income & 0.643 & 1.378 & 0 & 0 & 0 & 0.750 & 16.500 \\ 
genderfemale$\times$illness & 0.839 & 1.312 & 0 & 0 & 0 & 1 & 5 \\ 
\hline \\[-1.8ex] 
\end{tabular} 

			\end{footnotesize}
		\begin{tablenotes}
			\footnotesize
			\item[--] St.\ Dev.\ : Standard deviation, Pctl(25): 25\% quantile, Pctl(75): 75\% quantile.
			\item[--] $\times$ indicates interaction between two variables.
			\item[--] Number of observations $N = 5190$.
			\item[--] All figures rounded to three decimal places.
		\end{tablenotes}
		\end{threeparttable}
	\end{center}
	\end{table}

\clearpage


\section*{Supplementary materials}\label{sec:sup}
\addcontentsline{toc}{section}{Supplementary materials}

\renewcommand{\thesubsection}{S\arabic{subsection}}
\renewcommand\thefigure{S\arabic{subsection}.\arabic{figure}}
\renewcommand\thetable{S\arabic{subsection}.\arabic{table}}

\renewcommand{\theequation}{S\arabic{subsection}.\arabic{equation}}
\renewcommand{\thetheorem}{S\arabic{subsection}.\arabic{theorem}}
\renewcommand{\theass}{S\arabic{subsection}.\arabic{ass}}
\counterwithin{equation}{subsection} 
\counterwithin{theorem}{subsection} 
\counterwithin{ass}{subsection} 

\setcounter{subsection}{0}
\setcounter{figure}{0}
\setcounter{table}{0}
\pagestyle{supplement}

\subsection{Asymptotic distribution of one-step ML estimators}\label{sec:asymp}

Following \citet[p.~4]{deluca2018glm}, I use the local misspecification framework with true auxiliary parameters set to
\begin{equation}\label{eq:misspec}
	\beta_{2} = \frac{\delta}{\sqrt{n}},
\end{equation}
so the true parameter vector $\varphi = (\beta_{1}^{\top}, \beta_{2}^{\top}, \alpha)^{\top}$ converges to $\varphi_{0} = (\beta_{1}, 0, \alpha)$ for $n \rightarrow \infty$. Note that the asymptotics in this paper all refer to the case of $n \rightarrow \infty$.

I start the derivation by noting the asymptotic distribution of the fully iterated ML estimator of the unrestricted model $\check{\beta}_{1u}$, $\check{\beta}_{2u}$ and $\check{\alpha}_{u}$. Under the usual ML regularity conditions \citep[see e.g.][]{crowder1976ml}, it holds
\begin{align*}
	\sqrt{n} \begin{pmatrix}
		\check{\beta}_{1u} - \beta_{1} \\
		\check{\beta}_{2u} - \beta_{2} \\
		\check{\alpha}_{u} - \alpha
	\end{pmatrix} \stackrel{d}{\longrightarrow} 
	\normal (0, \Omega), \quad
	\Omega = \begin{pmatrix}
		\Omega_{1 1} & \Omega_{1 2} & \Omega_{1 \alpha}  \\
		\Omega_{2 1} & \Omega_{2 2} & \Omega_{2 \alpha} \\	
		\Omega_{\alpha 1} & \Omega_{\alpha 2} & \Omega_{\alpha \alpha}
\end{pmatrix}	 
:= 
\begin{pmatrix}
	\info_{1 1} & \info_{1 2} & \info_{1 \alpha} \\
	\info_{2 1} & \info_{2 2} & \info_{2 \alpha} \\
	\info_{\alpha 1} & \info_{\alpha 2} & \info_{\alpha \alpha} 
\end{pmatrix}^{-1}
= 
\info^{-1},
\end{align*}
where $\info := \info(\varphi_{0})$ is the information matrix evaluated at $\varphi_{0}$. For the remainder, I will restrict the analysis to the one-step ML estimators for the regression coefficients. The asymptotic distribution including the dispersion coefficient can be derived analogously.

Let $\bar{\varphi} = (\bar{\beta}_{1}^{\top}, \bar{\beta}_{2}^{\top}, \bar{\alpha})^{\top}$ collect the starting values and assume $\bar{\varphi} - \varphi = O_{p}(1/\sqrt{n})$, then the unrestricted one-step ML estimator has the same asymptotic distribution as the fully iterated ML estimator \citep[see e.g.\ Theorem 3.5 in][]{newey1994large}, i.e.\
\begin{equation}\label{eq:asympdistNB2}
	\sqrt{n} \begin{pmatrix}
		\tbeta_{1u} - \beta_{1} \\
		\tbeta_{2u} - \beta_{2} 
	\end{pmatrix} \overset{d}{\longrightarrow} 
	\normal(0, \Omega_{S}), \quad
	\Omega_{S}
	=
	\begin{pmatrix}
		\Omega_{11} & \Omega_{12} \\
		\Omega_{21} & \Omega_{22}
	\end{pmatrix}.
\end{equation}
Using $\info_{p \alpha} = 0 = \info_{\alpha p}^{\top}$, $p = 1,2$ \citep[p.~211]{lawless1987nb2}, the elements of $\Omega_{S}$ can be expressed as
\begin{equation}\label{eq:covmat}
	\begin{aligned}
		\Omega_{11} &= \info_{11}^{-1} + \info_{11}^{-1} \info_{12} (\info_{22} - \info_{21} \info_{11}^{-1} \info_{12})^{-1} \info_{21} \info_{11}^{-1}, \\
		\Omega_{12} &= - \info_{11}^{-1} \info_{12} (\info_{22} - \info_{21} \info_{11}^{-1} \info_{12})^{-1} = \Omega_{21}^{\top}, \\
		\Omega_{22} &= (\info_{22} - \info_{21} \info_{11}^{-1} \info_{12})^{-1}.
	\end{aligned}
	\end{equation}
If the DGP is included in the set of models considered for averaging, using the fully iterated ML estimator of the unrestricted model as starting values ensures, under the usual ML regularity conditions, that $\bar{\varphi} - \varphi = O_{p}(1/\sqrt{n})$ \citep[p.~4]{deluca2018glm}.

The following proposition utilizes the aforementioned ingredients and provides the asymptotic distribution of the one-step ML estimators for the $j$th model:

\newpage
\begin{proposition}[Asymptotic distribution of one-step ML estimators]\label{prop:onestepdist} In addition to \eqref{eq:misspec}, assume that the usual ML regularity conditions hold \citep[see e.g.][]{crowder1976ml}. If $\bar{\varphi} - \varphi = O_{p}(1/\sqrt{n})$, then as $n \rightarrow \infty$,
	\begin{equation*}
		\sqrt{n} \begin{pmatrix}
			\tbeta_{1j} - \beta_{1} \\
			\tbeta_{2j} - \beta_{2}
		\end{pmatrix} \overset{d}{\longrightarrow} \normal
		\left( \begin{pmatrix}
			\mQ \mP_j \Omega_{22}^{-1/2} \delta \\
			-\Omega_{22}^{1/2} \mP_j \Omega_{22}^{-1/2} \delta
		\end{pmatrix},
		\begin{pmatrix}
			\info_{11}^{-1} + \mQ \mW_j \mQ^{\top} & -\mQ \mW_j \Omega_{22}^{1/2} \\
			-\Omega_{22}^{1/2} \mW_j \mQ^{\top}	& \Omega_{22}^{1/2} \mW_j \Omega_{22}^{1/2}
		\end{pmatrix}
		\right),
	\end{equation*}
	where $\info_{pq}$ denotes the $pq$th submatrix of $\info(\varphi_0)$.	Further, $\Omega_{22} = (\info_{22} - \info_{21} \info_{11}^{-1} \info_{12})^{-1}$, $\mQ = \info_{11}^{-1} \info_{12} \Omega_{22}^{1/2}$, $\mP_j = \Omega_{22}^{1/2} R_j (R_{j}^{\top} \Omega_{22} R_j)^{-1} R_{j}^{\top} \Omega_{22}^{1/2}$ and $\mW_j = I_{k_2} - \mP_j$.
\end{proposition}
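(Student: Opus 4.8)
The plan is to follow the same route as the proof of \cref{prop:onestepML}: write the $j$th-model estimators as an affine function of the unrestricted one-step estimator, identify the probability limits of the data-dependent matrices, and then combine these with the weak convergence in \eqref{eq:asympdistNB2} via Slutsky's theorem. By \eqref{eq:b1rb1u}, $\tbeta_{1r} = \tbeta_{1u} + \bQ\tvartheta$, so from \cref{prop:onestepML} $\tbeta_{1j} = \tbeta_{1r} - \bQ\bW_j\tvartheta = \tbeta_{1u} + \bQ\bP_j\tvartheta$; using \eqref{eq:vartheta} and $\sqrt{n}\,\beta_2 = \delta$ from \eqref{eq:misspec} this gives
\begin{align*}
	\sqrt{n}\,(\tbeta_{1j} - \beta_1) &= \sqrt{n}\,(\tbeta_{1u} - \beta_1) + \bQ\,\bP_j\left(\tfrac{\bX_{2}^{\top}\bM_1\bX_2}{n}\right)^{1/2}\sqrt{n}\,\tbeta_{2u}, \\
	\sqrt{n}\,(\tbeta_{2j} - \beta_2) &= \left(\tfrac{\bX_{2}^{\top}\bM_1\bX_2}{n}\right)^{-1/2}\bW_j\left(\tfrac{\bX_{2}^{\top}\bM_1\bX_2}{n}\right)^{1/2}\sqrt{n}\,\tbeta_{2u} - \delta,
\end{align*}
so the stacked estimator is an affine image of $(\sqrt{n}(\tbeta_{1u}-\beta_1),\,\sqrt{n}\,\tbeta_{2u})$ with matrix coefficients built from $\bQ$, $\bP_j$, $\bW_j$ and $(\bX_{2}^{\top}\bM_1\bX_2/n)^{\pm1/2}$.

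First I would establish the probability limits of these coefficients, the key one being $\bX_{2}^{\top}\bM_1\bX_2/n \overset{p}{\to} \Omega_{22}^{-1} = \info_{22} - \info_{21}\info_{11}^{-1}\info_{12}$. One uses that $\bepsilon = O_p(1/n)$: its numerator $\bar{g}$ is $O(1)$, while its denominator $\bar{g}^2\,\bk^{\top}\ones + \bar{\varrho}\,\bkappa^{\top}\ones$ is of exact order $n$, since $\bkappa^{\top}\ones/n \overset{p}{\to} 0$ (the $\alpha$-score has mean zero) and $\bk^{\top}\ones/n$ converges to a nonzero constant tied to $\info_{\alpha\alpha}$. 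At the same time $X_p^{\top}\bq = O_p(\sqrt{n})$ because $\E[c_i(y_i-\mu_i)\mid x_i] = 0$, so each rank-one $\bq$-perturbation in $\bM_1$ (and in $\bar{A}$) is $o_p(1)$ after the appropriate $n^{-1}$ scaling. Hence $\bX_{2}^{\top}\bM_1\bX_2/n$ has the same probability limit as $\bX_{2}^{\top}\bX_2/n - (\bX_{2}^{\top}\bX_1/n)(\bX_{1}^{\top}\bX_1/n)^{-1}(\bX_{1}^{\top}\bX_2/n)$; since $\E[\psi_i\mid x_i] = v_i^2\sigma_i^2$ and $\bar{\varphi}\overset{p}{\to}\varphi_0$, the law of large numbers gives $\bX_p^{\top}\bX_q/n \overset{p}{\to} \info_{pq}$ and therefore the stated Schur complement. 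The continuous mapping theorem then delivers $(\bX_{2}^{\top}\bM_1\bX_2/n)^{\pm1/2}\overset{p}{\to}\Omega_{22}^{\mp1/2}$, $\bQ\overset{p}{\to}\mQ = \info_{11}^{-1}\info_{12}\Omega_{22}^{1/2}$, $\bP_j\overset{p}{\to}\mP_j$ and $\bW_j\overset{p}{\to}\mW_j$; in effect the dispersion-induced corrections vanish in the limit, so $\bM_1$ behaves asymptotically like its idempotent GLM counterpart in \citet{deluca2018glm}.

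With the limits in hand the rest is quick. The regularity conditions plus the use of the fully iterated unrestricted ML estimator as starting values guarantee $\bar{\varphi} - \varphi = O_p(1/\sqrt{n})$, so \eqref{eq:asympdistNB2} applies, and together with $\sqrt{n}\,\beta_2 = \delta$ it yields $(\sqrt{n}(\tbeta_{1u}-\beta_1),\,\sqrt{n}\,\tbeta_{2u}) \overset{d}{\to} (G_1,\,G_2+\delta)$ with $(G_1,G_2)\sim\normal(0,\Omega_S)$. Slutsky's theorem applied to the affine representation above then produces a Gaussian limit; its mean is $(\mQ\mP_j\Omega_{22}^{-1/2}\delta,\ -\Omega_{22}^{1/2}\mP_j\Omega_{22}^{-1/2}\delta)$ after using $\Omega_{22}^{1/2}\mW_j\Omega_{22}^{-1/2}\delta - \delta = -\Omega_{22}^{1/2}\mP_j\Omega_{22}^{-1/2}\delta$, and its covariance is $\Omega_S$ conjugated by the limiting coefficient matrix with blocks $I_{k_1}$, $\mQ\mP_j\Omega_{22}^{-1/2}$, $0$ and $\Omega_{22}^{1/2}\mW_j\Omega_{22}^{-1/2}$. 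The block products collapse to the expression in the proposition once one substitutes the identities $\Omega_{12} = -\mQ\Omega_{22}^{1/2}$ and $\Omega_{11} = \info_{11}^{-1} + \mQ\mQ^{\top}$ — both immediate from \eqref{eq:covmat} together with $\info_{11}^{-1}\info_{12} = \mQ\Omega_{22}^{-1/2}$ — and uses the symmetry and idempotency of $\mP_j$ and $\mW_j$.

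The hard part will be the probability-limit step: one must check simultaneously that the rank-one corrections carried by the extra dispersion parameter are asymptotically negligible \emph{and} that the surviving Schur-complement expression really equals $\Omega_{22}^{-1}$, which rests on $\info_{p\alpha} = 0$ and on the probability limits of the weight matrices $\bPsi$, $\bV$ and $\bC$. Everything downstream of that is the same linear algebra already carried out in the proof of \cref{prop:onestepML} and in \citet{deluca2018glm}, so I would reference those computations rather than repeat them.
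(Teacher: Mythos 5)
Your proposal is correct and follows essentially the same route as the paper: both hinge on establishing $\plim \bX_{2}^{\top}\bM_1\bX_2/n = \Omega_{22}^{-1}$ via the block-diagonality $\info_{p\alpha}=0$ of \citet{lawless1987nb2}, and then push the asymptotic normality of the unrestricted one-step estimator through the affine relations of \cref{prop:onestepML} by Slutsky. The only (cosmetic) differences are that the paper obtains the probability limit by writing $\bX_{2}^{\top}\bM_1\bX_2/n$ exactly as a Schur complement of the bordered Hessian and passing to the limit blockwise, where you argue negligibility of the rank-one $\bq$-perturbations directly, and that the paper routes the final step through $\tvartheta$ and $\tbeta_{1r}$ and their asymptotic independence rather than conjugating $\Omega_S$ by the full coefficient matrix.
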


\begin{proof}[\textbf{Proof of \cref{prop:onestepdist}}]
	Let me first consider the asymptotic distribution of $\tvartheta$. Equation \eqref{eq:vartheta} implies
	\begin{equation*}\label{eq:vartheta2}
		\sqrt{n}(\tvartheta - \vartheta) = \left( \frac{\bX_{2}^{\top} \bM_{1} \bX_2}{n} \right)^{1/2} \sqrt{n} (\tbeta_{2u} - \beta_{2}) + \left[\left( \frac{\bX_{2}^{\top} \bM_{1} \bX_2}{n} \right)^{1/2} -\Omega_{22}^{-1/2} \right] \delta,
	\end{equation*}
	where $\vartheta :=\Omega_{22}^{-1/2} \beta_{2}$. Next, I analyze the probability limit of $\bX_{2}^{\top} \bM_{1} \bX_2 / n$. As $n \rightarrow \infty$
	\begin{align*}
		\plim \frac{\bX_{2}^{\top} \bM_1 \bX_2}{n}
		={}& \plim \Biggl( \frac{\bH_{22}}{n} - \frac{1}{n} \frac{\bH_{2\alpha}\bH_{\alpha 2}}{\bH_{\alpha\alpha}} \\
		&-\Biggl\{ 
			\left( \frac{\bH_{21}}{n} - \frac{1}{n} \frac{\bH_{2\alpha}\bH_{\alpha 1}}{\bH_{\alpha\alpha}} \right)
			\left(\frac{\bH_{11}}{n} - \frac{1}{n} \frac{\bH_{1\alpha}\bH_{\alpha 1}}{\bH_{\alpha\alpha}}  \right)^{-1} \\
			&\cdot \left( \frac{\bH_{12}}{n} - \frac{1}{n} \frac{\bH_{1\alpha}\bH_{\alpha 2}}{\bH_{\alpha\alpha}}  \right)
		\Biggr\} \Biggr) \\
		={}&
		\info_{22} - \frac{\info_{2 \alpha} \info_{\alpha 2}}{\info_{\alpha \alpha}} \\
		&- \left(\info_{21} - \frac{\info_{2 \alpha} \info_{\alpha 1}}{\info_{\alpha \alpha}}\right) \left( \info_{11} - \frac{\info_{1 \alpha} \info_{\alpha 1}}{\info_{\alpha \alpha}} \right)^{-1} \left( \info_{12} - \frac{\info_{1 \alpha} \info_{\alpha 2}}{\info_{\alpha \alpha}} \right). 
	\end{align*}
	\citet[p.~211]{lawless1987nb2} derived in equation (2.7b), that $\info_{p \alpha} = 0$, $p = 1,2$. Combined with \eqref{eq:covmat} this yields
	\begin{align}\label{eq:plimxmx}
		\plim \frac{\bX_{2}^{\top} \bM_1 \bX_2}{n} = \Omega_{22}^{-1}.
	\end{align}
	Thus, using \eqref{eq:asympdistNB2} I get
	\begin{equation}\label{eq:asympvartheta}
		\sqrt{n} (\tvartheta - \vartheta) \overset{d}{\longrightarrow} \normal(0, I_{k_2}).
	\end{equation}

	Now, I show the asymptotic distribution for all restricted estimators. First, I need the asymptotic distribution of the fully restricted estimator. From \eqref{eq:b1rb1u} I deduce
	\begin{equation*}\label{eq:b1r}
		\sqrt{n} (\tbeta_{1r} - \beta_{1}) = \sqrt{n} ( \tbeta_{1u} - \beta_{1} ) + \bQ \sqrt{n} ( \tvartheta - \vartheta ) + \bQ \Omega_{22}^{-1/2} \delta.
	\end{equation*}
	Since
	\begin{equation*}\label{eq:cQ}
		\plim \bQ = \info_{11}^{-1} \info_{12} \Omega_{22}^{1/2} =: \mQ,
	\end{equation*}
	which is the same as in Proposition 2 of \citet[p.~4]{deluca2018glm} for WALS GLM because $\alpha$ is asympotically independent of $\beta_1$ and $\beta_2$ in the NB2 model, it follows that
	\begin{equation}\label{eq:asympb1rvn}
		\sqrt{n} (\tbeta_{1r} - \beta_{1})
		\overset{d}{\longrightarrow}
		\normal(
			\info_{11}^{-1} \info_{12} \delta,
			\info_{11}^{-1}
		).
	\end{equation}
	This is equivalent to the asymptotic distribution of the fully restricted one-step ML estimator of the WALS GLM model in equation (A.4) of \citet[p.~14]{deluca2018glm}. Furthermore, $\tbeta_{1r}$ and $\tvartheta$ are also asymptotically independent as their joint asymptotic distribution is a multivariate normal with covariance $\Omega_{12} \Omega_{22}^{-1/2} + \info_{11}^{-1} \info_{12} \Omega_{22}^{1/2} \overset{\eqref{eq:covmat}}{=} 0$.

	Finally, I have all ingredients to derive the asymptotic distribution of the general one-step ML estimator. \cref{prop:onestepML} implies
	\begin{equation*}
		\sqrt{n} (\tbeta_{1j} - \beta_{1}) = \bQ \bP_j \Omega_{22}^{-1/2} \delta + ( \sqrt{n} (\tbeta_{1r} - \beta_{1}) - \bQ \Omega_{22}^{-1/2} \delta  ) - \bQ \bW_j \sqrt{n} (\tvartheta - \vartheta),
	\end{equation*}
	and
	\begin{equation*}
		\sqrt{n} (\tbeta_{2j} - \beta_{2}) = \left[ \left( \frac{\bX_{2}^{\top} \bM_1 \bX_2}{n} \right)^{-1/2} \bW_j \Omega_{22}^{-1/2} - I_{k_2} \right] \delta + \left( \frac{\bX_{2}^{\top} \bM_1 \bX_2}{n} \right)^{-1/2} \bW_j \sqrt{n} (\tvartheta - \vartheta).
	\end{equation*}
	Together with \eqref{eq:asympvartheta} and \eqref{eq:asympb1rvn}, the fact that $\tbeta_{1r}$ and $\tvartheta$ are asymptotically independent, and the probability limits
	\begin{align*}
		\plim \bP_j &= \Omega_{22}^{1/2} R_j (R_{j}^{\top} \Omega_{22} R_j)^{-1} R_{j}^{\top} \Omega_{22}^{1/2} =: \mP_j, \\
		\plim \bW_j &= I_{k_2} - \mP_j =: \mW_j,
	\end{align*}
	they imply the joint asymptotic distribution of $\tbeta_{1j}$ and $\tbeta_{2j}$ from the proposition.
\end{proof}

\subsection{Asymptotic distribution of one-step ML estimators in transformed models}\label{sec:asympNB2Z}

In order to establish the asymptotic distribution of the one-step ML estimators in the transformed models, it is necessary to determine the probability limits of the matrices involved in the transformation from $X$ to $Z$. Firstly, $\bDelta_1$ converges in probability to a constant matrix $\Delta_1$, i.e.\
\begin{equation*}\label{eq:plimdelta1}
	\plim \bDelta_{1} =: \Delta_1,
\end{equation*}
because $\plim \bX_{1}^{\top} \bX_{1} /n = \info_{11}$ is constant and $\bDelta_{1} = \diag\left(\bX_{1}^{\top} \bX_{1} / n \right)^{-1/2}$.
By the same line of reasoning,
\begin{equation*}\label{eq:plimdelta2}
	\plim \bDelta_{2} =: \Delta_{2},
\end{equation*}
which is constant because $\bDelta_{2} = \diag\left(\bX_{2}^{\top} \bM_{1} \bX_2 / n \right)^{-1/2}$ and $\plim = \bX_{2}^{\top} \bM_{1} \bX_2  /n  \overset{\eqref{eq:plimxmx}}{=} \Omega_{22}^{-1}$ is constant. Thus, the probability limit of $\bXi$ follows as
\begin{equation*}\label{eq:plimxi}
	\plim \bXi = \plim \frac{\bDelta_2 \bX_{2}^{\top} \bM_{1} \bX_2 \bDelta_2}{n} \overset{\eqref{eq:plimxmx}}{=} \Delta_2 \Omega_{22}^{-1} \Delta_2 =: \Xi.
\end{equation*}

Similar to \citet[p.~5]{deluca2018glm}, let $\gamma_{10} = \Delta^{-1} \beta_{1}$ and $\gamma_{2n} = \Xi^{1/2} \Delta_{2}^{-1} \beta_{2}$, then \cref{prop:onestepdist} implies
\begin{equation}
	\sqrt{n}\begin{pmatrix}
		\tgamma_{1j} - \gamma_{10}\\
		\tgamma_{2j} - \gamma_{2n}
	\end{pmatrix} \overset{d}{\longrightarrow} \normal
	\left(
		\begin{pmatrix}
		\mD P_{j} d \\
		-P_{j} d
	\end{pmatrix},
	\begin{pmatrix}
		\infoz_{11}^{-1} + \mD W_{j} \mD^{\top} & -\mD W_{j} \\
		-W_{j} \mD^{\top}							& W_{j} 
	\end{pmatrix}
	\right),
\end{equation}
where $d = \Xi^{1/2} \Delta_{2}^{-1} \delta$, $\mD = \plim \bar{D} = \infoz_{11}^{-1} \infoz_{12}$, $\infoz_{11} = \plim \bZ_{1}^{\top}\bZ_{1} / n = \Delta_{1} \info_{11} \Delta_{1}$ and $\infoz_{12} = \plim \bZ_{1}^{\top} \bZ_{2} / n = \Delta_{1} \info_{12} \Delta_{2} \Xi^{-1/2}$.
Therefore, I can approximate the distribution of $\sqrt{n}\tgamma_{2u}$ in large samples with
\begin{equation*}
	\sqrt{n} \tgamma_{2u} \approx \normal( \sqrt{n} \gamma_{2n}, I_{k_2}) = \normal(d, I_{k_2}).
\end{equation*}
%

\subsection{Additional results for the simulation experiment}
\setcounter{figure}{0} 
\setcounter{table}{0}

\subsubsection{Results for alternative scoring rules}

\begin{figure}[h]
	\centering
	\includegraphics[width=0.9\textwidth]{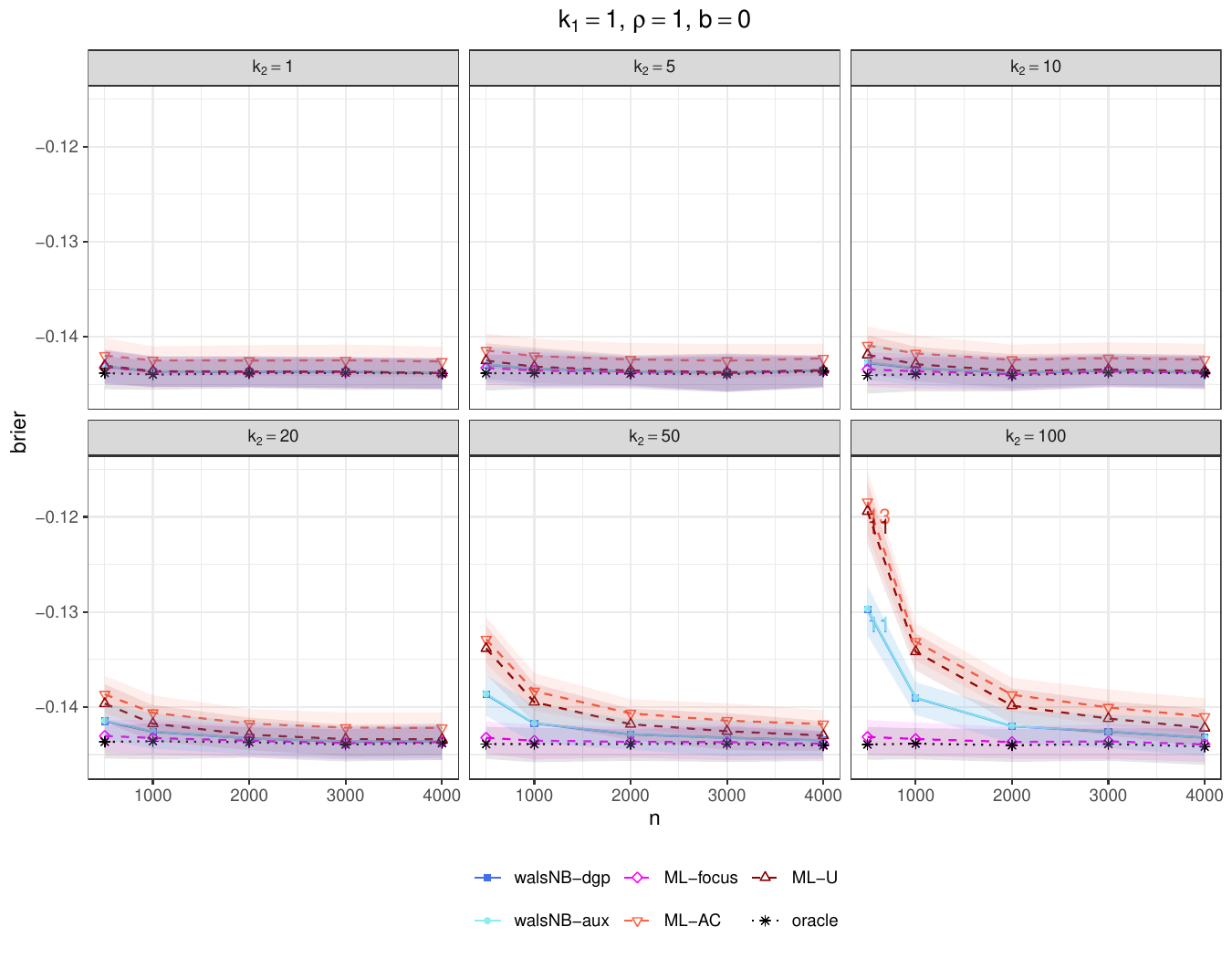}
	\caption{Mean validation Brier score and quartiles varying $n$ and $k_2$}\label{fig:simBrierk2}
	\justifying \footnotesize \noindent The remaining parameters are fixed at $k_1 = 1, \rho = 1$ and $b = 0$. Each point represents the mean over all successful runs of the experiment, i.e.\ over $R = 300$ when it never fails to converge. The shaded areas show the interquartile range. The number below a point indicates how often the method failed to converge in this particular setting.
\end{figure}

\begin{figure}[h]
	\centering
	\includegraphics[width=0.9\textwidth]{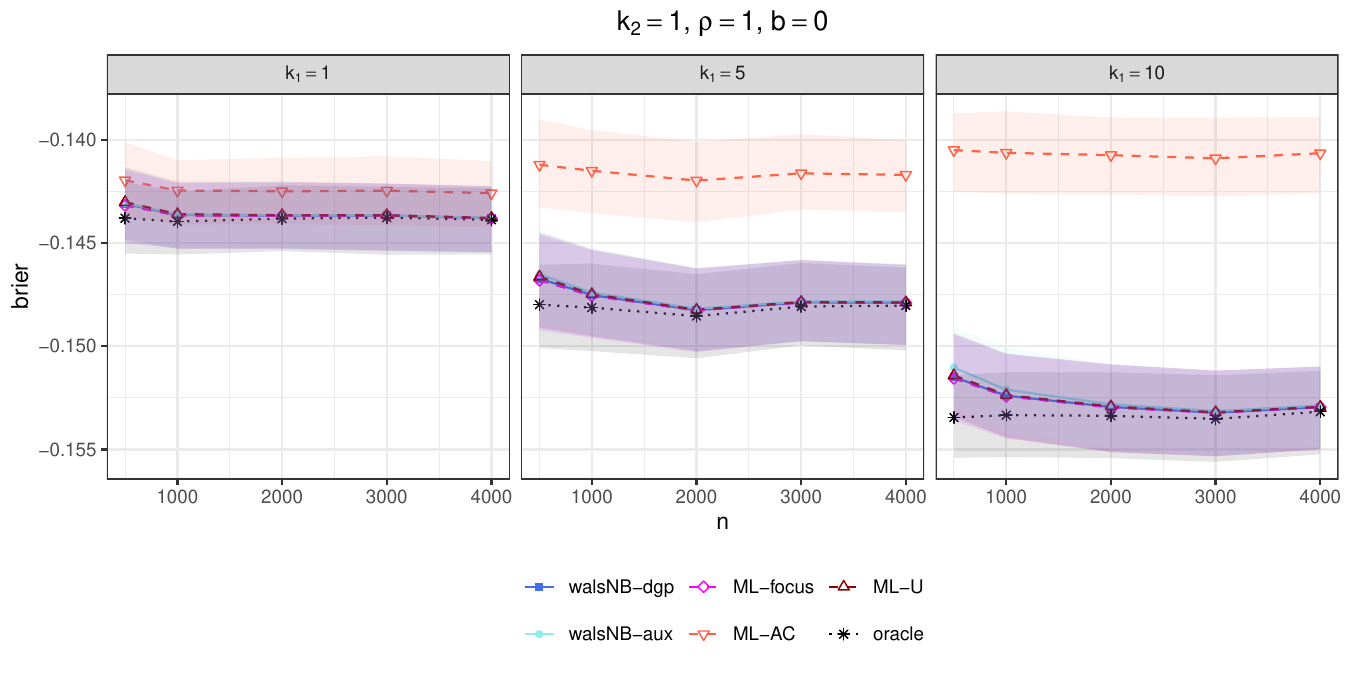}
	\caption{Mean validation Brier score and quartiles varying $n$ and $k_1$}\label{fig:simBrierk1}
	\justifying \footnotesize \noindent The remaining parameters are fixed at $k_2 = 1, \rho = 1$ and $b = 0$. Each point represents the mean over all successful runs of the experiment, i.e.\ over $R = 300$ when it never fails to converge. The shaded areas show the interquartile range.
\end{figure}

\begin{figure}[h]
	\centering
	\includegraphics[width=0.9\textwidth]{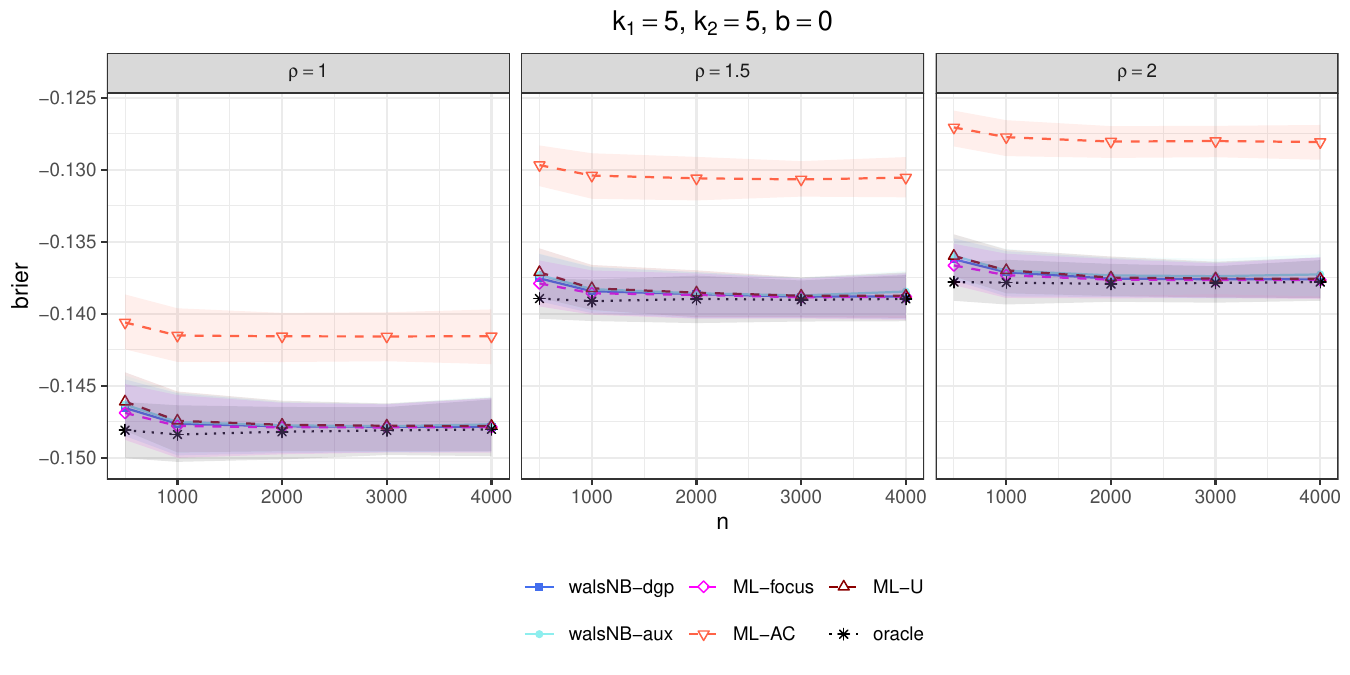}
	\caption{Mean validation Brier score and quartiles varying $n$ and $\rho$}\label{fig:simBrierTheta}
	\justifying \footnotesize \noindent The remaining parameters are fixed at $b = 0$ and $k_1 = k_2 = 5$. Each point represents the mean over all successful runs of the experiment, i.e.\ over $R = 300$ when it never fails to converge. The shaded areas show the interquartile range.
\end{figure}

\begin{figure}[h]
	\centering
	\includegraphics[width=0.9\textwidth]{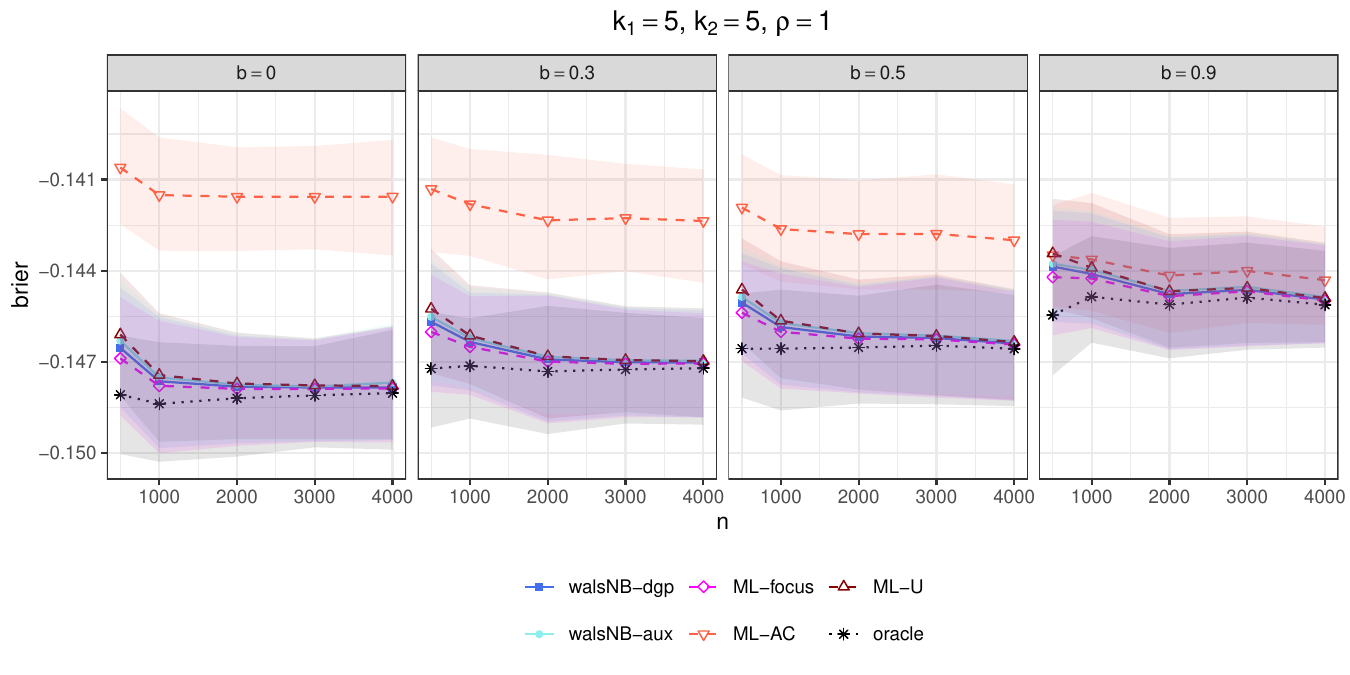}
	\caption{Mean validation Brier score and quartiles varying $n$ and $b$}\label{fig:simBrierCorr}
	\justifying \footnotesize \noindent The remaining parameters are fixed at $\rho = 1$ and $k_1 = k_2 = 5$. Each point represents the mean over all successful runs of the experiment, i.e.\ over $R = 300$ when it never fails to converge. The shaded areas show the interquartile range.
\end{figure}

\begin{figure}[h]
	\centering
	\includegraphics[width=0.9\textwidth]{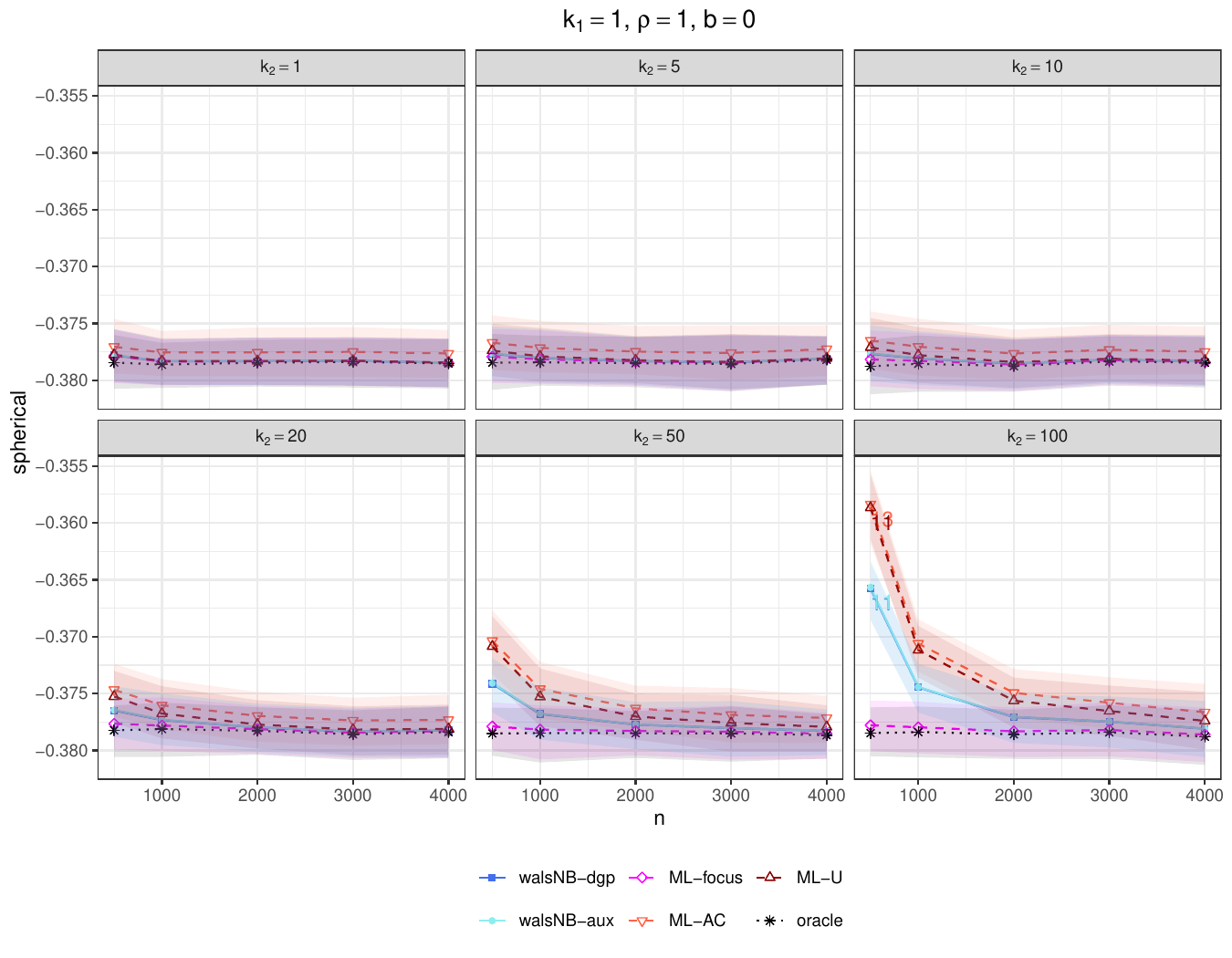}
	\caption{Mean validation spherical score and quartiles varying $n$ and $k_2$}\label{fig:simSphk2}
	\justifying \footnotesize \noindent The remaining parameters are fixed at $k_1 = 1, \rho = 1$ and $b = 0$. Each point represents the mean over all successful runs of the experiment, i.e.\ over $R = 300$ when it never fails to converge. The shaded areas show the interquartile range. The number below a point indicates how often the method failed to converge in this particular setting.
\end{figure}

\begin{figure}[h]
	\centering
	\includegraphics[width=0.9\textwidth]{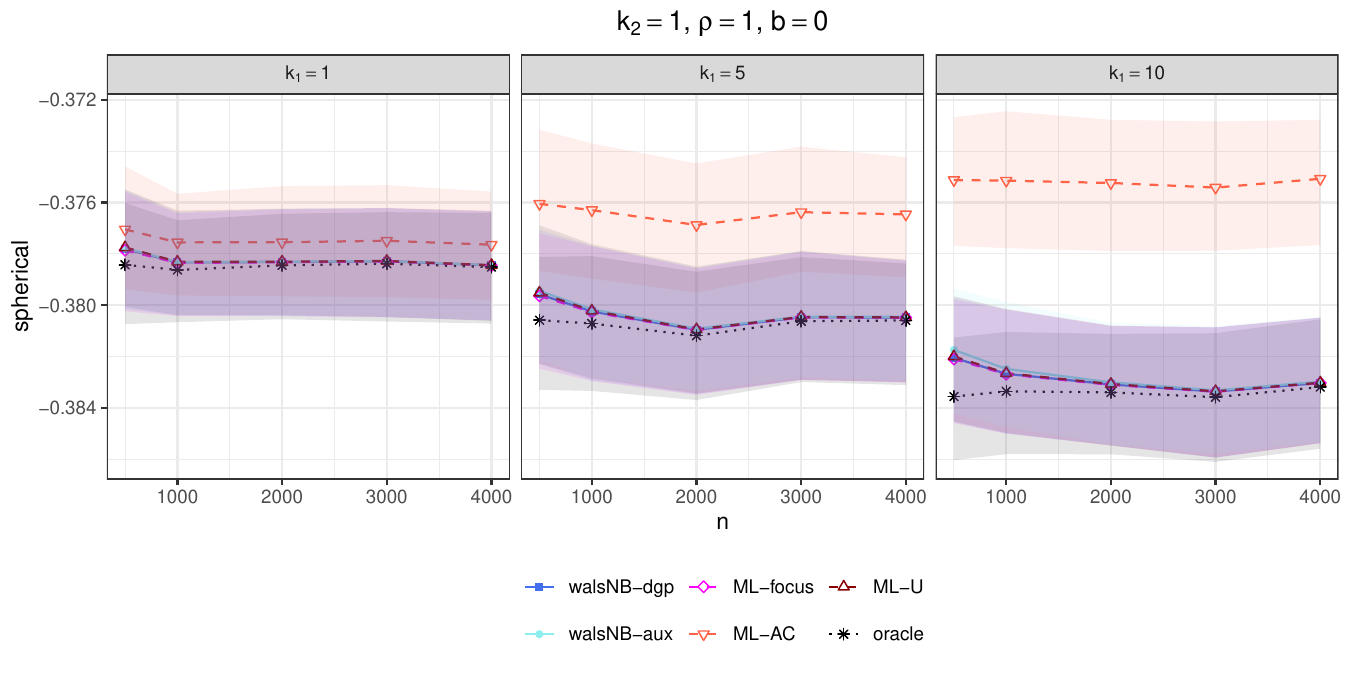}
	\caption{Mean validation spherical score and quartiles varying $n$ and $k_1$}\label{fig:simSphk1}
	\justifying \footnotesize \noindent The remaining parameters are fixed at $k_2 = 1, \rho = 1$ and $b = 0$. Each point represents the mean over all successful runs of the experiment, i.e.\ over $R = 300$ when it never fails to converge. The shaded areas show the interquartile range.
\end{figure}

\begin{figure}[h]
	\centering
	\includegraphics[width=0.9\textwidth]{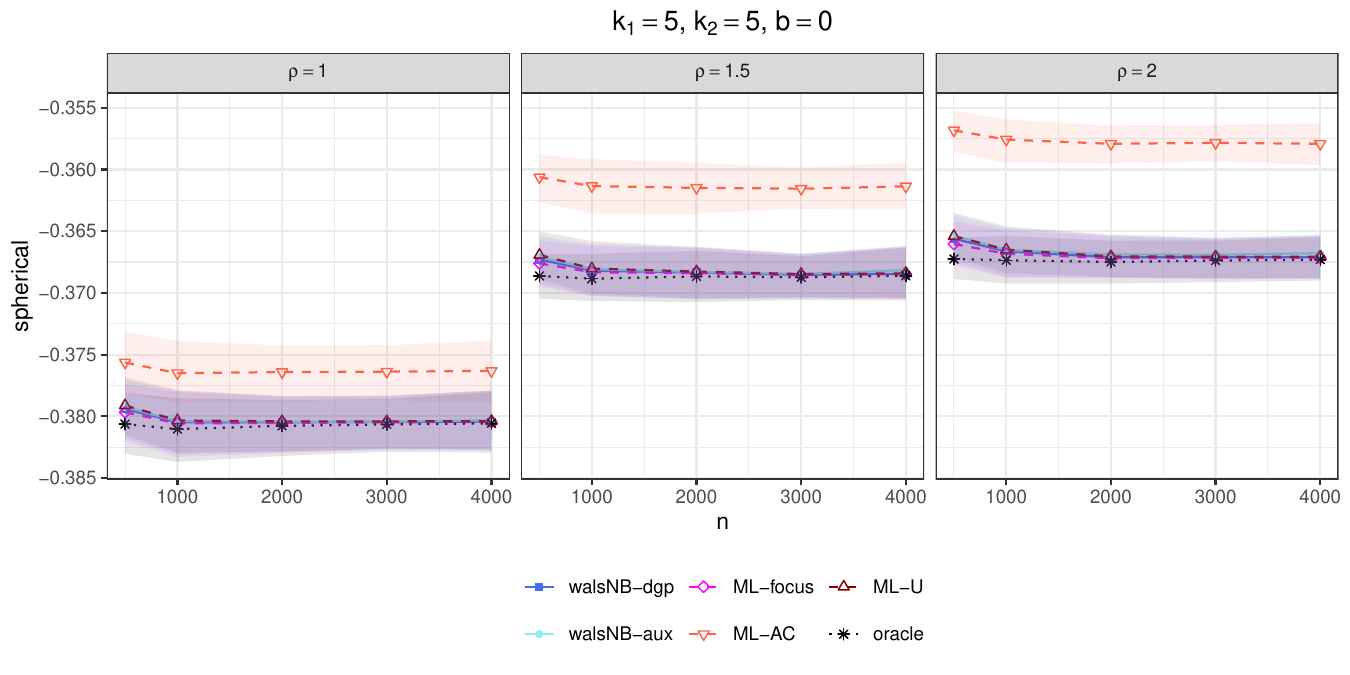}
	\caption{Mean validation spherical score and quartiles varying $n$ and $\rho$}\label{fig:simSphTheta}
	\justifying \footnotesize \noindent The remaining parameters are fixed at $b = 0$ and $k_1 = k_2 = 5$. Each point represents the mean over all successful runs of the experiment, i.e.\ over $R = 300$ when it never fails to converge. The shaded areas show the interquartile range.
\end{figure}

\begin{figure}[h]
	\centering
	\includegraphics[width=0.9\textwidth]{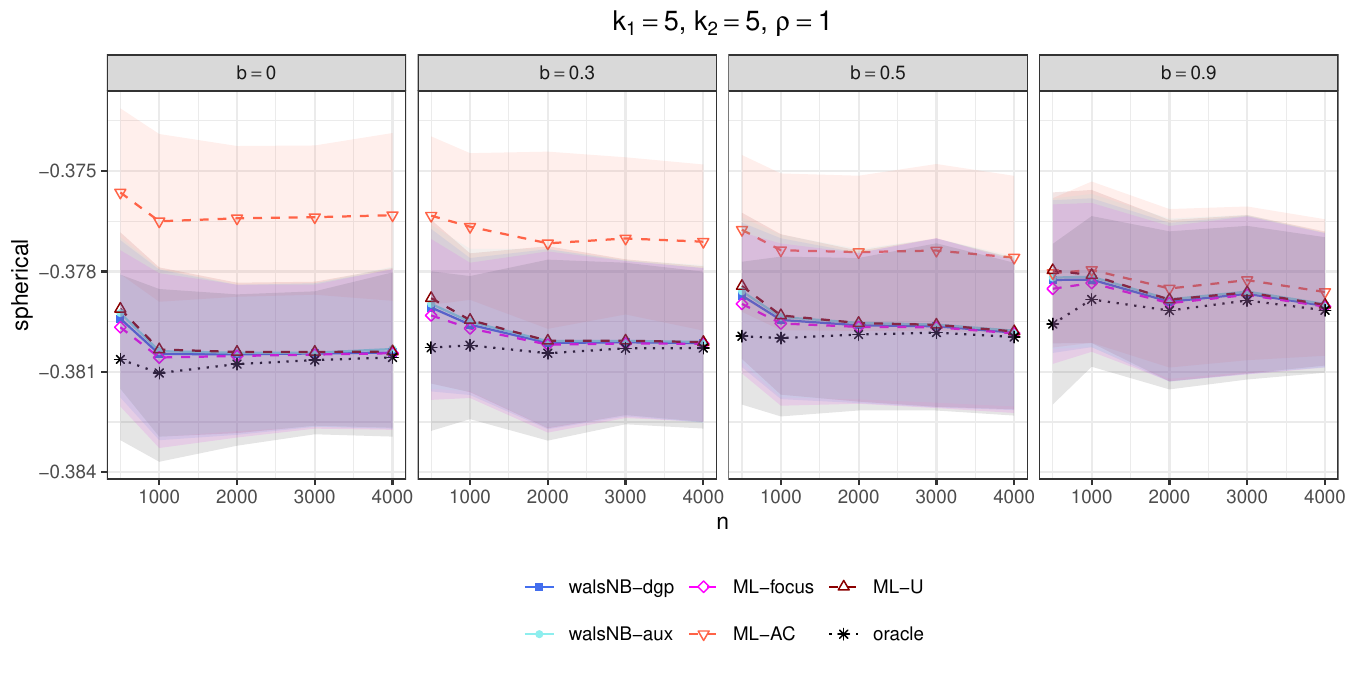}
	\caption{Mean validation spherical score and quartiles varying $n$ and $b$}\label{fig:simSphCorr}
	\justifying \footnotesize \noindent The remaining parameters are fixed at $\rho = 1$ and $k_1 = k_2 = 5$. Each point represents the mean over all successful runs of the experiment, i.e.\ over $R = 300$ when it never fails to converge. The shaded areas show the interquartile range.
\end{figure}

\clearpage

\subsection{Additional results for the empirical illustration}
\setcounter{figure}{0} 
\setcounter{table}{0}

\subsubsection{Results for alternative scoring rules}

\begin{figure}[h]
	\centering
	\includegraphics[width=0.85\textwidth]{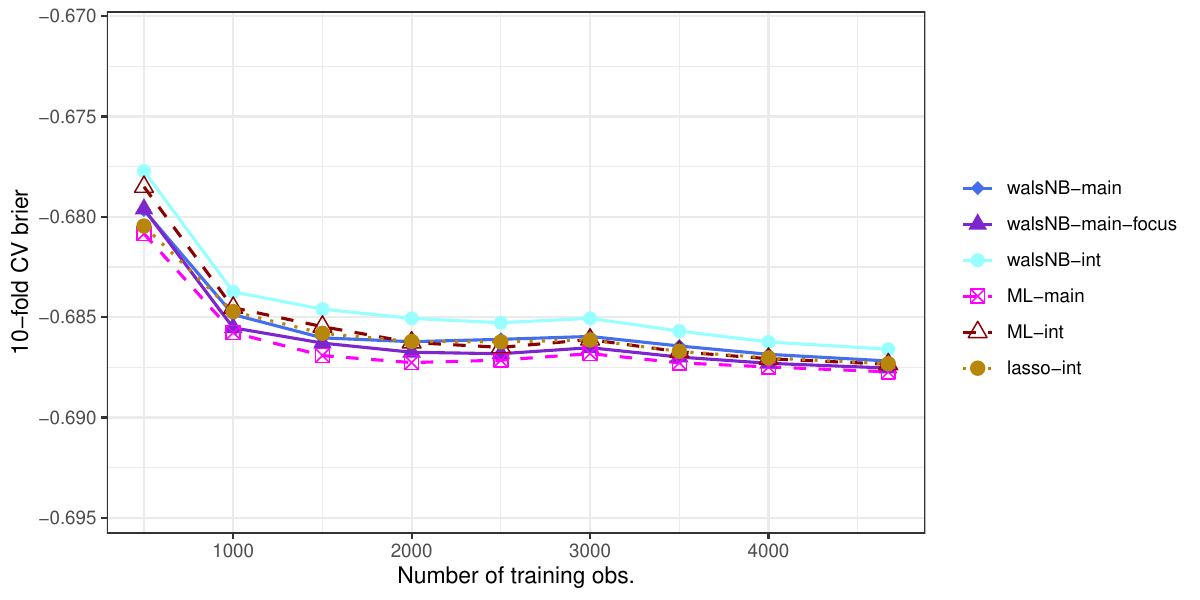}
	\caption{10-fold CV Brier score varying $t_l$, DoctorVisits}\label{fig:DVlearnbrier}
\end{figure}

\begin{figure}[h]
	\centering
	\includegraphics[width=0.85\textwidth]{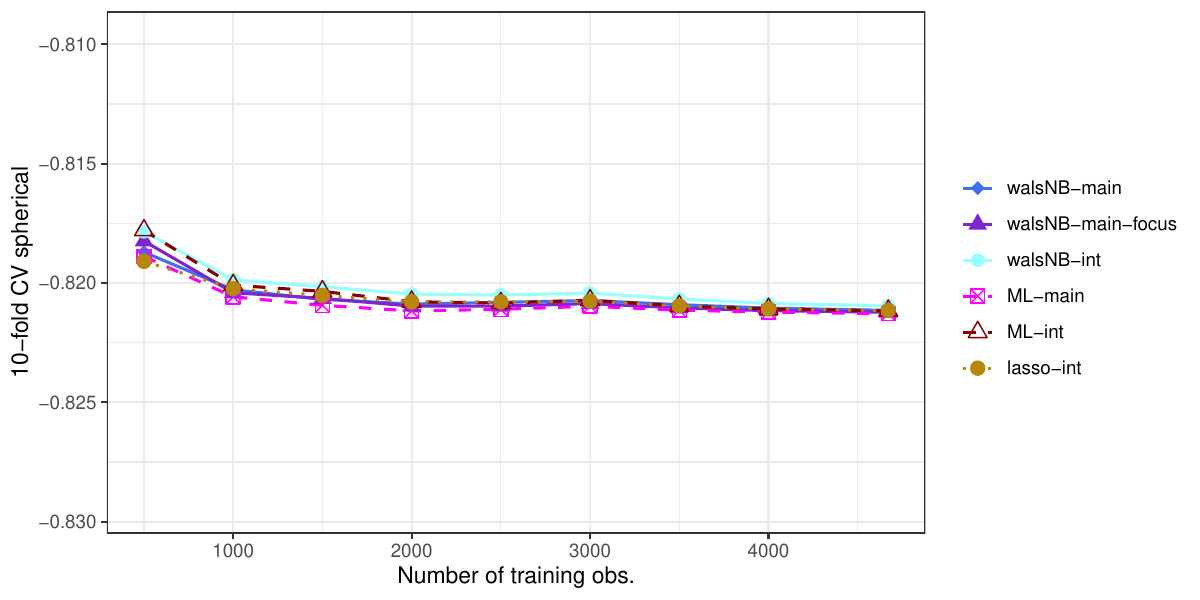}
	\caption{10-fold CV spherical score varying $t_l$, DoctorVisits}\label{fig:DVlearnsp}
\end{figure}

\clearpage


\begin{landscape}

\begin{table}
	\caption{10-fold CV Brier score varying $t_l$, DoctorVisits}\label{tab:DVlearnbrier}
	\begin{center}
		\begin{threeparttable}
		\begin{footnotesize}
			
\begin{tabular}{@{\extracolsep{5pt}} lccccccccc} 
\\[-1.8ex]\hline 
\hline \\[-1.8ex] 
 Training obs. $t_{l}$ & 500 & 1000 & 1500 & 2000 & 2500 & 3000 & 3500 & 4000 & 4671 \\ 
\hline \\[-1.8ex] 
walsNB-main & $$-$0.680$ & $$-$0.685$ & $$-$0.686$ & $$-$0.686$ & $$-$0.686$ & $$-$0.686$ & $$-$0.686$ & $$-$0.687$ & $$-$0.687$ \\ 
walsNB-main-focus & $$-$0.680$ & $$-$0.686$ & $$-$0.686$ & $$-$0.687$ & $$-$0.687$ & $$-$0.687$ & $$-$0.687$ & $$-$0.687$ & $$-$0.688$ \\ 
walsNB-int & $$-$0.678$ & $$-$0.684$ & $$-$0.685$ & $$-$0.685$ & $$-$0.685$ & $$-$0.685$ & $$-$0.686$ & $$-$0.686$ & $$-$0.687$ \\ 
ML-main & $$-$0.681$ & $$-$0.686$ & $$-$0.687$ & $$-$0.687$ & $$-$0.687$ & $$-$0.687$ & $$-$0.687$ & $$-$0.687$ & $$-$0.688$ \\ 
ML-int & $$-$0.679$ & $$-$0.685$ & $$-$0.685$ & $$-$0.686$ & $$-$0.687$ & $$-$0.686$ & $$-$0.687$ & $$-$0.687$ & $$-$0.687$ \\ 
lasso-int & $$-$0.680$ & $$-$0.685$ & $$-$0.686$ & $$-$0.686$ & $$-$0.686$ & $$-$0.686$ & $$-$0.687$ & $$-$0.687$ & $$-$0.687$ \\ 
\hline \\[-1.8ex] 
\end{tabular} 

			\end{footnotesize}
		\begin{tablenotes}
			\footnotesize
			\item[--] All figures rounded to three decimal places.
		\end{tablenotes}
		\end{threeparttable}
	\end{center}
	\end{table}

\begin{table}
	\caption{10-fold CV spherical score varying $t_l$, DoctorVisits}\label{tab:DVlearnsp}
	\begin{center}
		\begin{threeparttable}
		\begin{footnotesize}
			
\begin{tabular}{@{\extracolsep{5pt}} lccccccccc} 
\\[-1.8ex]\hline 
\hline \\[-1.8ex] 
 Training obs. $t_{l}$ & 500 & 1000 & 1500 & 2000 & 2500 & 3000 & 3500 & 4000 & 4671 \\ 
\hline \\[-1.8ex] 
walsNB-main & $$-$0.819$ & $$-$0.820$ & $$-$0.821$ & $$-$0.821$ & $$-$0.821$ & $$-$0.821$ & $$-$0.821$ & $$-$0.821$ & $$-$0.821$ \\ 
walsNB-main-focus & $$-$0.818$ & $$-$0.820$ & $$-$0.821$ & $$-$0.821$ & $$-$0.821$ & $$-$0.821$ & $$-$0.821$ & $$-$0.821$ & $$-$0.821$ \\ 
walsNB-int & $$-$0.818$ & $$-$0.820$ & $$-$0.820$ & $$-$0.820$ & $$-$0.821$ & $$-$0.820$ & $$-$0.821$ & $$-$0.821$ & $$-$0.821$ \\ 
ML-main & $$-$0.819$ & $$-$0.821$ & $$-$0.821$ & $$-$0.821$ & $$-$0.821$ & $$-$0.821$ & $$-$0.821$ & $$-$0.821$ & $$-$0.821$ \\ 
ML-int & $$-$0.818$ & $$-$0.820$ & $$-$0.820$ & $$-$0.821$ & $$-$0.821$ & $$-$0.821$ & $$-$0.821$ & $$-$0.821$ & $$-$0.821$ \\ 
lasso-int & $$-$0.819$ & $$-$0.820$ & $$-$0.821$ & $$-$0.821$ & $$-$0.821$ & $$-$0.821$ & $$-$0.821$ & $$-$0.821$ & $$-$0.821$ \\ 
\hline \\[-1.8ex] 
\end{tabular} 

			\end{footnotesize}
		\begin{tablenotes}
			\footnotesize
			\item[--] All figures rounded to three decimal places.
		\end{tablenotes}
		\end{threeparttable}
	\end{center}
	\end{table}
\end{landscape}

\clearpage

\pagestyle{literature}
\phantomsection
\addcontentsline{toc}{section}{References}

\bibliography{bib}

\begin{thebibliography}{50}
\expandafter\ifx\csname natexlab\endcsname\relax\def\natexlab#1{#1}\fi
\providecommand{\url}[1]{\texttt{#1}}
\providecommand{\href}[2]{#2}
\providecommand{\path}[1]{#1}
\providecommand{\DOIprefix}{doi:}
\providecommand{\ArXivprefix}{arXiv:}
\providecommand{\URLprefix}{URL: }
\providecommand{\Pubmedprefix}{pmid:}
\providecommand{\doi}[1]{\href{http://dx.doi.org/#1}{\path{#1}}}
\providecommand{\Pubmed}[1]{\href{pmid:#1}{\path{#1}}}
\providecommand{\bibinfo}[2]{#2}
\ifx\xfnm\relax \def\xfnm[#1]{\unskip,\space#1}\fi
\bibitem[{Abadie and Kasy(2019)}]{abadie2019regularized}
\bibinfo{author}{Abadie, A.}, \bibinfo{author}{Kasy, M.}, \bibinfo{year}{2019}.
\newblock \bibinfo{title}{Choosing among regularized estimators in empirical
  economics: The risk of machine learning}.
\newblock \bibinfo{journal}{Review of Economics and Statistics}
  \bibinfo{volume}{101}, \bibinfo{pages}{743--762}.
\newblock \DOIprefix\doi{10.1162/rest_a_00812}.
\bibitem[{Ando and Li(2014)}]{ando2014ma}
\bibinfo{author}{Ando, T.}, \bibinfo{author}{Li, K.C.}, \bibinfo{year}{2014}.
\newblock \bibinfo{title}{A model-averaging approach for high-dimensional
  regression}.
\newblock \bibinfo{journal}{Journal of the American Statistical Association}
  \bibinfo{volume}{109}, \bibinfo{pages}{254--265}.
\newblock \DOIprefix\doi{10.1080/01621459.2013.838168}.
\bibitem[{Arnold(2021)}]{ggthemes2021}
\bibinfo{author}{Arnold, J.B.}, \bibinfo{year}{2021}.
\newblock \bibinfo{title}{\pkg{ggthemes}: Extra Themes, Scales and Geoms for
  `ggplot2'}.
\newblock \URLprefix \url{https://CRAN.R-project.org/package=ggthemes}.
  \bibinfo{note}{{\proglang{R}} package version 4.2.4}.
\bibitem[{Boshnakov(2023)}]{rdpack2023}
\bibinfo{author}{Boshnakov, G.N.}, \bibinfo{year}{2023}.
\newblock \bibinfo{title}{{\pkg{Rdpack}}: Update and manipulate {Rd}
  documentation objects}.
\newblock \DOIprefix\doi{10.5281/zenodo.3925612}. \bibinfo{note}{{\proglang{R}}
  package version 2.5}.
\bibitem[{Cameron and Trivedi(1986)}]{cameron1986doctor}
\bibinfo{author}{Cameron, A.C.}, \bibinfo{author}{Trivedi, P.K.},
  \bibinfo{year}{1986}.
\newblock \bibinfo{title}{Econometric models based on count data. {C}omparisons
  and applications of some estimators and tests}.
\newblock \bibinfo{journal}{Journal of Applied Econometrics}
  \bibinfo{volume}{1}, \bibinfo{pages}{29--53}.
\newblock \DOIprefix\doi{10.1002/jae.3950010104}.
\bibitem[{Cameron et~al.(1988)Cameron, Trivedi, Milne and
  Piggott}]{cameron1988health}
\bibinfo{author}{Cameron, A.C.}, \bibinfo{author}{Trivedi, P.K.},
  \bibinfo{author}{Milne, F.}, \bibinfo{author}{Piggott, J.},
  \bibinfo{year}{1988}.
\newblock \bibinfo{title}{A microeconometric model of the demand for health
  care and health insurance in {A}ustralia}.
\newblock \bibinfo{journal}{The Review of Economic Studies}
  \bibinfo{volume}{55}, \bibinfo{pages}{85--106}.
\newblock \DOIprefix\doi{10.2307/2297531}.
\bibitem[{Crowder(1976)}]{crowder1976ml}
\bibinfo{author}{Crowder, M.J.}, \bibinfo{year}{1976}.
\newblock \bibinfo{title}{Maximum likelihood estimation for dependent
  observations}.
\newblock \bibinfo{journal}{Journal of the Royal Statistical Society: Series B
  (Methodological)} \bibinfo{volume}{38}, \bibinfo{pages}{45--53}.
\newblock \DOIprefix\doi{10.1111/j.2517-6161.1976.tb01565.x}.
\bibitem[{Czado et~al.(2009)Czado, Gneiting and Held}]{czado2009predictive}
\bibinfo{author}{Czado, C.}, \bibinfo{author}{Gneiting, T.},
  \bibinfo{author}{Held, L.}, \bibinfo{year}{2009}.
\newblock \bibinfo{title}{Predictive model assessment for count data}.
\newblock \bibinfo{journal}{Biometrics} \bibinfo{volume}{65},
  \bibinfo{pages}{1254--1261}.
\newblock \DOIprefix\doi{10.1111/j.1541-0420.2009.01191.x}.
\bibitem[{Dahl et~al.(2019)Dahl, Scott, Roosen, Magnusson and
  Swinton}]{xtable2019}
\bibinfo{author}{Dahl, D.B.}, \bibinfo{author}{Scott, D.},
  \bibinfo{author}{Roosen, C.}, \bibinfo{author}{Magnusson, A.},
  \bibinfo{author}{Swinton, J.}, \bibinfo{year}{2019}.
\newblock \bibinfo{title}{\pkg{xtable}: Export Tables to LaTeX or HTML}.
\newblock \URLprefix \url{https://CRAN.R-project.org/package=xtable}.
  \bibinfo{note}{{\proglang{R}} package version 1.8-4}.
\bibitem[{De~Luca and Magnus(2011)}]{deluca2011stata}
\bibinfo{author}{De~Luca, G.}, \bibinfo{author}{Magnus, J.R.},
  \bibinfo{year}{2011}.
\newblock \bibinfo{title}{Bayesian model averaging and weighted-average least
  squares: Equivariance, stability, and numerical issues}.
\newblock \bibinfo{journal}{The \proglang{Stata} Journal} \bibinfo{volume}{11},
  \bibinfo{pages}{518--544}.
\newblock \DOIprefix\doi{10.1177/1536867X1201100402}.
\bibitem[{De~Luca et~al.(2018)De~Luca, Magnus and Peracchi}]{deluca2018glm}
\bibinfo{author}{De~Luca, G.}, \bibinfo{author}{Magnus, J.R.},
  \bibinfo{author}{Peracchi, F.}, \bibinfo{year}{2018}.
\newblock \bibinfo{title}{Weighted-average least squares estimation of
  generalized linear models}.
\newblock \bibinfo{journal}{Journal of Econometrics} \bibinfo{volume}{204},
  \bibinfo{pages}{1--17}.
\newblock \DOIprefix\doi{10.1016/j.jeconom.2017.12.007}.
\bibitem[{De~Luca et~al.(2022)De~Luca, Magnus and
  Peracchi}]{deluca2022sampling}
\bibinfo{author}{De~Luca, G.}, \bibinfo{author}{Magnus, J.R.},
  \bibinfo{author}{Peracchi, F.}, \bibinfo{year}{2022}.
\newblock \bibinfo{title}{Sampling properties of the {B}ayesian posterior mean
  with an application to {WALS} estimation}.
\newblock \bibinfo{journal}{Journal of Econometrics} \bibinfo{volume}{230},
  \bibinfo{pages}{299--317}.
\newblock \DOIprefix\doi{10.1016/j.jeconom.2021.04.008}.
\bibitem[{De~Luca et~al.(2023)De~Luca, Magnus and
  Peracchi}]{deluca2023interval}
\bibinfo{author}{De~Luca, G.}, \bibinfo{author}{Magnus, J.R.},
  \bibinfo{author}{Peracchi, F.}, \bibinfo{year}{2023}.
\newblock \bibinfo{title}{Weighted-average least squares ({WALS}): Confidence
  and prediction intervals}.
\newblock \bibinfo{journal}{Computational Economics} \bibinfo{volume}{61}.
\newblock \DOIprefix\doi{10.1007/s10614-022-10255-5}.
\bibitem[{Deb and Trivedi(2002)}]{deb2002rand}
\bibinfo{author}{Deb, P.}, \bibinfo{author}{Trivedi, P.K.},
  \bibinfo{year}{2002}.
\newblock \bibinfo{title}{The structure of demand for health care: latent class
  versus two-part models}.
\newblock \bibinfo{journal}{Journal of Health Economics} \bibinfo{volume}{21},
  \bibinfo{pages}{601--625}.
\newblock \DOIprefix\doi{10.1016/S0167-6296(02)00008-5}.
\bibitem[{Faber et~al.(2020)Faber, Christensen and von
  Lilienfeld}]{faber2020qml}
\bibinfo{author}{Faber, F.A.}, \bibinfo{author}{Christensen, A.S.},
  \bibinfo{author}{von Lilienfeld, O.A.}, \bibinfo{year}{2020}.
\newblock \bibinfo{title}{Quantum machine learning with response operators in
  chemical compound space}, in: \bibinfo{editor}{Sch{\"u}tt, K.T.},
  \bibinfo{editor}{Chmiela, S.}, \bibinfo{editor}{von Lilienfeld, O.A.},
  \bibinfo{editor}{Tkatchenko, A.}, \bibinfo{editor}{Tsuda, K.},
  \bibinfo{editor}{M{\"u}ller, K.R.} (Eds.), \bibinfo{booktitle}{Machine
  Learning Meets Quantum Physics}. \bibinfo{publisher}{Springer-Verlag},
  \bibinfo{address}{Cham}. chapter~\bibinfo{chapter}{8}, pp.
  \bibinfo{pages}{155--169}.
\newblock \DOIprefix\doi{10.1007/978-3-030-40245-7_8}.
\bibitem[{Fahrmeir et~al.(2013)Fahrmeir, Kneib, Lang and
  Marx}]{fahrmeir2013regression}
\bibinfo{author}{Fahrmeir, L.}, \bibinfo{author}{Kneib, T.},
  \bibinfo{author}{Lang, S.}, \bibinfo{author}{Marx, B.}, \bibinfo{year}{2013}.
\newblock \bibinfo{title}{Regression: Models, Methods and Applications}.
\newblock \bibinfo{publisher}{Springer-Verlag}, \bibinfo{address}{Berlin,
  Heidelberg}.
\newblock \DOIprefix\doi{10.1007/978-3-642-34333-9}.
\bibitem[{Gneiting and Raftery(2007)}]{gneiting2007scores}
\bibinfo{author}{Gneiting, T.}, \bibinfo{author}{Raftery, A.E.},
  \bibinfo{year}{2007}.
\newblock \bibinfo{title}{Strictly proper scoring rules, prediction, and
  estimation}.
\newblock \bibinfo{journal}{Journal of the American Statistical Association}
  \bibinfo{volume}{102}, \bibinfo{pages}{359--378}.
\newblock \DOIprefix\doi{10.1198/016214506000001437}.
\bibitem[{Greene(2008)}]{greene2008nbp}
\bibinfo{author}{Greene, W.}, \bibinfo{year}{2008}.
\newblock \bibinfo{title}{Functional forms for the negative binomial model for
  count data}.
\newblock \bibinfo{journal}{Economics Letters} \bibinfo{volume}{99},
  \bibinfo{pages}{585--590}.
\newblock \DOIprefix\doi{10.1016/j.econlet.2007.10.015}.
\bibitem[{Heumann and Grenke(2010)}]{heumann2010logit}
\bibinfo{author}{Heumann, C.}, \bibinfo{author}{Grenke, M.},
  \bibinfo{year}{2010}.
\newblock \bibinfo{title}{An efficient model averaging procedure for logistic
  regression models using a {B}ayesian estimator with {L}aplace prior}, in:
  \bibinfo{editor}{Kneib, T.}, \bibinfo{editor}{Tutz, G.} (Eds.),
  \bibinfo{booktitle}{Statistical Modelling and Regression Structures:
  Festschrift in Honour of Ludwig Fahrmeir}.
  \bibinfo{publisher}{Physica-Verlag}, \bibinfo{address}{Heidelberg}, pp.
  \bibinfo{pages}{79--90}.
\newblock \DOIprefix\doi{10.1007/978-3-7908-2413-1_5}.
\bibitem[{Hjort and Claeskens(2003)}]{hjort2003ma}
\bibinfo{author}{Hjort, N.L.}, \bibinfo{author}{Claeskens, G.},
  \bibinfo{year}{2003}.
\newblock \bibinfo{title}{Frequentist model average estimators}.
\newblock \bibinfo{journal}{Journal of the American Statistical Association}
  \bibinfo{volume}{98}, \bibinfo{pages}{879--899}.
\newblock \DOIprefix\doi{10.1198/016214503000000828}.
\bibitem[{Hlavac(2022)}]{stargazer2022}
\bibinfo{author}{Hlavac, M.}, \bibinfo{year}{2022}.
\newblock \bibinfo{title}{\pkg{stargazer}: Well-Formatted Regression and
  Summary Statistics Tables}.
\newblock \bibinfo{organization}{Social Policy Institute}.
  \bibinfo{address}{Bratislava, Slovakia}.
\newblock \URLprefix \url{https://CRAN.R-project.org/package=stargazer}.
  \bibinfo{note}{{\proglang{R}} package version 5.2.3}.
\bibitem[{Hoeting et~al.(1999)Hoeting, Madigan, Raftery and
  Volinsky}]{hoeting1999bma}
\bibinfo{author}{Hoeting, J.A.}, \bibinfo{author}{Madigan, D.},
  \bibinfo{author}{Raftery, A.E.}, \bibinfo{author}{Volinsky, C.T.},
  \bibinfo{year}{1999}.
\newblock \bibinfo{title}{{B}ayesian model averaging: A tutorial (with
  discussion)}.
\newblock \bibinfo{journal}{Statistical Science} \bibinfo{volume}{14},
  \bibinfo{pages}{382--417}.
\newblock \DOIprefix\doi{10.1214/ss/1009212519}. \bibinfo{note}{corrected
  version available at
  \url{http://www.stat.washington.edu/www/research/online/hoeting1999.pdf}}.
\bibitem[{Hofner et~al.(2014)Hofner, Mayr, Robinzonov and Schmid}]{mboost2014}
\bibinfo{author}{Hofner, B.}, \bibinfo{author}{Mayr, A.},
  \bibinfo{author}{Robinzonov, N.}, \bibinfo{author}{Schmid, M.},
  \bibinfo{year}{2014}.
\newblock \bibinfo{title}{Model-based boosting in \proglang{R}: A hands-on
  tutorial using the \proglang{R} package \pkg{mboost}}.
\newblock \bibinfo{journal}{Computational Statistics} \bibinfo{volume}{29},
  \bibinfo{pages}{3--35}.
\newblock \DOIprefix\doi{10.1007/s00180-012-0382-5}.
\bibitem[{Hothorn et~al.(2005)Hothorn, Leisch, Zeileis and
  Hornik}]{hothorn2005design}
\bibinfo{author}{Hothorn, T.}, \bibinfo{author}{Leisch, F.},
  \bibinfo{author}{Zeileis, A.}, \bibinfo{author}{Hornik, K.},
  \bibinfo{year}{2005}.
\newblock \bibinfo{title}{The design and analysis of benchmark experiments}.
\newblock \bibinfo{journal}{Journal of Computational and Graphical Statistics}
  \bibinfo{volume}{14}, \bibinfo{pages}{675--699}.
\newblock \DOIprefix\doi{10.1198/106186005X59630}.
\bibitem[{Huynh(2023)}]{wals}
\bibinfo{author}{Huynh, K.}, \bibinfo{year}{2023}.
\newblock \bibinfo{title}{\pkg{WALS}: Weighted-average least squares model
  averaging in \proglang{R}}.
\newblock \bibinfo{note}{University of Basel. Mimeo}.
\bibitem[{Kleiber and Zeileis(2008)}]{kleiber2008aer}
\bibinfo{author}{Kleiber, C.}, \bibinfo{author}{Zeileis, A.},
  \bibinfo{year}{2008}.
\newblock \bibinfo{title}{Applied Econometrics with {\proglang{R}}}.
\newblock \bibinfo{publisher}{Springer-Verlag}, \bibinfo{address}{New York}.
\newblock \URLprefix \url{https://CRAN.R-project.org/package=AER},
  \DOIprefix\doi{10.1007/978-0-387-77318-6}.
\bibitem[{Kolassa(2016)}]{kolassa2016count}
\bibinfo{author}{Kolassa, S.}, \bibinfo{year}{2016}.
\newblock \bibinfo{title}{Evaluating predictive count data distributions in
  retail sales forecasting}.
\newblock \bibinfo{journal}{International Journal of Forecasting}
  \bibinfo{volume}{32}, \bibinfo{pages}{788--803}.
\newblock \DOIprefix\doi{10.1016/j.ijforecast.2015.12.004}.
\bibitem[{Lawless(1987)}]{lawless1987nb2}
\bibinfo{author}{Lawless, J.F.}, \bibinfo{year}{1987}.
\newblock \bibinfo{title}{Negative binomial and mixed {P}oisson regression}.
\newblock \bibinfo{journal}{The Canadian Journal of Statistics / La Revue
  Canadienne de Statistique} \bibinfo{volume}{15}, \bibinfo{pages}{209--225}.
\bibitem[{Madigan and Raftery(1994)}]{madigan1994occam}
\bibinfo{author}{Madigan, D.}, \bibinfo{author}{Raftery, A.E.},
  \bibinfo{year}{1994}.
\newblock \bibinfo{title}{Model selection and accounting for model uncertainty
  in graphical models using {O}ccam's window}.
\newblock \bibinfo{journal}{Journal of the American Statistical Association}
  \bibinfo{volume}{89}, \bibinfo{pages}{1535--1546}.
\newblock \DOIprefix\doi{10.1080/01621459.1994.10476894}.
\bibitem[{Magnus and De~Luca(2016)}]{magnus2016wals}
\bibinfo{author}{Magnus, J.R.}, \bibinfo{author}{De~Luca, G.},
  \bibinfo{year}{2016}.
\newblock \bibinfo{title}{Weighted-average least squares ({WALS}): A survey}.
\newblock \bibinfo{journal}{Journal of Economic Surveys} \bibinfo{volume}{30},
  \bibinfo{pages}{117--148}.
\newblock \DOIprefix\doi{10.1111/joes.12094}.
\bibitem[{Magnus et~al.(2010)Magnus, Powell and Prüfer}]{magnus2010growth}
\bibinfo{author}{Magnus, J.R.}, \bibinfo{author}{Powell, O.},
  \bibinfo{author}{Prüfer, P.}, \bibinfo{year}{2010}.
\newblock \bibinfo{title}{A comparison of two model averaging techniques with
  an application to growth empirics}.
\newblock \bibinfo{journal}{Journal of Econometrics} \bibinfo{volume}{154},
  \bibinfo{pages}{139--153}.
\newblock \DOIprefix\doi{10.1016/j.jeconom.2009.07.004}.
\bibitem[{Meek et~al.(2002)Meek, Thiesson and Heckerman}]{meek2002learncurve}
\bibinfo{author}{Meek, C.}, \bibinfo{author}{Thiesson, B.},
  \bibinfo{author}{Heckerman, D.}, \bibinfo{year}{2002}.
\newblock \bibinfo{title}{The learning-curve sampling method applied to
  model-based clustering}.
\newblock \bibinfo{journal}{Journal of Machine Learning Research}
  \bibinfo{volume}{2}, \bibinfo{pages}{397--418}.
\bibitem[{Meschiari(2022)}]{latex2exp2022}
\bibinfo{author}{Meschiari, S.}, \bibinfo{year}{2022}.
\newblock \bibinfo{title}{{\pkg{latex2exp}}: Use LaTeX Expressions in Plots}.
\newblock \URLprefix \url{https://CRAN.R-project.org/package=latex2exp}.
  \bibinfo{note}{{\proglang{R}} package version 0.9.6}.
\bibitem[{Min and Zellner(1993)}]{min1993bma}
\bibinfo{author}{Min, C.}, \bibinfo{author}{Zellner, A.}, \bibinfo{year}{1993}.
\newblock \bibinfo{title}{{B}ayesian and non-{B}ayesian methods for combining
  models and forecasts with applications to forecasting international growth
  rates}.
\newblock \bibinfo{journal}{Journal of Econometrics} \bibinfo{volume}{56},
  \bibinfo{pages}{89--118}.
\newblock \DOIprefix\doi{10.1016/0304-4076(93)90102-B}.
\bibitem[{Mullahy(1997)}]{mullahy1997hetero}
\bibinfo{author}{Mullahy, J.}, \bibinfo{year}{1997}.
\newblock \bibinfo{title}{Heterogeneity, excess zeros, and the structure of
  count data models}.
\newblock \bibinfo{journal}{Journal of Applied Econometrics}
  \bibinfo{volume}{12}, \bibinfo{pages}{337--350}.
\newblock
  \DOIprefix\doi{10.1002/(SICI)1099-1255(199705)12:3<337::AID-JAE438>3.0.CO;2-G}.
\bibitem[{Newey and McFadden(1994)}]{newey1994large}
\bibinfo{author}{Newey, W.K.}, \bibinfo{author}{McFadden, D.L.},
  \bibinfo{year}{1994}.
\newblock \bibinfo{title}{Large sample estimation and hypothesis testing}, in:
  \bibinfo{editor}{Engle, R.F.}, \bibinfo{editor}{McFadden, D.L.} (Eds.),
  \bibinfo{booktitle}{Handbook of Econometrics}.
  \bibinfo{publisher}{North-Holland}, \bibinfo{address}{Amsterdam}.
  volume~\bibinfo{volume}{4}. chapter~\bibinfo{chapter}{36}, pp.
  \bibinfo{pages}{2111--2245}.
\bibitem[{Plate and Heiberger(2016)}]{abind2016}
\bibinfo{author}{Plate, T.}, \bibinfo{author}{Heiberger, R.},
  \bibinfo{year}{2016}.
\newblock \bibinfo{title}{{\pkg{abind}}: Combine Multidimensional Arrays}.
\newblock \URLprefix \url{https://CRAN.R-project.org/package=abind}.
  \bibinfo{note}{{\proglang{R}} package version 1.4-5}.
\bibitem[{Raftery et~al.(2020)Raftery, Hoeting, Volinsky, Painter and
  Yeung}]{raftery2020bma}
\bibinfo{author}{Raftery, A.E.}, \bibinfo{author}{Hoeting, J.A.},
  \bibinfo{author}{Volinsky, C.T.}, \bibinfo{author}{Painter, I.},
  \bibinfo{author}{Yeung, K.Y.}, \bibinfo{year}{2020}.
\newblock \bibinfo{title}{{\pkg{BMA}}: {B}ayesian Model Averaging}.
\newblock \URLprefix \url{https://CRAN.R-project.org/package=BMA}.
  \bibinfo{note}{{\proglang{R}} package version 3.18.12}.
\bibitem[{{\proglang{R} Core Team}(2023)}]{R2023}
\bibinfo{author}{{\proglang{R} Core Team}}, \bibinfo{year}{2023}.
\newblock \bibinfo{title}{{\proglang{R}}: A Language and Environment for
  Statistical Computing}.
\newblock \bibinfo{organization}{{\proglang{R}} Foundation for Statistical
  Computing}. \bibinfo{address}{Vienna, Austria}.
\newblock \URLprefix \url{https://www.R-project.org/}.
\bibitem[{Rupp et~al.(2012)Rupp, Tkatchenko, M\"uller and von
  Lilienfeld}]{rupp2012mlatom}
\bibinfo{author}{Rupp, M.}, \bibinfo{author}{Tkatchenko, A.},
  \bibinfo{author}{M\"uller, K.R.}, \bibinfo{author}{von Lilienfeld, O.A.},
  \bibinfo{year}{2012}.
\newblock \bibinfo{title}{Fast and accurate modeling of molecular atomization
  energies with machine learning}.
\newblock \bibinfo{journal}{Physical Review Letters} \bibinfo{volume}{108},
  \bibinfo{pages}{058301}.
\newblock \DOIprefix\doi{10.1103/PhysRevLett.108.058301}.
\bibitem[{Steel(2020)}]{steel2020ma}
\bibinfo{author}{Steel, M.F.J.}, \bibinfo{year}{2020}.
\newblock \bibinfo{title}{Model averaging and its use in economics}.
\newblock \bibinfo{journal}{Journal of Economic Literature}
  \bibinfo{volume}{58}, \bibinfo{pages}{644--719}.
\newblock \DOIprefix\doi{10.1257/jel.20191385}.
\bibitem[{Venables and Ripley(2002)}]{mass2002}
\bibinfo{author}{Venables, W.N.}, \bibinfo{author}{Ripley, B.D.},
  \bibinfo{year}{2002}.
\newblock \bibinfo{title}{Modern Applied Statistics with \proglang{S}}.
\newblock Statistics and Computing. \bibinfo{edition}{4th} ed.,
  \bibinfo{publisher}{Springer-Verlag}, \bibinfo{address}{New York}.
\newblock \URLprefix \url{https://www.stats.ox.ac.uk/pub/MASS4/},
  \DOIprefix\doi{10.1007/978-0-387-21706-2}.
\bibitem[{Wang(2023)}]{mpath2023}
\bibinfo{author}{Wang, Z.}, \bibinfo{year}{2023}.
\newblock \bibinfo{title}{{\pkg{mpath}}: Regularized Linear Models}.
\newblock \URLprefix \url{https://CRAN.R-project.org/package=mpath}.
  \bibinfo{note}{{\proglang{R}} package version 0.4-2.23}.
\bibitem[{Wang et~al.(2016)Wang, Ma, Zappitelli, Parikh, Wang and
  Devarajan}]{wang2016pencount}
\bibinfo{author}{Wang, Z.}, \bibinfo{author}{Ma, S.},
  \bibinfo{author}{Zappitelli, M.}, \bibinfo{author}{Parikh, C.},
  \bibinfo{author}{Wang, C.Y.}, \bibinfo{author}{Devarajan, P.},
  \bibinfo{year}{2016}.
\newblock \bibinfo{title}{Penalized count data regression with application to
  hospital stay after pediatric cardiac surgery}.
\newblock \bibinfo{journal}{Statistical Methods in Medical Research}
  \bibinfo{volume}{25}, \bibinfo{pages}{2685--2703}.
\newblock \DOIprefix\doi{10.1177/0962280214530608}.
\bibitem[{Wickham(2016)}]{ggplot2016}
\bibinfo{author}{Wickham, H.}, \bibinfo{year}{2016}.
\newblock \bibinfo{title}{{\pkg{ggplot2}}: Elegant Graphics for Data Analysis}.
\newblock \bibinfo{edition}{2nd} ed., \bibinfo{publisher}{Springer-Verlag},
  \bibinfo{address}{New York}.
\newblock \URLprefix \url{https://ggplot2.tidyverse.org},
  \DOIprefix\doi{10.1007/978-3-319-24277-4}.
\bibitem[{Winkler(1996)}]{winkler1996scores}
\bibinfo{author}{Winkler, R.L.}, \bibinfo{year}{1996}.
\newblock \bibinfo{title}{Scoring rules and the evaluation of probabilities}.
\newblock \bibinfo{journal}{Test} \bibinfo{volume}{5}, \bibinfo{pages}{1--60}.
\newblock \DOIprefix\doi{10.1007/BF02562681}.
\bibitem[{Zeileis and Croissant(2010)}]{formula2010}
\bibinfo{author}{Zeileis, A.}, \bibinfo{author}{Croissant, Y.},
  \bibinfo{year}{2010}.
\newblock \bibinfo{title}{Extended model formulas in \proglang{R}: Multiple
  parts and multiple responses}.
\newblock \bibinfo{journal}{Journal of Statistical Software}
  \bibinfo{volume}{34}, \bibinfo{pages}{1--13}.
\newblock \DOIprefix\doi{10.18637/jss.v034.i01}.
\bibitem[{Zhang and Liu(2019)}]{zhang2019inference}
\bibinfo{author}{Zhang, X.}, \bibinfo{author}{Liu, C.A.}, \bibinfo{year}{2019}.
\newblock \bibinfo{title}{Inference after model averaging in linear regression
  models}.
\newblock \bibinfo{journal}{Econometric Theory} \bibinfo{volume}{35},
  \bibinfo{pages}{816–841}.
\newblock \DOIprefix\doi{10.1017/S0266466618000269}.
\bibitem[{Zhang et~al.(2016)Zhang, Yu, Zou and Liang}]{zhang2016optmaglm}
\bibinfo{author}{Zhang, X.}, \bibinfo{author}{Yu, D.}, \bibinfo{author}{Zou,
  G.}, \bibinfo{author}{Liang, H.}, \bibinfo{year}{2016}.
\newblock \bibinfo{title}{Optimal model averaging estimation for generalized
  linear models and generalized linear mixed-effects models}.
\newblock \bibinfo{journal}{Journal of the American Statistical Association}
  \bibinfo{volume}{111}, \bibinfo{pages}{1775--1790}.
\newblock \DOIprefix\doi{10.1080/01621459.2015.1115762}.
\bibitem[{Zhang(2017)}]{zhang2017matrix}
\bibinfo{author}{Zhang, X.D.}, \bibinfo{year}{2017}.
\newblock \bibinfo{title}{Matrix Analysis and Applications}.
\newblock \bibinfo{publisher}{Cambridge University Press},
  \bibinfo{address}{Cambridge}.
\newblock \DOIprefix\doi{10.1017/9781108277587}.

\end{thebibliography}
\bibliographystyle{elsarticle-harv}

\end{document}